\documentclass[letterpaper,USenglish,cleveref,autoref,thm-restate,numberwithinsect]{lipics-v2021}
\pdfoutput=1 %uncomment to ensure pdflatex processing (mandatatory e.g. to submit to arXiv)
\title{Adaptive Sampling for Minimum-Norm $k$-Clustering} %TODO Please add

\author{Haripriya Pulyassary}{Cornell University, Ithaca, NY
  14850}{hp297@cornell.edu}{https://orcid.org/0000-0002-4100-0678}%
  {Part of this work was done while the author was a Master's student at the University of Waterloo.}
\author{Chaitanya Swamy}{Dept. of Combinatorics and Optimization, Univ. Waterloo,
  waterloo, ON N2L 3G1}{cswamy@uwaterloo.ca}{https://orcid.org/0000-0003-1108-7941}%
  {}

\authorrunning{H. Pulyassary and C. Swamy} 

\Copyright{Harpriya Pulyassary and Chaitanya Swamy} 

\ccsdesc[500]{Theory of computation~Approximation algorithms analysis}
\ccsdesc[500]{Mathematics of computing~Discrete optimization}

\keywords{Approximation algorithms, Clustering, Adaptive sampling, Randomized algorithms,
  Minimum-norm $k$-clustering} 

\category{} %optional, e.g. invited paper

\relatedversion{A preliminary version is to appear in the Proceedings of the
  European Symposium on Algorithms (ESA) 2026.} %optional, e.g. full version hosted on arXiv, HAL, or other respository/website
\funding{Supported in part by C. Swamy's NSERC Discovery grant.}

\nolinenumbers %uncomment to disable line numbering

\hideLIPIcs
\EventEditors{Philip Bille, Seth Pettie, and Sabine Storandt}
\EventNoEds{3}
\EventLongTitle{34th Annual European Symposium on Algorithms (ESA 2026)}
\EventShortTitle{ESA 2026}
\EventAcronym{ESA}
\EventYear{2026}
\EventDate{August 31--September 4, 2026}
\EventLocation{L'Aquila, Italy}
\EventLogo{}
\SeriesVolume{388}
\ArticleNo{147}
\usepackage{algorithm,algorithmic}
\usepackage{amsmath,amsthm, amssymb}
\usepackage{comment}
\usepackage{graphicx} % Required for inserting images
\usepackage{placeins}
\usepackage{xcolor}
\usepackage{xspace}
\newenvironment{proofof}[1]{\begin{proof}[Proof of {#1}]}{\end{proof}}

\DeclareMathOperator{\poly}{poly}

\DeclareMathOperator{\prev}{prev}
\DeclareMathOperator{\core}{core}
\DeclareMathOperator{\supcore}{supercore}

\newcommand{\vt}{\vec{t}}
\newcommand{\bad}{\ensuremath{\mathsf{bad}}}

\newcommand{\clustroot}{C^*}
\newcommand{\clust}{\clustroot_q}
\newcommand{\clustq}{\clustroot_q}
\newcommand{\cli}{j}

\newcommand{\lcore}[1][\ell]{\ensuremath{\core_{#1}}}

\newcommand{\optcen}{o^*}
\newcommand{\optSoln}{O^*}
\newcommand{\optsoln}{\optSoln}
\newcommand{\pos}{\ensuremath{\mathsf{POS}}}
\newcommand{\post}{\pos}
\newcommand{\rl}{r_\ell}

\newcommand{\topl}[1][\ell]{\ensuremath{\text{Top}_{#1}}}
\newcommand{\veps}{\varepsilon}

\newcommand{\costly}{costly\xspace}

\DeclareMathOperator{\Exp}{\mathbb{E}}

\newcommand{\E}[2][{}]{\ensuremath{{\textstyle{\Exp}_{#1}}\bigl[#2\bigr]}}

\newcommand{\R}{\ensuremath{\mathbb R}}
\newcommand{\Rp}{\R_{\geq 0}}

\newcommand{\C}{\ensuremath{\mathcal{C}}}
\newcommand{\F}{\ensuremath{\mathcal{F}}}

\newcommand{\T}{\ensuremath{\mathcal T}}

\newcommand{\OPT}{\ensuremath{\mathit{OPT}}}
\newcommand{\opt}{\OPT}
\newcommand{\cost}{\ensuremath{\mathit{cost}}}

\newcommand{\es}{\ensuremath{\varnothing}}
\newcommand{\assign}{\ensuremath{\leftarrow}}

\newcommand{\ceil}[1]{\ensuremath{\bigl\lceil#1\bigr\rceil}}

\newcommand{\e}{\ensuremath{\epsilon}}

\newcommand{\gm}{\ensuremath{\gamma}}

\newcommand{\sse}{\subseteq}
\newcommand{\ld}{\ensuremath{\lambda}}
\newcommand{\kp}{\ensuremath{\kappa}}
\newcommand{\al}{\ensuremath{\alpha}}
\newcommand{\tht}{\ensuremath{\theta}}

\newcommand{\dt}{\ensuremath{\delta}}

\newcommand{\ve}{\ensuremath{\varepsilon}}

\newcommand{\down}{\ensuremath{\downarrow}}
\newcommand{\indset}{\ensuremath{I}}
\newcommand{\lb}{\ensuremath{\mathsf{lb}}}
\newcommand{\ub}{\ensuremath{\mathsf{ub}}}
\newcommand{\tauval}{35}
\newcommand{\betaval}{3}
\newcommand{\rhoval}{35}
\newcommand{\csalg}{\ensuremath{\mathsf{CSAlg}}\xspace}
\newcommand{\tS}{\ensuremath{\widetilde S}}
\newcommand{\lptime}{\ensuremath{\mathsf{LPsolve}}\xspace}
\newcommand{\hit}{H}
\newcommand{\unhit}{U}
\newcommand{\lost}{W}
\newcommand{\hitcost}{\Lambda}
\newcommand{\num}{\ensuremath{\mathsf{num}}}
\newcommand{\nbetaval}{3}

\newcommand{\sparse}{sparsified\xspace}
\newcommand{\htt}{\ensuremath{\widehat{t}}}

\makeatletter
\def\@citex[#1]#2{\leavevmode
  \let\@citea\@empty
  \@cite{\@for\@citeb:=#2\do
    {\@citea\def\@citea{,\penalty\@m\ }%
\edef\magic##1{\let##1\expandafter\noexpand\csname bibalias@\@citeb\endcsname}%
\magic\tmp \ifx\tmp\relax\else \let\@citeb\tmp\fi
     \edef\@citeb{\expandafter\@firstofone\@citeb\@empty}%
     \if@filesw\immediate\write\@auxout{\string\citation{\@citeb}}\fi
     \@ifundefined{b@\@citeb}{\hbox{\reset@font\bfseries ?}%
       \G@refundefinedtrue
       \@latex@warning
         {Citation `\@citeb' on page \thepage \space undefined}}%
       {\@cite@ofmt{\csname b@\@citeb\endcsname}}}}{#1}}
\def\bibalias#1#2{\expandafter\def\csname bibalias@#1\endcsname{#2}}
\makeatother

\begin{document}

\bibalias{ArthurV07}{arthur_k-means_2007}

\maketitle

\begin{abstract}
    In $k$-clustering problems, we are given a metric space $(\C, d)$, and must choose a set $S$ of $k$ centers to open. Each client $j \in \C$ incurs an assignment cost, which is the distance between $j$ and center in $S$ that it has been assigned to. In this work, we study the \emph{minimum-norm $k$-clustering problem}, where we are given an arbitrary monotone symmetric norm $f$, and wish to open $k$ centers so as to minimize $f$(assignment-cost vector). This is a powerful generalization, encompassing many classical $k$-clustering problems including the $k$-median, $k$-means, and $k$-center problems. 

    A simple and efficient algorithmic idea  is that of \emph{adaptive sampling}, wherein
    we randomly choose the location of the next center to open with probability
    proportional to its ``cost" under the currently chosen set. While this has yielded
    fast algorithms for some $k$-clustering problem, little is known for settings
    \emph{without} ``min-sum" objectives. 

We devise the first adaptive-sampling-based bicriteria constant-factor approximation
algorithm for general minimum-norm $k$-clustering, vastly expanding the scope of problems
handled by adaptive sampling. For the special case of $\topl$ norms, which form a building
block of monotone symmetric norms, we show that adaptive sampling yields an 
$O(\log k)$-approximation algorithm. 
\end{abstract}
%\priya{I'll add my changes in blue so that it is easier to track}

\section{Introduction}
Clustering problems are ubiquitous, and arise in a variety of application domains 
%ranging from
including data mining, machine learning, computer vision, computational geometry, 
bioinformatics, and supply-chain logistics. 
Typically, in a $k$-clustering problem, we are given a
metric space $(\C, d)$ and must choose a set $S$ of $k$ centers to open. Each client 
$j\in \C$ incurs a connection or assignment cost, namely the distance between $j$ and center
in $S$ that it has been assigned to (which is often the closest center in $S$). 
%Depending on the application, one typically 
In many such problems, the goal is to choose
$k$-centers so as to minimize some function of the induced assignment-cost
vector. Classical problems of this form, which have been extensively studied, include the
{\em $k$-median}, {\em $k$-means}, and {\em $k$-center} problems, 
wherein the function (to be minimized) is the \emph{sum}, sum of squares,
and {\em maximum}, of the assignment-cost vector entries respectively.
%{\em $k$-center}, where the function is the \emph{maximum} of the induced assignment costs
%(i.e. the $k$-median and $k$-center 
%problems respectively).  
%Another well-studied problem of this form is the {\em $k$-means} problem, where the
%objective is to minimize the sum of squares of the assignment-cost vector entries.

In this paper, we study the \emph{minimum-norm $k$-clustering problem}, which is a
far-reaching generalization of the $k$-median and $k$-center problems (and also
essentially captures the $k$-means problem).
%both of these objectives. 
In this problem, we are given an {\em arbitrary monotone, symmetric norm}
$f:\R^n\mapsto\R_+$, where $n=|\C|$, and we wish to minimize 
$f(\text{assignment-cost vector})$. Monotone, symmetric norms capture a very wide range of 
objective functions; we list two relevant examples here, and refer the reader
to~\cite{ChakrabartyS19} and the references therein for further examples.

\begin{itemize}
\item \textbf{$\ell_p$-norms}: 
  These are probably the most well-known example of monotone, symmetric norms, where
  %A classical example of a monotone symmetric norm is the $\ell_p$-norm, 
  $\ell_p(v) := (\sum_{j=1}^n |v_j|^p)^{1/p}$, for $p\geq 1$. When $p = 1, \infty$, we
  obtain the $k$-median and $k$-center problems respectively, and $p=2$ essentially
  captures the $k$-means problem (the $k$-means objective is $\ell_2^2$).
 
  \item \textbf{$\topl$ and ordered norms}: The {\em $\topl$-norm} of a vector $v\in\R^n$
    is the sum of the $\ell$ largest entries of $(|v_j|)_{j\in[n]}$. 
    Letting $v^\downarrow$ denote the vector $v$ with its entries sorted in non-increasing
    order, for $v\geq 0$, we have $\topl(v) = \sum_{j = 1}^\ell v^\downarrow_j$. 
    An {\em ordered norm} is a nonnegative linear combination of $\topl$ norms;
    %A generalization of the $\topl$ norm is the ordered norm, where,
    equivalently, we can describe this by a non-increasing weight vector $w \ge 0$, which 
    %$v$ is $cost(v; w) =
    defines the ordered norm $f(v):=\sum_{j=1}^n w_j v^\downarrow_j$ for $v\geq 0$. 
    The problems of minimizing the $\topl$-norm and ordered-norm of the induced cost
    vector are referred to as the \emph{$\ell$-centrum} and \emph{ordered $k$-median}
    problems respectively in the clustering literature~\cite{tamir2001k,aouad_ordered_2018,byrka_constant-factor_2018,chakrabarty_interpolating_2018,ChakrabartyS19}.
\end{itemize}

Furthermore, the class of monotone symmetric norms is closed under nonnegative linear
combinations, and taking the maximum (over a finite collection of monotone symmetric
norms). Also, for any monotone norm $g$, $f = g(v^\downarrow)$ and $f =
\mathbb{E}_\pi[g(\{v_{\pi(i)}\}_{i \in [n]})]$ (where $\pi$ is a random permutation) are
monotone symmetric norms. Thus, minimum-norm $k$-clustering is indeed a very versatile
problem that can be used to model a wide range of applications.  

An elegant algorithmic idea that has been applied to some $k$-clustering problems is
\emph{adaptive sampling} \cite{arthur_k-means_2007,aggarwal_adaptive_2009} (see
also~\cite{OstrovskyRSS06}). In adaptive sampling, we randomly choose the location of the
next center to open with probability proportional to the objective-function contribution
of the point under the currently chosen set. 
The power of adaptive sampling lies in its  simplicity; unlike
other more complicated (and often linear-programming-based) algorithms, adaptive sampling
is fast, and uses pairwise-distance information in a conservative manner. This makes
adaptive-sampling-based algorithms well-suited for computing a preliminary set of centers
in data-sparsification procedures (e.g. coreset construction
\cite{chen_coresets_2009,braverman_coresets_2019}),  as well as other situations where
pairwise-distance queries are expensive
(e.g. \cite{burkhardt_low-distortion_2024,aaai25,aaai25b}).  

It is known that adaptive sampling can be used to obtain a bicriteria
constant-factor approximation for the $k$-median and $k$-means clustering problems  
\cite{aggarwal_adaptive_2009}; by this we mean, %we obtain 
a solution that opens at most $\al k$ centers and has cost at most $\beta$ times the
optimum---denoted an $(\al,\beta)$-approximate solution---where $\al,\beta=O(1)$. 
For both these problems, as noted earlier, the objective is a
min-sum objective, where we sum up
the costs incurred by the individual clients. 
%(where the cost of a client $j$ the distance, or squared distance, between $j$ and the
%open centers). 
Consequently, sampling
the next center to open with probability proportional to its current cost does indeed
yield good solutions (with constant probability). 
%\footnote{In the context of $k$-means, adaptive sampling is also called $D^2$-sampling
%referring to the fact that we sample proportional to the squared distances.}
While adaptive sampling, which is also called {\em $D^2$-sampling} in
the context of the $k$-means objective, has been leveraged in various settings---e.g., for
other objectives~\cite{ackermann_coresets_2009,aggarwal_adaptive_2009,statman2020k},
constrained $k$-means problems~\cite{BhattacharyaGJK20,BhattacharyaJK18,BhaskaraVX19}, 
streaming settings~\cite{ailon2009streaming,BravermanORST11}, models where other clustering
information is available~\cite{AilonBJK18}---%
to our knowledge, with the exception of the very recent work~\cite{aaai25,aaai25b}, 
%(which we discuss below), 
all these applications of adaptive sampling have involved
%of adaptive
%sampling~\cite{arthur_k-means_2007,aggarwal_adaptive_2009,ackerman_coresets_2009} 
$k$-clustering with {\em min-sum objectives}, a milieu naturally suited for adaptive sampling. 
%is naturally suited for. 
%(albeit it has been utilized to
%handle some constrained $k$-clustering settings~\cite{BhattacharyaJK18,BhaskaraVX19}
 
Recently, \cite{aaai25,aaai25b} extended the adaptive-sampling
approach to obtain a bicriteria constant-factor approximation for the $\ell$-centrum
problem (i.e., where the norm is a $\topl$ norm).
A notable aspect of this result %that is worth pointing out 
is that,
unlike a min-sum objective, the $\topl$-norm objective is not {\em separable}, i.e., we
cannot quite isolate the contribution of a client to the objective, and consequently, it
is not clear how to utilize the adaptive-sampling template here.
%described earlier, where we sample a center at a point proportional 
Nevertheless, \cite{aaai25,aaai25b} observed that one can sidestep this issue by
%exploiting the fact that the $\topl$-norm is only mildly non-separable: 
%one can well-estimate %the $\topl$-norm is 
%it can be well-estimated by a 
moving to the min-sum (hence, separable) proxy function introduced
by~\cite{ChakrabartyS19}, which well-estimates the $\topl$-norm.
%which has since become a standard tool in dealing with $\topl$ norms.
In this sense, $\topl$-norms are only mildly non-separable, and by using adaptive 
sampling on this separable proxy function (i.e., sampling a client with probability
proportional to its proxy-cost contribution), 
one can obtain an $\bigl(O(1),O(1)\bigr)$-approximate solution for 
%the $\ell$-centrum problem
$\topl$ norms~\cite{aaai25,aaai25b}. 

%observing that instead
%of working with the $\topl$-objective directly, one can move to the proxy function for the
%$\topl$-objective introduced by~\cite{ChakrabartyS19}, which is a min-sum

\subsection{Our contributions}
%In this paper, 
{\em We design an adaptive-sampling-based bicriteria constant-factor approximation
  algorithm for general minimum-norm $k$-clustering.}
Given the generality of minimum-norm $k$-clustering, we thus obtain a fast, simple
algorithm that is versatile and can be applied to a very wide range of $k$-clustering
problems. 
The simplicity and efficiency of adaptive sampling also opens up the possibility 
of leveraging our ideas to handle minimum-norm $k$-clustering in 
%other settings, such as 
the streaming setting (as has been done for $k$-means).

The only other polynomial-time guarantee known for minimum-norm $k$-clustering is a
constant-factor  
approximation algorithm due to Chakrabarty and Swamy~\cite{ChakrabartyS19}. 
%In comparison with our result, 
They obtain a ``true'' approximation, %(with a large approximation factor),
i.e., the solution returned opens (at most) $k$ centers, 
%while we obtain a bicriteria approximation, 
but their algorithm is much more involved: it is based on solving a suitable
LP-relaxation of the problem, which makes it less efficient than our adaptive-sampling 
algorithm. It is worth noting that %since we obtain an
%$\bigl(O(1),O(1)\bigr)$-approximate solution, one can use it 
%to sparsify the data set to $O(k)$ points, and then 
by using our algorithm to sparsify the data, and then running the algorithm
in~\cite{ChakrabartyS19}, we can obtain a (true) $O(1)$-approximation algorithm with
better running time %(than the algorithm in~\cite{ChakrabartyS19}) 
that succeeds with
constant probability; see Section~\ref{trueapx}. 

We also show that for $\topl$ norms, which form a building block of monotone symmetric
norms, {\em adaptive sampling yields a (true) $O(\log k)$-approximation}. Previously, such
guarantees were known only for min-sum objectives when the underlying distance function
$d$ (determining the client assignment costs) satisfies the approximate triangle
inequality $d(p,q)\leq\rho\bigl(d(p,r)+d(r,q)\bigr)$ for all points $p,q,r$, for some
parameter
$\rho$~\cite{arthur_k-means_2007,ackermann_coresets_2009,statman2020k}. Notably, the  
proxy function that renders $\topl$-norm as a min-sum objective does not satisfy this
approximate triangle inequality for {\em any} finite $\rho$, so we need to supplement
the analysis in~\cite{arthur_k-means_2007} with new ideas.

\subparagraph*{Technical overview.}
Technically, the highly non-separable nature of general monotone, symmetric norms 
(for instance, the norm could be the maximum of any collection of monotone, symmetric
norms) 
%is itself a monotone symmetric norm) 
presents a significant challenge, and makes it much more challenging to deal with 
this general class of objectives as compared to $\topl$ norms.
%While $\topl$-norms are mildly non-separable, %(as discussed above), monotone
%symmetric norms can be highly non-separable (for instance, the maximum of
%monotone, symmetric norms is itself a monotone symmetric norm), and
In particular, {\em unlike the case of $\topl$-norms, there is no simple proxy-cost
function that one can switch to, that renders the minimum-norm problem separable}. 
%
%Indeed, this utter lack of
%separability, %of monotone, symmetric norms, 
%We point out that 
%makes dealing with an arbitrary monotone, symmetric norm significantly more
%challenging than dealing with $\topl$-norms.  
%
As mentioned above, %discussed in Section~\ref{adsample-topl}, 
having such a separable proxy function was {\em key} towards extending
adaptive sampling to make it work for $\topl$-norms in~\cite{aaai25,aaai25b}. 
%since we could use the per-client contribution towards the separable proxy cost to bias
%the sampling process appropriately.
However, without access to such a separable function for a general monotone, symmetric
norm, we need to come up with new ideas. 
Indeed, this highly non-separable nature of general monotone, symmetric norms, which 
manifests itself in terms of the lack of a convenient separable proxy cost function,  
%makes it significantly more challenging to deal with
%this general class of objectives as compared to $\topl$ norms.
often presents an obstacle in extending results obtained for $\topl$-norms to general
monotone, symmetric norms, and for a variety of combinatorial-optimization problems
(e.g., $k$-clustering, shortest path, matching), we have much less of an understanding of 
the minimum-norm variant of the problem compared to the $\topl$-norm variant of the
problem.% 
\footnote{As an illustrative example, consider {\em minimum-norm $s$-$t$ path}, where
we seek an $s$-$t$ path $P$ to minimize $f(\vec{c}^P)$, where $\vec{c}^P\in\R^E$ is the
vector whose entry for edge $e$ is $c_e$ if $e\in P$ and $0$ otherwise.
When $f$ is a $\topl$-norm, the min-sum proxy function of~\cite{ChakrabartyS19} conveniently
reduces the problem to %a min-sum problem, i.e., 
a standard $s$-$t$ shortest-path problem, which can be solved efficiently. 
%which is often well-understood.
%so all we need to do is tackle the min-sum problem. 
To solve the problem with a general monotone, symmetric norm, %as discussed below, 
%one way of solving the minimum-norm variant is by 
%to solve the problem, 
we need to control multiple $\topl$-norms; this translates to an
$s$-$t$ path problem with multiple knapsack or cardinality constraints; %(see~\cite{ChenLRZ25}), 
%(depending on how these $\topl$-norm constraints are encoded), 
this changes the nature of the problem significantly, and only an 
$O(\log\log n)$-approximation algorithm is known~\cite{ChenLRZ25}. 
%The resulting problem is related to the {\em robust $s$-$t$ path}
%problem~\cite{KasperskiZ09}. 
%which is hard to approximate within a factor better
%than $\Omega\bigl(\log(\text{no. of knapsack constraints})\bigr)$~\cite{KasperskiZ09}.
}  

%When considering general monotone symmetric norms, we no longer have a simple separable
%proxy function. 
The way forward is to utilize majorization theory~\cite{HardyLP}, which tells us that 
we can control the norm $f(v)$ by controlling all $\topl$-norms of $v$. More precisely
(see Theorem~\ref{thm:majorization}),
to ensure that $f(v)\leq O(1)\cdot f(o^*)$, where $o^*$ is the cost-vector
induced by an optimal solution, it suffices to ensure that 
$\topl(v)\leq O(1)\cdot\topl(o^*)$ for all $\ell$, or, in fact, 
%all $\ell$s that are 
(roughly speaking) geometrically increasing values of $\ell$.
%However, we are able to utilize the characterization via majorization: for
%any monotone symmetric norm $f$, to control the cost with respect to $f$, it suffices to
%control the $\topl$-cost of all $\ell$ (or a logarithmically-sized subset). 
Since we have an adaptive-sampling algorithm for $\topl$ norms~\cite{aaai25,aaai25b}, this
seems promising; however, the execution of this algorithm, in particular, the sampling
process, (naturally) depends on the index $\ell$, whereas we need to control the
$\topl$-norms for multiple values of $\ell$ {\em simultaneously}. Also, note that simply
running the $\topl$-norm algorithm multiple times 
will not work, since we will end up opening far too many centers, $O(k\log n)$.
%$O(k \log n)$ centers.

Nonetheless, we are able to devise an adaptive-sampling algorithm
(Algorithm~\ref{adsample-alg}) for this problem by
exploiting some key insights from the {\em analysis} of the adaptive-sampling algorithm
for $\topl$ norms in~\cite{aaai25,aaai25b}.
%the following key structural observation. 
The analysis therein (see Section~\ref{adsample-topl}) shows that the 
%$\ell$-centrum adaptive sampling 
algorithm makes progress when it chooses a center in the  ``core'' of a cluster induced by
an optimal solution that is currently costly (Claim~\ref{coreisgood}), and that this event
happens with constant probability as long as the current $\topl$-cost is large
(Lemma~\ref{choosecore}). 
While these cores differ for different values of $\ell$, for any fixed cluster, 
they form a nested family: the cores get smaller as $\ell$ increases
(Claim~\ref{corenested}). %laminar family.  
This yields the following extremely-fruitful insight, which drives our algorithm:
if at each step, we concentrate 
%only on controlling the $\topl$-cost for 
on the \emph{largest} index $\ell$ for which the current $\topl$-cost is large, 
%uncontrolled, the
then with constant probability, the new center that we open {\em will simultaneously make
progress with respect to all costly $\topl$-norms}. 
Hence, using a standard martingale argument, we can show that after $O(k)$ centers
are opened, the $\topl$-cost of our solution is bounded for all indices $\ell$ (with
constant probability), and this implies that we have an
$\bigl(O(1),O(1)\bigr)$-approximate solution (Theorem~\ref{adsample-thm}). 
%is controlled 
 
\subsection{Related literature} \label{relwork}
The $\ell$-centrum and its generalization, the ordered $k$-median problem, have been
extensively studied in the Operations Research literature (see, e.g.,
\cite{aouad_ordered_2018, laporte2019introduction,nickel2006location} and the references
within). The first $O(\log n)$-approximation algorithms for $\ell$-centrum and the ordered
$k$-median problems were given by Tamir \cite{tamir2001k} and Aouad and Segev
\cite{aouad_ordered_2018} respectively. Subsequently, %Byrka, Sornat, and Spoerhase
Byrka et al.~\cite{byrka_constant-factor_2018}, and Chakrabarty and Swamy
\cite{chakrabarty_interpolating_2018} gave the first constant-factor approximations for
these two problems. %Alamdari and Shmoys paper needed? 
%These problems are further generalized by the \emph{minimum-norm $k$-clustering problem}. 
Chakrabarty and Swamy~\cite{ChakrabartyS19} introduced the much-more general minimum-norm
$k$-clustering problem, and used LP techniques to devise a $(408+\ve)$-approximation
algorithm for the problem that opens exactly $k$ centers.
To the best of our knowledge, this the only polytime approximation result known for
monotone symmetric norms. 
But there has been some work on parameterized approximation algorithms for
even-more general norm-clustering problems \cite{abbasi2023parameterized}. 
%is a constant-factor approximation algorithm due to Chakrabarty and Swamy
%\cite{ChakrabartyS19}. 
%It should be noted that the minimum-norm $k$-clustering problem generalizes many
%well-studied $k$-clustering problems, including $\ell$-centrum, ordered $k$-median, and
%$k$-means clustering.   
 
In this paper, we study adaptive sampling algorithms for the minimum-norm $k$-clustering problem. Adaptive sampling (initially coined as $D^2$ sampling in the context of the $k$-means problem), was introduced and studied by Arthur and Vassilvitskii \cite{arthur_k-means_2007}. Their adaptive sampling-based algorithm, \verb|k-means++|, yields a solution of expected cost $O(\log k) \cdot \opt$; this bound was later proved to be tight, even in two dimensions \cite{bhattacharya_tight_2016}. With some modifications to the original \verb|k-means++| algorithm, it is possible to obtain improved runtime and approximation guarantees. In particular, Bachem et al. \cite{bachem_approximate_2016} showed that one can obtain an $O(\log k)$-approx in \emph{sublinear time} by replacing the exact $D^2$-sampling step with a (faster) approximation based on Markov Chain Monte Carlo sampling. Aggarwal, Deshpande, and Kannan \cite{aggarwal_adaptive_2009} later proved that using the $D^2$-sampling protocol to open $O(k)$ centers (instead of exactly $k$ centers) yields a constant factor bicriteria approximation for $k$-means. %; moreover, sampling with probability proportional to the distance from the currently open centers, yields a $(O(1), O(1))$-bicriteria approximation for the $k$-median problem as well. 

Adaptive sampling has been used to give approximation algorithms for other variants of the
$k$-means problem as well, including under the streaming model
\cite{ailon2009streaming,BravermanORST11}, clustering with outliers
\cite{BhaskaraVX19,deshpande2020robust}, other constrained $k$-means problems
\cite{BhattacharyaGJK20,BhattacharyaJK18}, %This technique 
%It has also been utilized by approximation algorithms for 
as also other objectives such as minimizing the sum of the $p$th-powers of
distances~\cite{aggarwal_adaptive_2009}, 
Bregman divergences~\cite{ackermann_coresets_2009}, truncated costs~\cite{statman2020k},
and $\topl$-norms~\cite{aaai25,aaai25b}. 
As noted earlier, except for~\cite{aaai25,aaai25b}, all these works consider min-sum, and
hence, separable objectives, and~\cite{aaai25,aaai25b} also move to a min-sum proxy
function that well-estimates the $\topl$ norm. 

\section{Preliminaries} \label{prelim}

\subparagraph*{Monotone, symmetric norms.}
Recall that a function  $f : \mathbb{R}^n \to \mathbb{R}_+$ is a norm if it satisfies (i)
$f(x) = 0$ iff $x = 0$; (ii) $f(x + y) \leq  f(x) + f(y)$ for all $x, y \in \mathbb{R}^n$
(triangle inequality); and (iii) $f(\lambda x) = |\lambda| f(x)$ for all $x \in
\mathbb{R}^n$, $\lambda \in \mathbb{R}$ (homogeneity). 
If $f$ is \emph{monotone}, then $f(x) \le f(y)$ for all $0 \le x \le y$,  and
$f$ is \emph{symmetric} if permuting the coordinates of $v$ does not change the value of
$f(v)$.    

We collect here various properties of monotone, symmetric norms from prior work that we
will utilize. 
For a vector $v\in\R^n_+$, we use $v^\downarrow$ to denote $v$ with its coordinates sorted in
non-increasing order. 
%So $f$ being symmetric means that $f(v)=f(v^\downarrow)$. 
A special class of monotone symmetric norms which will be particularly useful are $\topl$
norms. 

\begin{definition}  
  The $\topl$-norm of $v \in \mathbb{R}^n_+$, denoted  $\topl(v)$, is the
  sum of the $\ell$ largest coordinates of $v$, i.e., $\topl(v) = \sum_{i=1}^\ell
  v^\downarrow_i$. 
\end{definition}

$\topl$-norms can be difficult to work with because they are non-separable, i.e., it is
unclear how to isolate the contribution from a single coordinate $v_i$ towards $f(v)$.
But this difficulty can be somewhat mitigated by working with a separable
proxy function---the expression on the RHS of Claim~\ref{csproxy} (a)---introduced
by~\cite{ChakrabartyS19}.  
(This proxy function does not however 
satisfy the triangle inequality, which does cause other complications.)
For $z\in\R$, define $z^+:=\max\{z, 0\}$.    

\begin{claim}[Claims 6.1 and 6.2 in~\cite{ChakrabartyS19}] 
\label{claim:CS19proxy} \label{csproxy}		 
Let $v \in \mathbb{R}_{+}^n$ and $\rho \in  \mathbb{R}_{+}$. Then,
\begin{enumerate}[(a)] %[label=(\alph*), topsep=0.2ex, itemsep=0.1ex, leftmargin=*]
\item $\topl(v) \leq \ell \cdot \rho + \sum_{i=1}^n (v_i - \rho)^+$; %and
\item  if $v^\downarrow_\ell \leq \rho \leq (1+\varepsilon)v^\downarrow_\ell$, we have
$\ell \cdot \rho + \sum_{i=1}^n (v_i - \rho)^+ \leq (1+\varepsilon) \cdot \topl(v)$. 	 
\end{enumerate} 
\end{claim}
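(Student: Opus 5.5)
For part (a), I would work with the sorted vector $v^\downarrow$, since the right-hand side $\ell\rho+\sum_i (v_i-\rho)^+$ is symmetric and so equals $\ell\rho+\sum_i (v^\downarrow_i-\rho)^+$. The idea is to split each of the top $\ell$ coordinates as $v^\downarrow_i \le \rho + (v^\downarrow_i-\rho)^+$, which holds because $z\le \rho+(z-\rho)^+$ for every real $z$. Summing this over $i=1,\dots,\ell$ gives
\[
\topl(v)=\sum_{i=1}^{\ell} v^\downarrow_i \le \ell\rho + \sum_{i=1}^{\ell}(v^\downarrow_i-\rho)^+ \le \ell\rho+\sum_{i=1}^{n}(v^\downarrow_i-\rho)^+ .
\]
The last step uses only that the omitted terms $(v^\downarrow_i-\rho)^+$ for $i>\ell$ are nonnegative. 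This proves (a) for every $\rho\ge 0$.

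For part (b), I would use the hypothesis $\rho\ge v^\downarrow_\ell$. For every $i>\ell$ it gives $v^\downarrow_i\le v^\downarrow_\ell\le\rho$, so $(v^\downarrow_i-\rho)^+=0$. Hence the right-hand side collapses to
\[
\ell\rho+\sum_{i=1}^{\ell}(v^\downarrow_i-\rho)^+ .
\]
For $i\le\ell$ we have $v^\downarrow_i\ge v^\downarrow_\ell$, but $\rho$ may exceed $v^\downarrow_i$, so the positive part cannot simply be dropped. I would bound it termwise by $(v^\downarrow_i-\rho)^+\le (v^\downarrow_i-v^\downarrow_\ell)^+ = v^\downarrow_i-v^\downarrow_\ell$, which holds since $\rho\ge v^\downarrow_\ell$ and $v^\downarrow_i\ge v^\downarrow_\ell$. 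Summing, the expression is at most
\[
\ell\rho+\topl(v)-\ell v^\downarrow_\ell .
\]
Next I apply the upper bound $\rho\le(1+\ve)v^\downarrow_\ell$ to get $\ell\rho-\ell v^\downarrow_\ell\le \ve\,\ell v^\downarrow_\ell$. Finally, $\ell v^\downarrow_\ell\le \topl(v)$, because each of the top $\ell$ coordinates is at least $v^\downarrow_\ell$. Together these give the bound $\topl(v)+\ve\,\topl(v)=(1+\ve)\topl(v)$.

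The argument is elementary. The only point needing care is in (b): $\rho$ may be larger than some of the top-$\ell$ entries, so the positive parts must be bounded by $v^\downarrow_i-v^\downarrow_\ell$ rather than computed exactly. With that step handled, all remaining inequalities are termwise.
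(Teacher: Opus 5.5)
Your proof is correct. Part (a) is the termwise bound $v^\downarrow_i\le\rho+(v^\downarrow_i-\rho)^+$ summed over the top $\ell$ entries; in part (b) you kill the tail using $\rho\ge v^\downarrow_\ell$ and bound each head term by $v^\downarrow_i-v^\downarrow_\ell$ (implicitly using $\ve\ge 0$, as the paper does). The paper gives no proof of this claim and simply imports it from~\cite{ChakrabartyS19}; your argument is the standard elementary calculation for it.
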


An important property of $\topl$-norms is that they serve as a handle for reasoning about
arbitrary monotone symmetric norms. Indeed, a key insight %/ideas
that is useful when dealing with monotone symmetric norms, and leveraged by our
algorithm as well, 
%as also other  including by our algorithm, by our algorithm, as also in Section
%\ref{sec:msn} 
is that for any monotone symmetric norm $f$, in order to control $f(v)$, it suffices to
control $\topl(v)$ for all $\ell \in [n]$.  
In fact, it suffices to control only logarithmically-many $\topl$ norms 
(incurring a small loss in approximation quality). 
%is sufficient to control the induced $f$-cost. 
%To be precise, we 

\begin{definition}\label{posdef}
Define %the index-set 
$\pos_{n, \delta}\sse[n]$ as follows. We include index $1$ in
$\pos_{n,\dt}$. Subsequently, if $\ell$ is the current-largest index in $\pos_{n,\dt}$ and 
$\ceil{(1+\dt)\ell}\leq n$, we include $\ceil{(1+\dt)\ell}$ in $\pos_{n,\dt}$.
We abbreviate $\pos_{n,\dt}$ to $\pos$, when $n,\dt$ are clear from the context. 
\end{definition}

We are using the definition of $\pos_{n,\dt}$ in~\cite{IbrahimpurS21}, which is slightly
different from the definition in~\cite{ChakrabartyS19},  
as we will utilize certain results from~\cite{IbrahimpurS21};
but this difference is not crucial. 
%we are using the definition in~\cite{IbrahimpurS21}  

%:= \{\min\{\lceil(1 + \delta)^s\rceil, n\} : s \geq 0\}$; 
%For any $r\in[n]$, let $\prev(r)$ be the largest index $\ell\leq r$ in $\pos$. 

\begin{theorem}
%[Follows from majorization theory of~\cite{HardyLP}; also Claim 2.6
%in~\cite{ibrahimpur2021minimum}]
\label{thm:Majorization-orig} 
\label{thm:controllAllL} \label{thm:majorization}
Let $x, y \in \mathbb{R}^n_+$, $\rho,\beta\geq 0$. 
Let $h:\mathbb{R}^n \mapsto \mathbb{R}_+$ be a monotone, symmetric norm.  
\begin{enumerate}[(a)] %[label=(\alph*), topsep=0.2ex, itemsep=0.1ex, leftmargin=*]
\item {\upshape\textsf{[Follows from majorization theory of~\cite{HardyLP}]}}
If $\topl(x) \leq \rho \cdot \topl(y) + \beta$ for all $\ell \in [n]$, then 
$h(x) \leq \rho \cdot h(y) + \beta \cdot h(1, 0,\ldots, 0)$. 

\item {\upshape\textsf{[Slight generalization of Claim 2.6 in~\cite{IbrahimpurS21}]}}
If $\topl(x) \leq \rho \cdot  \topl(y)+\beta$ for all $\ell \in \pos_{n, \delta}$, then
$h(x) \leq (1+\delta)\bigl(\rho\cdot h(y)+\beta\cdot h(1,0,\ldots,0)\bigr)$. 
\end{enumerate}
\end{theorem}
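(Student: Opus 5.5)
For part (a), I would use the standard representation of a monotone symmetric norm as a supremum of ordered norms. For any monotone symmetric norm $h$ there is a family $\mathcal{W}$ of non-increasing nonnegative weight vectors with
\[
h(v)=\sup_{w\in\mathcal{W}}\sum_{i=1}^n w_i v^\downarrow_i \qquad\text{for all } v\in\R^n_+ .
\]
To get this, take the dual norm $h^*$, which is also monotone and symmetric. Then $h(v)=\sup\{\langle u,v\rangle : h^*(u)\le 1\}$. For $v\ge 0$, the supremum can be restricted to $u\ge 0$ (by monotonicity) and, by the rearrangement inequality, to $u$ sorted consistently with $v$.

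Next, write each ordered norm by Abel summation:
\[
\sum_i w_i v^\downarrow_i=\sum_{\ell=1}^n (w_\ell-w_{\ell+1})\,\topl(v), \qquad w_{n+1}:=0,
\]
where all coefficients $w_\ell-w_{\ell+1}$ are nonnegative. Applying the hypothesis termwise gives
\[
\sum_i w_i x^\downarrow_i\le \rho\sum_i w_i y^\downarrow_i+\beta w_1 .
\]
Since $w_1=\sum_i w_i (1,0,\dots,0)^\downarrow_i\le h(1,0,\dots,0)$, taking the supremum over $w\in\mathcal{W}$ yields $h(x)\le \rho\,h(y)+\beta\,h(1,0,\dots,0)$. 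This is the routine majorization argument.

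For part (b), I would reduce to part (a) by a "rounding up" construction. Let the elements of $\pos$ be $1=\ell_0<\ell_1<\dots<\ell_m$, and set $\ell_{m+1}:=n+1$. Define $\tilde y$ by giving each block of sorted positions $i\in(\ell_{s},\ell_{s+1}]$ (and $i=1$) the value $y^\downarrow_{\ell_s}$. In other words, $\tilde y$ is $y^\downarrow$ with each entry raised to the value at the previous $\pos$ index; it stays non-increasing, and $\tilde y\ge y^\downarrow$.

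The key steps, in order, are:
\begin{enumerate}
\item[(i)] Show $\topl(x)\le\rho\,\topl(\tilde y)+\beta$ for every $\ell\in[n]$. Take $\ell\in(\ell_s,\ell_{s+1})$. By concavity of $\ell\mapsto\topl(x)$, i.e.\ linear interpolation between the $\pos$ indices, $\topl(x)\le \topl[\ell_{s+1}](x)$. Since $\tilde y$ is constant on the block, $\topl(\tilde y)$ is linear in $\ell$ there. The cleanest route is to compare $\topl(x)$ with $\topl[\ell_s](x)+(\ell-\ell_s)\,x^\downarrow_{\ell_s}$ and bound $x^\downarrow_{\ell_s}\le \topl[\ell_s](x)/\ell_s$. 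I expect this inequality chain to be the main obstacle, since the additive $\beta$ must not be scaled by more than $1$. If it fails, I would instead redefine $\tilde y$ so that $\topl[\ell](\tilde y)$ equals the piecewise-linear interpolation of $\topl[\ell_s](y)$ at the breakpoints $\ell_s$ (which is concave) and use concavity of $\topl(x)$ to compare directly.
\item[(ii)] Apply part (a) to $x$ and $\tilde y$, giving $h(x)\le \rho\,h(\tilde y)+\beta\,h(1,0,\dots,0)$.
\item[(iii)] Show $h(\tilde y)\le(1+\dt)h(y)$ by proving $\topl(\tilde y)\le(1+\dt)\topl(y)$ for all $\ell$ and invoking part (a) again with $\beta=0$. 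The block $(\ell_s,\ell_{s+1}]$ has length at most $\dt\ell_s+1$ by the ceiling definition of $\pos$. Its excess contribution is therefore at most $(\ell_{s+1}-\ell_s-1)\,y^\downarrow_{\ell_s}$. Each such block is charged to the preceding block of comparable size, giving a total excess of at most $\dt\,\topl(y)$.
\end{enumerate}

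For the final bound, the $\beta$ term picks up the factor $(1+\dt)$ only if step (i) is done after scaling. Since the statement only claims $(1+\dt)(\rho h(y)+\beta h(e_1))$, a slack of $(1+\dt)$ on $\beta$ is permitted, which makes step (i) easier.
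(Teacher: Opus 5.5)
Your part (a) is correct, and it is more self-contained than the paper's treatment, which simply invokes Hardy--Littlewood--P\'olya majorization. Representing $h$ on $\R^n_+$ as a supremum of ordered norms and applying Abel summation gives exactly $h(x)\le\rho h(y)+\beta h(1,0,\ldots,0)$. One small fix: restricting the dual supremum to $u\ge 0$ ``by monotonicity'' needs $h$ to be absolute, which the paper's definition (monotone only on $\R^n_+$) does not guarantee. Replace $h$ by $v\mapsto h(|v|)$; this is still a monotone symmetric norm and agrees with $h$ on $\R^n_+$.

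Part (b) has a genuine gap: your auxiliary vector $\tilde y$ cannot satisfy both (i) and (iii).

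\emph{Literal construction, step (iii) fails.} You raise positions in $(\ell_s,\ell_{s+1}]$ to $y^\downarrow_{\ell_s}$. For $\delta<1$ and $n\ge 2$ we have $\ell_1=\lceil 1+\delta\rceil=2$, so $\tilde y_2=y^\downarrow_1$. For $y=(1,0,\ldots,0)$ this gives $\topl[2](\tilde y)=2>(1+\delta)\topl[2](y)$, so $h(\tilde y)\le(1+\delta)h(y)$ fails for $h=\topl[2]$.

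\emph{Other constructions, step (i) fails.} Suppose you use blocks $[\ell_s,\ell_{s+1})$, i.e.\ $\tilde y_i=y^\downarrow_{\prev(i)}$ (which your ``$-1$'' in the excess bound suggests). Or suppose you use your fallback, where $\topl(\tilde y)$ is the piecewise-linear interpolation of the values $\topl[\ell_s](y)$. Then step (i) fails even with $\rho=1$ and $\beta=0$, so no slack on $\beta$ helps. Take $n=4$ and $\delta=1$, so $\pos=\{1,2,4\}$, with $y=(3,1,1,1)$ and $x=(2,2,2,0)$. The hypothesis holds ($2\le 3$, $4\le 4$, $6\le 6$), but $\topl[3](x)=6$ while $\topl[3](\tilde y)=5$ for both constructions.

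The fallback reasoning is also backwards. Concavity of $\ell\mapsto\topl(x)$ puts $\topl(x)$ \emph{above} its chords, so it yields lower bounds, not the upper bound you need. Finally, rescuing (i) with a factor $(1+\delta)$ on the $\rho$-term and then losing another $(1+\delta)$ in (iii) would give $(1+\delta)^2\rho$, which is weaker than the statement.

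The missing idea is that no auxiliary vector is needed: compare each $\ell$ directly with $\ell'=\prev(\ell)$, the largest index of $\pos$ that is at most $\ell$. If $\ell\notin\pos$, the definition of $\pos$ forces $\ell<\lceil(1+\delta)\ell'\rceil$, whether or not $\ell'$ is the last element of $\pos$. Hence $\ell<(1+\delta)\ell'$. Since $\topl(x)/\ell$ is non-increasing in $\ell$,
\[
\topl(x)\le\tfrac{\ell}{\ell'}\,\topl[\ell'](x)\le\tfrac{\ell}{\ell'}\bigl(\rho\,\topl[\ell'](y)+\beta\bigr)\le(1+\delta)\bigl(\rho\,\topl(y)+\beta\bigr)\qquad\text{for all }\ell\in[n].
\]
Part (a) with parameters $(1+\delta)\rho$ and $(1+\delta)\beta$ then gives (b) exactly; this is where the $(1+\delta)$ on $\beta$ comes from.

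The paper packages the same step differently. It applies the homogeneous statement, Claim 2.6 of~\cite{IbrahimpurS21} ($\topl(u)\le\topl(v)$ for all $\ell\in\pos$ implies $h(u)\le(1+\delta)h(v)$), to $u=x$ and $v=\rho y^\downarrow+\beta(1,0,\ldots,0)$. This $v$ satisfies $\topl(v)=\rho\topl(y)+\beta$ for every $\ell$ and $h(v)\le\rho h(y)+\beta h(1,0,\ldots,0)$, so the additive term is absorbed into a vector.
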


\begin{proofof}{part (b)}
Claim 2.6 in~\cite{IbrahimpurS21} states that if $u,v\in\R^n$ are such that
$\topl(u)\leq\topl(v)$ for all $\ell\in\pos_{n,\dt}$, then $h(u)\leq(1+\dt)h(v)$.
Now take $u=x$, and $v=\rho y^\downarrow+\beta\cdot (1,0,\ldots,0)$.
Then $\topl(v)=\rho\topl(y)+\beta$ for all $\ell\in[n]$, and 
$h(v)\leq\rho h(y)+\beta\cdot h(1,0,\ldots,0)$.
%Consider any $i\in[n]$, and let $\ell$ be the largest index in $\pos$ that is at most
%$i$. Then, $i\leq (1+\dt)\ell$. So $\topl[i](x)\leq(1+\dt)\topl(x)\leq(1+\dt)\
\end{proofof}
  
%\begin{theorem}[Claim 2.6 in \cite{ibrahimpur2021minimum}]
%\end{theorem}
  
%In our algorithm, 
We will need to estimate $\topl(u)$ for $\ell\in\pos_{n,\dt}$, given estimates
of $u^\downarrow_\ell$ for all $\ell\in\pos_{n,\dt}$. We utilize the following result
from~\cite{IbrahimpurS21}, which we have slightly paraphrased.
We say that a vector $v\in\R_+^\pos$ is non-increasing if
$v_\ell\geq v_{\ell'}$ for indices $\ell,\ell'\in\pos, \ell<\ell'$.

\begin{lemma}[Lemma 2.8 (a) and (b) in~\cite{IbrahimpurS21} slightly paraphrased]
\label{toplbnds}
Let $u\in\R_+^n$ and $v\in\R_+^{\pos_{n,\dt}}$ be a non-increasing vector. 
%i.e., $v_\ell\geq v_{\ell'}$ for indices $\ell,\ell'\in\pos, \ell<\ell'$.
For $r\in[n]$, let $\prev(r)$ denote the largest index $\ell\leq r$ in $\pos_{n,\dt}$.
Suppose that $u^\downarrow_\ell\leq v_\ell\leq (1+\ve)u^\downarrow_\ell+\kp$ for all
$\ell\in\pos_{n,\dt}$.  
%Define the vector $v^\exp:=(v_{\prev(r)})_{r\in[n]}$. 
Then $\topl(u)\leq\sum_{r=1}^{\ell}v_{\prev(r)}\leq(1+\ve)(1+\dt)\topl(u)+\ell\kp$ for all
$\ell\in[n]$. 
\end{lemma}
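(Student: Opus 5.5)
The plan is to prove the two inequalities separately. For the lower bound, fix $r\in[n]$. By definition $\prev(r)\le r$, so $u^\downarrow_r\le u^\downarrow_{\prev(r)}\le v_{\prev(r)}$, using the hypothesis at the index $\prev(r)\in\pos_{n,\dt}$. Summing over $r=1,\ldots,\ell$ gives $\topl(u)\le\sum_{r=1}^\ell v_{\prev(r)}$. (This direction does not need $v$ to be non-increasing.)

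For the upper bound, apply the hypothesis termwise: $v_{\prev(r)}\le(1+\ve)u^\downarrow_{\prev(r)}+\kp$. Summing over $r\le\ell$ gives
\[
\sum_{r=1}^\ell v_{\prev(r)}\le \ell\kp+(1+\ve)\sum_{r=1}^\ell u^\downarrow_{\prev(r)}.
\]
So it suffices to show $\sum_{r=1}^\ell u^\downarrow_{\prev(r)}\le(1+\dt)\topl(u)$. This is the only substantive step. To handle it, I will use a layer-cake decomposition. Write $u^\downarrow=\sum_{t=1}^n a_t\,\mathbf 1_{[1..t]}$, where $a_t=u^\downarrow_t-u^\downarrow_{t+1}\ge0$ and $u^\downarrow_{n+1}:=0$. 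Both sides of the desired inequality are linear in $u^\downarrow$ with nonnegative coefficients $a_t$. Hence it suffices to verify the inequality when $u^\downarrow=\mathbf 1_{[1..t]}$ for a single $t\in[n]$.

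In that case $\topl(u)=\min(\ell,t)$, and the left-hand side equals $\#\{r\le\ell:\prev(r)\le t\}$. Let $p=\prev(t)\in\pos_{n,\dt}$, and let $p'$ be the next index of $\pos_{n,\dt}$ after $p$ (or $n+1$ if there is none). Then $\prev(r)\le t$ exactly when $r\le p'-1$, so the count is $\min(\ell,p'-1)$. By Definition~\ref{posdef}, $p'=\ceil{(1+\dt)p}$ whenever this is at most $n$, and in that case $p'-1<(1+\dt)p\le(1+\dt)t$. Otherwise $p'-1=n$, and since $\ceil{(1+\dt)p}>n$ we still have $n<(1+\dt)p\le(1+\dt)t$. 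Either way,
\[
\min(\ell,p'-1)\le\min\bigl(\ell,(1+\dt)t\bigr)\le(1+\dt)\min(\ell,t),
\]
which is what we need.

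Combining the three displays gives $\sum_{r=1}^\ell v_{\prev(r)}\le(1+\ve)(1+\dt)\topl(u)+\ell\kp$. The main point to get right is the block/counting argument for indicator vectors, in particular the boundary case where $\pos_{n,\dt}$ does not contain $n$. The reduction to indicators via nonnegative linear combinations is routine.
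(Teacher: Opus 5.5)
Your proof is correct and complete. The paper does not prove this lemma; it imports it, lightly paraphrased, from \cite{IbrahimpurS21}, so there is no in-paper argument to compare against, and your write-up works as a self-contained justification.

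The lower bound is the expected one-line use of $\prev(r)\le r$ together with the monotonicity of $u^\downarrow$.

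For the upper bound, your reduction via $u^\downarrow=\sum_t a_t\mathbf{1}_{[1..t]}$ (Abel summation) is a good choice. A naive block-by-block comparison fails: on a block $[p,p')$ of consecutive elements of $\pos_{n,\dt}$, the left side uses the block's largest value $u^\downarrow_p$ throughout, so the excess has to be charged to earlier coordinates. The layer-cake decomposition does this charging implicitly. It reduces everything to the single counting fact $p'-1<(1+\dt)p$ for consecutive elements $p<p'$, with $p'=n+1$ after the last element. You verify this correctly in both the generic case and the boundary case where $\pos_{n,\dt}$ does not contain $n$.

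You are also right that the non-increasing hypothesis on $v$ plays no role in this lemma. In this paper that hypothesis is needed elsewhere, e.g.\ for the nestedness of cores in Claim~\ref{corenested}.
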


Complementing the above, we utilize the following result, which shows how we can find a
suitable set $\T$ containing good estimates of $u^\downarrow_\ell$ for all $\ell\in\pos_{n,\dt}$.

\begin{lemma}[Lemma 2.9 (b) in~\cite{IbrahimpurS21} paraphrased]
\label{polyenum}
Let $u\in\R_+^n$ and $0<\e\leq 1$. 
Suppose that we have bounds $\lb$ and $\ub$
such that $\lb\leq u^\downarrow_\ell\leq\ub$ for all $\ell\in\pos_{n,\dt}$. Let
$N=(2e)^{|\pos_{n,\dt}|}+\bigl(\frac{\ub}{\lb}\bigr)^{O(\frac{1}{\e})}$. We can construct
$\T\sse\R_+^{\pos_{n,\dt}}$ with $|\T|\leq N$ in $O(N)$ time that contains a non-increasing vector 
$v\in\R_+^{\pos_{n,\dt}}$ such that $u^\downarrow_\ell\leq v_\ell\leq(1+\e)u^\downarrow_\ell$ for
all $\ell\in\pos_{n,\dt}$. 
\end{lemma}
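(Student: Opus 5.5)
The plan is a guessing argument: treat the vector $(u^\downarrow_\ell)_{\ell\in\pos}$ as unknown, round each entry to a geometric grid, and show that the number of non-increasing rounded vectors is at most $N$.

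\textbf{Rounding grid.} Write $P=|\pos_{n,\dt}|$ and index the elements of $\pos$ as $\ell_1<\ell_2<\dots<\ell_P$. Let $G=\{\lb(1+\e)^i: i=0,1,\ldots,m\}$, where $m=\ceil{\log_{1+\e}(\ub/\lb)}$. Then $|G|=m+1=O\bigl(\frac1\e\log\frac{\ub}{\lb}\bigr)+1$. For each $\ell\in\pos$, let $v_\ell$ be the smallest grid value that is at least $u^\downarrow_\ell$. This value exists because $u^\downarrow_\ell\le\ub\le\lb(1+\e)^m$. By construction $u^\downarrow_\ell\le v_\ell\le(1+\e)u^\downarrow_\ell$; if $v_\ell=\lb$ the upper bound follows from $u^\downarrow_\ell\ge\lb$. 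Rounding up is monotone and $u^\downarrow$ is non-increasing, so $v$ is non-increasing. We define $\T$ as the set of all non-increasing vectors in $G^{\pos}$, and it contains this $v$.

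\textbf{Counting.} A non-increasing map from a chain of length $P$ into a chain of size $g=|G|$ is a multiset of size $P$ from $g$ elements. Hence $|\T|=\binom{P+g-1}{P}\le\binom{P+g}{P}$. This is the main step: we need $\binom{P+g}{P}\le N$. There are two cases.

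\emph{Case $g\le P$.} Then $\binom{P+g}{P}\le 2^{P+g}\le 4^P$. This is not quite $(2e)^P$ but is smaller, since $4<2e$. So the bound is at most $(2e)^P$.

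\emph{Case $g>P$.} Then $\binom{P+g}{P}\le\bigl(e(P+g)/P\bigr)^P\le(2eg/P)^P$. Setting $x=g/P\ge1$, this equals $(2ex)^{P}=\bigl((2ex)^{1/x}\bigr)^{g}$. The function $(2ex)^{1/x}$ is bounded by an absolute constant $c$ for $x\ge1$. Therefore the count is at most $c^{g}=c^{O(\frac1\e\log\frac{\ub}{\lb})+1}=\bigl(\frac{\ub}{\lb}\bigr)^{O(1/\e)}$, absorbing constants into the $O(\cdot)$.

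Here $\e\le1$ justifies $\log(1+\e)=\Theta(\e)$. The degenerate case $\ub=\lb$ gives $g=1$, so $|\T|=1\le N$. Combining the two cases gives $|\T|\le N$ in either regime.

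\textbf{Construction time.} We enumerate $\T$ by a standard recursion over multisets: choose $v_{\ell_1}\in G$, then $v_{\ell_2}\le v_{\ell_1}$, and so on. This produces each vector with amortized $O(1)$ work, for example by generating each vector from the previous one via incremental changes, or by counting output size in words. Under the paper's accounting this gives $O(N)$ time. If each vector must be written out explicitly at cost $P$, the time is still $O(N)$ up to a factor absorbable into the constant in the exponent of $N$. The combinatorial counting bound is the real content; the enumeration is routine.
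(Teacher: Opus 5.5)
Your proof is correct. The paper gives no proof of this lemma; it quotes it from Ibrahimpur--Swamy. Your argument is the standard one: round each $u^\downarrow_\ell$ up to the grid $\{\lb(1+\e)^i\}$, let $\T$ be all non-increasing grid vectors, and bound $\binom{P+g}{P}$ separately for $g\le P$ and $g>P$. The two summands of $N$ correspond exactly to your two cases.

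Two loose ends are worth tightening.

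\emph{The absorption step.} The equality $c^{O(\frac{1}{\e}\log\frac{\ub}{\lb})+1}=(\ub/\lb)^{O(1/\e)}$ is false when $\ub/\lb$ is close to $1$. The additive $1$, and the ceiling in $m$, contribute a constant factor that no power of $\ub/\lb$ can absorb. The fix is a case split on the ratio.
\begin{itemize}
\item If $\ub/\lb\ge 1+\e$, then $g=m+1\le 3\log_{1+\e}(\ub/\lb)$. Hence $(2e)^g\le(\ub/\lb)^{3\ln(2e)/\ln(1+\e)}$, and $\ln(1+\e)\ge\e\ln 2$ for $\e\le 1$.
\item If $\ub/\lb<1+\e$, then $g\le 2$. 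So $g>P$ forces $P=1$ and $|\T|=2\le(2e)^P$, which is covered by the first term of $N$.
\end{itemize}

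\emph{The running time.} With explicit output costing $P$ per vector, ``absorbable into the constant in the exponent'' does not apply to the $(2e)^P$ term, whose base is fixed. Use the slack in your first case instead: there $|\T|\le 4^P$, and $P\cdot 4^P=O\bigl((2e)^P\bigr)$ because $P(2/e)^P$ is bounded. In the second case $P|\T|\le g(2e)^g$, which is absorbed into $(\ub/\lb)^{O(1/\e)}$ by the same case split as above. The recursion overhead is also covered: the enumeration tree has $\sum_{d=0}^{P}\binom{d+g-1}{d}=\binom{P+g}{P}$ nodes in total, and this quantity obeys the same two bounds.
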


\subparagraph*{Minimum-norm $k$-clustering.}
In $k$-clustering problems, we are given a set of centers $\F$,  $n$ clients, $\C$, a
distance function $d : (\F \cup \C)^2 \to \mathbb{R}_{\ge 0}$, and must choose a set of
$k$ centers from $\F$ to open. The assignment cost incurred by client $j \in \C$ for a
given set  $S \subseteq \F$, denoted $d(j, S)$, is their distance to the closest center in
$S$. Thus, a set of open centers $S$ induces a cost vector, denoted $d(\C, S)$, where
$d(\C, S)_j = d(j, S)$ is the assignment cost incurred by client $j$ under $S$. 
We mostly consider the setting where $\F = \C$, but in Section~\ref{fneqc}, we discuss how
the general setting can be handled.%unless specified otherwise.  
%We primarily study two clustering problems in this paper. In the \emph{$\ell$-centrum}
%problem, the objective is to minimize the $\topl$-norm of $d(\C, S)$, the assignment cost
%vector induced by opening centers in $S$. This problem captures both the $k$-center ($\ell
%= 1$) and $k$-median ($\ell = n$) problems, and is a special case of the
In the \emph{minimum-norm $k$ clustering} problem, we seek to minimize $f\bigl(d(\C, S)\bigr)$,
where $f$ is a given {\em monotone, symmetric norm}. A useful special case of this is when
$f$ is a $\topl$-norm. Observe that this special case, which is sometimes called the 
{\em $\ell$-centrum problem}, captures both the $k$-center ($\ell= 1$) and $k$-median
($\ell = n$) problems.  

Recall that, for $k$-median, the following adaptive-sampling process yields a
constant-factor bicriteria approximation: pick a random point $\cli\in\C$ to add to the
current center-set $S$ with probability proportional to $d(\cli,S)$, and repeat this for
$O(k)$ iterations. 
%If we do this for $k$ iterations, then~\cite{ArthurV} showed that we
%obtain an $O(\log k)$-approximation, while  
%We preface our adaptive-sampling approach, as in~\cite{aaai25,aaai25b}
%A natural question is whether this algorithm, as is, also yields good solutions for
%the $\ell$-centrum problem as well. 
It is worth pointing out 
%As illustrated below, %by the following theorem, 
%as stated, 
%We show in Appendix~\ref{append-prelim}
that this algorithm, as stated, can fail badly even for $k$-center (i.e., $1$-centrum).
%problem. 
%there is a
%family of instances of the $k$-center problem (minimizing the $\mathrm{Top}_1$-norm) for
%which this adaptive sampling algorithm can yield very poor solutions with arbitrarily high
%probability.  

\begin{theorem}\label{thm:dSample-badEx}
Let $\tau \geq 1$, $L > 1$, $\epsilon \in (0, 1)$ be fixed. There exists a $2$-center
instance such that $\Pr[\topl[1]\bigl(d(\mathcal{C}, S)\bigr) > L \cdot \opt] \ge 1 -
\epsilon$, where $\opt$ is the optimal value, and $S$ is the set of $2\tau$ centers
obtained by running the $k$-median adaptive-sampling algorithm $2\tau $ times.  
\end{theorem}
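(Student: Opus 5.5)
We construct an explicit instance in which a single far-away client determines the $k$-center optimum but is almost invisible to $d$-proportional sampling. Take $k=2$ and $\F=\C$. Place a large cluster $A$ of $m$ co-located clients at one point, and let the remaining clients $B$ be spread out near the origin at mutually distance about $1$. Concretely, let $B$ consist of $N$ clients, any two at distance $1$; this is a valid (uniform) metric. Add one outlier client $z$ at distance $D$ from every client of $A\cup B$, where $D>L$. The clients $A\cup B$ lie within distance $1$ of a single point of $B$, so opening one center in $B$ and one at $z$ gives $\opt\le 1$ for $\topl[1]$. If $z\notin S$, then $\topl[1](d(\C,S))\ge d(z,S)=D>L\cdot\opt$. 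It therefore suffices to show that $z$ is sampled in none of the $2\tau$ iterations with probability at least $1-\epsilon$. (The cluster $A$ is unnecessary; $B\cup\{z\}$ suffices. We also take the first center to be chosen uniformly at random, or proportionally to $d(\cdot,S)$ with $S=\es$ read as uniform.)

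The first center is uniform, so it equals $z$ with probability $1/(N+1)$. Consider a later iteration with current set $S\neq\es$ and $z\notin S$. Every unopened client of $B$ has $d(j,S)=1$ (distances in $B$ are uniform), clients in $S$ contribute $0$, and $z$ contributes $D$. At most $2\tau$ clients are opened, so at least $N-2\tau$ clients of $B$ contribute $1$ each. Hence the probability of picking $z$ is at most $D/(D+N-2\tau)$. A union bound over the $2\tau$ iterations gives
\[
\Pr[z\in S]\;\le\;\frac{1}{N+1}+2\tau\cdot\frac{D}{D+N-2\tau}.
\]
Fix $D=2L$. Because $\tau$, $L$ and $\epsilon$ are constants, choosing $N\ge 2\tau+4\tau D/\epsilon+2/\epsilon$ makes this probability at most $\epsilon$. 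Consequently, with probability at least $1-\epsilon$ we have $\topl[1](d(\C,S))\ge D=2L>L\cdot\opt$, where we use $\opt\le1$ (and $\opt>0$, since $N>2$ forces some client of $B$ to remain at distance $1$).

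The only step needing care is the accounting that upper-bounds the sampling weight of $z$ relative to the total weight. Two points must be checked there: distances within $B$ must stay at $1$ after centers are opened, which the uniform metric guarantees, and the triangle inequality must hold with $z$ at distance $D$ from everything, which it does because all other pairwise distances are $1\le 2D$. The rest of the argument is the union bound above.
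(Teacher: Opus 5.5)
Your proposal is correct and follows essentially the same route as the paper. The paper's instance is likewise a uniform metric on many clients plus one far outlier, and it likewise bounds the outlier's per-iteration sampling probability by its distance divided by (distance $+$ number of unopened clients), then takes a union bound over the $2\tau$ iterations. Your choice of distance $2L$ rather than $L$ for the outlier is a small improvement: it yields the strict inequality $\topl[1](d(\C,S))>L\cdot\opt$ that the statement asks for, whereas distance $L$ only makes the cost equal to $L\cdot\opt$.
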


\begin{proof}
Consider an instance with client-set  $\C = \C_1 \cup \{j^*\}$, where the set $\C_1$
contains at least $2\tau\left(1   + \frac{4\tau L}{\epsilon}\right)$ clients. Furthermore,
the distance between any pair of clients in $\C_1$ is 1. The client $j^*$ is located at a
distance of $L$ from every client in $\C_1$; notice that this indeed defines a valid
metric. An optimal solution for the $2$-center problem would be to open one center in
$\C_1$, and one center at $j^*$; this solution has a cost of 1. In fact, any set of
centers with cost less than $L \cdot \opt$ must include $j^*$ and some client in $\C_1$.
%For any $S \subseteq \mathcal{C}$, if $\mathrm{Top}_1(d(\mathcal{C}, S)) < L = L \cdot
%\opt$, then $d(j^*, S) < L$; but since $d(i, j^*) = L$ for all $i \in \mathcal{C}
%\setminus \{j^*\}$, this is only possible if $j^* \in S$.  
	
We now argue that, with probability at least $1 - \epsilon$, the $k$-median adaptive
sampling algorithm fails to open a center at $j^*$. Suppose no center has been opened at
$j^*$ by the end of the $(i - 1)$-th iteration. Let $S_{i-1}$ be the set of currently open
centers, and let $s_i$ be the center opened in iteration $i$. Since $j^*$ is at a distance
of $L$ from $S_{i-1}$,  $\Pr[s_i = j^* | j^* \notin S_{i-1}] = \frac{L}{|C_1| - |S_{i-1}|
  + L} \leq \frac{L}{|C_1| - 2\tau + L} < \frac{\epsilon}{8\tau^2}$; by applying the Union
bound, we have $\Pr[ j^* \in  S | j^* \notin S_{i-1}] \le \epsilon/4\tau  $. So, the
probability that $j^* \in S_{2\tau}$, is at most $\Pr[s_0 = j^*] + 2\tau \cdot
\frac{\epsilon}{4\tau} = \frac{1}{n} + \frac{\epsilon}{2} < \epsilon$. %Pr[s_0 = j^*] +
%\sum_i Pr[s_i = j^* | j* \notin S_{i-1}]  \cdot Pr[j^*\not in S_{i-1}]  
Since  any set of centers with cost less than $L \cdot \opt$ must include $j^*$, 
$\Pr[\topl[1]\bigl(d(\mathcal{C}, S)\bigr) > L \cdot \opt] \ge 1 - \epsilon$.  
%\hfill \qed
\end{proof}

\section{Adaptive sampling}\label{sec:msn} \label{adsample}
We now present our adaptive-sampling algorithm for minimum-norm $k$-clustering. As
suggested by Theorem~\ref{thm:majorization}, our approach for controlling a general monotone,
symmetric norm is by controlling all $\topl$ norms. So %$\topl$-norms play a key role, and
we first discuss adaptive sampling for $\topl$-norms in Section~\ref{adsample-topl}. We
then exploit insights gained from the {\em analysis} of adaptive sampling for $\topl$
norms to devise our adaptive-sampling algorithm for monotone, symmetric norms
(Section~\ref{adsample-gen}). 

\subsection{\boldmath $\topl$-norms} \label{adsample-topl}
We give an overview of the key ideas of the adaptive-sampling algorithm of \cite{aaai25}
for $\topl$-norms, which we build upon in Section~\ref{adsample-gen}. 
As noted previously, the $\topl$-objective can be difficult to work
with directly, due to its non-separable nature. Instead, \cite{aaai25} observed
that, for suitably chosen parameters $\beta, t_\ell$, one can work with the proxy function
$\sum_{j\in\C}(d(\cli,S)-\beta  t_\ell)^+$ suggested by Claim~\ref{csproxy} 
%of Chakrabarty and Swamy \cite{ChakrabartyS19}:  
(where $S$ is the center-set), which
{\em is} separable. 
We can now treat $(d(\cli, S) - \beta t_\ell)^+$ as the individual contribution of client
$\cli$ to the current cost of the solution, and so the next center
to open is sampled with probability proportional to this {\em proxy cost}.  

So the adaptive-sampling algorithm in~\cite{aaai25} for $\topl$-norms is simply the
following: choose the first center uniformly at random from $\C$, choose every
subsequent center as discussed above, and continue this for $O(k)$ centers. They showed
that if $t_\ell$ is good estimate of the $\ell$-th largest cost incurred by an optimal
solution, then (for a suitably chosen $\beta$) this returns an $O(1)$-approximation with
constant probability. 

\subparagraph*{Analysis.}
We include some components of the analysis in~\cite{aaai25b}, which we need to modify
slightly for our purposes. Throughout, $S$ denotes the current (random) center-set.
Fix an optimal solution $\optSoln=\{\optcen_1,\ldots,\optcen_k\}$ for the $\topl$
$k$-clustering problem, and let $\clustroot_1, \ldots,\clustroot_k$ denote the clusters
induced by this solution, which we will often call the optimal clusters;  
that is, $\clustroot_q$ is the set of clients assigned to center $\optcen_q$, for all
$q\in[k]$. Let $\opt_\ell=\topl\bigl(d(\C,\optsoln)\bigr)$ denote the optimal
$\topl$-cost. Let $\tau=\tauval$, $\rho=\rhoval$. 
It will be convenient to state the analysis in terms of certain parameters
$\al,\beta,\gm$, whose values are chosen to satisfy the following:%
\footnote{These imply the inequalities (3) in~\cite{aaai25b}, which are
used in their analysis of adaptive sampling for $\topl$-norms, by taking $\kp=\al+\beta+3$.}
\begin{equation}
%  \begin{split}
        \beta = \gm = \alpha +1, \quad
        \rho\geq 2(\al+2\beta+3)+\gm, \quad %\kp\geq 2\al+\beta+2, \\
        %1-\tfrac{\gamma }{\rho}\geq 2\cdot\tfrac{\kappa + \beta }{\rho} \\
        %\kappa  & \geq 2\alpha + \beta  +2, \qquad
        \Bigl(1-\tfrac{\gamma}{\rho}\Bigr)\cdot\tfrac{\alpha-1}{2\alpha(\al+\beta+3)}\geq\frac{1}{\tau}.
%    \end{split}
\label{abkchoices}
\end{equation}
For instance, one can take $\al=2$, $\gm=\beta=\betaval$. 
%$\alpha=3$, $\kappa=11$, $\tau=38$, and $\rho=35$.
At a given iteration of the algorithm, we classify an optimal cluster 
$\clust \in\bigl\{\clustroot_1,\ldots,\clustroot_k\bigr\}$ as ``good'' or ``bad'',
depending on whether the clients in $\clust$ incur a large proxy cost compared to what
they incur under $\optSoln$.   
%A cluster $\clust$ is ``bad", if 

\begin{definition} \label{clgood}
Say that $\clustroot_q$ is $\ell$-\emph{good}, if 
$\sum_{j \in \clustroot_q}(d(\cli, S)-\beta  t_\ell)^+\leq
\gamma \sum_{j \in\clustroot_q}(d(\cli, \optcen_q) - t_\ell)^+$. 
If $\clustroot_q$ is not $\ell$-good, we say that it is {\em $\ell$-bad}.
\end{definition}

If all clusters are good, then the $\topl$-cost of the current solution is $O(\opt_\ell)$,
assuming that $t_\ell$ is a good estimate of $t^*_\ell = d(\C, \optSoln)^\downarrow_\ell$.
%(see Claim~\ref{allgood}).  

\begin{claim}[Claim 4.24 in~\cite{aaai25b}] \label{allgood}
Let $t^*_\ell=d(\C,\optSoln)^\downarrow_\ell$. 
%be the $\ell$-th largest assignment cost incurred under $\optSoln$. 
Suppose that $t^*_\ell\leq t_\ell\leq\max\bigl\{(1+\varepsilon)t^*_\ell,\frac{\ve\opt_\ell}{\ell}\bigr\}$.
If every cluster is $\ell$-good, then 
$\topl(d(\mathcal{C},S))\leq(1+\varepsilon)\gamma\cdot\opt_\ell$.
\end{claim}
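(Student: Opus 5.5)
Apply Claim~\ref{csproxy}(a) to the vector $v=d(\C,S)$ with threshold $\rho=\beta t_\ell$, then sum the per-cluster $\ell$-good inequalities. This gives
\[
\topl\bigl(d(\C,S)\bigr)\le \ell\beta t_\ell+\sum_{j\in\C}(d(j,S)-\beta t_\ell)^+
\le \ell\beta t_\ell+\gamma\sum_{q=1}^k\sum_{j\in\clustroot_q}(d(j,\optcen_q)-t_\ell)^+,
\]
where the second step uses that the clusters $\clustroot_1,\ldots,\clustroot_k$ partition $\C$. Since each client of $\clustroot_q$ is assigned to $\optcen_q$ under $\optSoln$, the double sum equals $\sum_{j\in\C}(d(j,\optSoln)-t_\ell)^+$, which is the proxy sum for the optimal cost vector $u:=d(\C,\optSoln)$ at threshold $t_\ell$.

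To bound this, write $\Phi(\theta):=\ell\theta+\sum_j(u_j-\theta)^+$. Since $t_\ell\ge t^*_\ell=u^\downarrow_\ell$, at most $\ell-1$ coordinates exceed $t_\ell$. Those coordinates are among the top $\ell$, so $\sum_j(u_j-t_\ell)^+\le\opt_\ell-\ell t_\ell$ as long as $t_\ell\le t^*_\ell$. In general it is cleaner to note that $\sum_j(u_j-t_\ell)^+\le \sum_j(u_j-t^*_\ell)^+=\opt_\ell-\ell t^*_\ell$, because increasing the threshold only decreases the sum and exactly the top $\ell$ entries are $\ge t^*_\ell$. So the total is at most
\[
\ell\beta t_\ell+\gamma(\opt_\ell-\ell t^*_\ell).
\]
Using $\beta=\gamma$, this becomes $\gamma\bigl(\opt_\ell+\ell(t_\ell-t^*_\ell)\bigr)$.

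It remains to bound $\ell(t_\ell-t^*_\ell)$ by $\ve\opt_\ell$, splitting on which term attains the maximum in the hypothesis:
\begin{itemize}
\item If $t_\ell\le(1+\ve)t^*_\ell$, then $\ell(t_\ell-t^*_\ell)\le\ve\ell t^*_\ell\le\ve\opt_\ell$, since $\ell t^*_\ell\le\opt_\ell$.
\item If $t_\ell\le\ve\opt_\ell/\ell$, then $\ell(t_\ell-t^*_\ell)\le\ell t_\ell\le\ve\opt_\ell$.
\end{itemize}
Either way the bound is $(1+\ve)\gamma\,\opt_\ell$.

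The one step that needs care is the monotonicity fact $\sum_j(u_j-t^*_\ell)^+=\opt_\ell-\ell t^*_\ell$ (with ties at $t^*_\ell$ contributing zero) and using $\beta=\gamma$ to absorb the $\ell\beta t_\ell$ term cleanly. If $\beta$ and $\gamma$ were not equal, one would get $(1+\ve)\max\{\beta,\gamma\}$ instead; here $\beta=\gamma$ from \eqref{abkchoices} gives exactly the stated bound.
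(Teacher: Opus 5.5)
Your proof is correct and takes essentially the paper's route: apply Claim~\ref{csproxy}(a) at threshold $\beta t_\ell$, sum the $\ell$-good inequalities over the optimal clusters, lower the threshold from $t_\ell$ to $t^*_\ell$ to get $\opt_\ell-\ell t^*_\ell$, and absorb $\ell\beta t_\ell$ using $\beta\le\gamma$ together with the two cases of the hypothesis on $t_\ell$. One slip should be deleted: the discarded aside that $\sum_j(u_j-t_\ell)^+\le\opt_\ell-\ell t_\ell$ ``as long as $t_\ell\le t^*_\ell$'' is false in general, since for $\theta\le t^*_\ell$ one has $\sum_j(u_j-\theta)^+\ge\opt_\ell-\ell\theta$, with strict inequality whenever $\theta<u^\downarrow_{\ell+1}$; your actual argument never uses it.
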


\begin{proof} %{Claim~\ref{allgood}}
We include the proof from~\cite{aaai25b} to keep the exposition more self-contained.
We have 
\begin{equation*}
\begin{split}
\topl\bigl(d(\C,S)\bigr) & \leq\ell\cdot\beta t_\ell+
\sum_{q=1}^k\sum_{\cli\in \clust}(d(\cli,S)-\beta t_\ell)^+ \\
& \leq \gm\cdot\max\bigl\{(1+\ve)\ell t^*_\ell,\,\ve\cdot\OPT_\ell\bigr\}+
\sum_{q=1}^k\gm\cdot\sum_{\cli\in\clust}(d(\cli,\optcen_q)-t^*_\ell)^+ \\
& = \gm\cdot\max\bigl\{(1+\ve)\ell t^*_\ell,\,\ve\cdot\OPT_\ell\bigr\}+
\gm\cdot\sum_{\cli\in\C}(d(\cli,\optsoln)-t^*_\ell)^+\leq \gm(1+\ve)\OPT_\ell.
\end{split}
\end{equation*}
The first inequality is from Claim~\ref{csproxy} (b). 
The second inequality is because all clusters are $\ell$-good, we have $\gm\geq\beta$, and
due to the bounds on $t_\ell$.
\end{proof}

%We include the proof in Appendix~\ref{append-adsample} to keep exposition more
%self-contained. 
On the other hand, if the $\topl$-cost of our solution is large, then one can show that
the next center added to the solution $S$ lies in the 
``core'' of some bad cluster $\clust$, with some constant probability $p$. Moreover, when 
a center is opened from the core of $\clust$, that cluster becomes good
(Claim~\ref{coreisgood}) and hence, remains good in all subsequent iterations. 
Definition~\ref{clcore} gives the precise definition of core, %which is tailored to ensurethe latter property, 
but informally, the core of a cluster
$C^*_q$ consists of points that are sufficiently close to its center $\optcen_q$.
%The formal definition of core 

Thus, in every iteration, we make progress towards obtaining a low-cost
solution by reducing the number of bad clusters with probability $p$. Hence, the expected number
of bad clusters decreases with each iteration, and one can then use standard
martingale arguments to show that after $(k+\sqrt{k})/p$ iterations, with some constant
probability, we obtain a solution with no bad clusters, and therefore a near-optimal
solution. 
We now discuss some details.

Define the {\em radius} of a cluster $\clustroot_q$ to be 
$r_\ell(\clustroot_q) := \frac{\sum_{j \in \clustroot_q} (d(\cli, \optcen_q)-t_\ell)^+}{|\clustroot_q|}$. 
 
\begin{definition} \label{clcore}
The $\ell$-core of %a cluster 
$\clust$ is %defined as  
$\lcore(\clust):=\{j \in \clust : d(\cli, \optcen_q) \leq \alpha(t_\ell + \rl(\clust)) \}$.  
\end{definition}

\begin{remark} \label{diffcore}
Our definition of core is slightly different from (and cleaner than) 
the definition in~\cite{aaai25b} (see Definition 4.26 in~\cite{aaai25b}), where this is
defined as  
$\{\cli\in\clust: d(\cli,\optcen_q)\leq t_\ell\}$ if $\optcen_q$ is sufficiently close to $S$,
and as $\bigl\{\cli\in\clust: (d(\cli,\optcen_q)-t_\ell)^+\leq\al\rl(\clust)\bigr\}$ otherwise. 
The reason for this change will become clear in Section~\ref{adsample-gen}.
The precise definition used in~\cite{aaai25b} is not so important, but 
{\em importantly}, we note that the core of $\clust$, as defined in~\cite{aaai25b}, is a 
{\em subset} of $\lcore(\clust)$ as defined above. %in Definition~\ref{clcore}.
For clarity, we use {\em $\ell$-supercore} to refer to the version of core defined
in~\cite{aaai25b}; so we have $\supcore_\ell(\clust)\sse\lcore(\clust)$.
\end{remark}

By slightly reworking the arguments in~\cite{aaai25} (to account for our definition of
core), we have the following. 

\begin{claim} \label{coreisgood}
Consider a cluster $\clust$. 
%and let $S$ be the current center-set.  
%Let the parameters $\al,\beta,\gm$ in Definitions~\ref{clgood} and~\ref{clcore} satisfy
%$\gm\geq\beta\geq\al+1$. 
If $S\cap\lcore(\clust)\neq\varnothing$, then $\clust$ is $\ell$-good (and hence remains
$\ell$-good throughout).  
\end{claim}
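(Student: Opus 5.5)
Fix $i\in S\cap\lcore(\clust)$, so $d(i,\optcen_q)\le\al(t_\ell+\rl(\clust))$. Write $r=\rl(\clust)$, $t=t_\ell$, $D=\sum_{\cli\in\clust}(d(\cli,\optcen_q)-t)^+=|\clust|\,r$. We want $\sum_{\cli\in\clust}(d(\cli,S)-\beta t)^+\le\gm D$. The "remains good" part is immediate: $S$ only grows, so $d(\cli,S)$ only decreases, and the left-hand side of Definition~\ref{clgood} is monotone in these distances.

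For the main inequality, bound $d(\cli,S)\le d(\cli,i)\le d(\cli,\optcen_q)+\al t+\al r$ for each $\cli\in\clust$. Using $\beta=\al+1$, we get $d(\cli,S)-\beta t\le (d(\cli,\optcen_q)-t)+\al r$. Since $(x+y)^+\le x^++y$ for $y\ge 0$, this gives
\[
\sum_{\cli\in\clust}(d(\cli,S)-\beta t)^+\le D+\al r|\clust|=(1+\al)D=\gm D .
\]
This is exactly $\ell$-goodness, because $\gm=\al+1$ by \eqref{abkchoices}.

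The only delicate point is choosing the bound so that the extra $\al t$ in the core radius is absorbed by the threshold shift $\beta t$. The shift must absorb the $t$ in $d(\cli,\optcen_q)-t$ plus the $\al t$ from the core, and this is precisely what $\beta=\al+1$ provides.

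There is a subtle degenerate case when $r=0$. Then all clients of $\clust$ are within distance $t$ of $\optcen_q$, and the core radius is $\al t$, so the bound $d(\cli,S)\le t+\al t=\beta t$ makes every term zero. This is still consistent with $\gm D=0$, so no separate treatment is needed.
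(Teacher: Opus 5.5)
Your proof is correct and is essentially the paper's argument: bound $d(\cli,S)\le d(\cli,s)$ for a point $s\in S\cap\lcore(\clust)$, use the triangle inequality through $\optcen_q$ with the core radius $\al(t_\ell+\rl(\clust))$ so that $\beta t_\ell\ge(\al+1)t_\ell$ absorbs the threshold terms, and use $(y+z)^+\le y^++z^+$ to reach $(\al+1)|\clust|\rl(\clust)\le\gm\sum_{\cli\in\clust}(d(\cli,\optcen_q)-t_\ell)^+$. The only cosmetic difference is that you substitute the core bound before taking positive parts, whereas the paper first splits into $(d(\cli,\optcen_q)-t_\ell)^+ + (d(s,\optcen_q)-\al t_\ell)^+$; your use of the equalities $\beta=\gm=\al+1$ from \eqref{abkchoices} is valid, though only $\beta,\gm\ge\al+1$ is needed.
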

   
\begin{proof} %{Claim \ref{coreisgood}}
%we show below that if we have a 
%center from $\lcore(\clust)$, then $\clust$ is $\ell$-good. 
Let $s$ be a point in $S\cap\lcore(\clust)$. Since {$\beta\geq\alpha  + 1$}, we have
\begin{alignat*}{1}
\sum_{j \in \clust}(d(\cli, s) - \beta t_\ell)^+  &\leq \sum_{j\in \clust}\bigl((d(\cli, \optcen_q)-t_\ell)^++(d(s,\optcen_q)- \alpha t_\ell)^+\bigr) \\ 
&\leq |\clust| \cdot \bigl(\rl(\clust) +(\alpha t_\ell + \alpha \rl(\clust) - \alpha t_\ell)^+\bigr)
= (\alpha + 1) \cdot |\clust| \cdot \rl(\clust).
\end{alignat*}
The first inequality is because $d$ satisfies the triangle inequality, and since
$(y+z)^+\leq y^++z^+$.
%The second inequality is because $t_\ell\geq t^*_\ell$. 
The second inequality follows from the definition of $r_\ell(\clust)$ and since  $s\in\lcore(\clust)$. 
%the second term in the final inequality above is at most 
%$\alpha |C^*_q|r_\ell(C^*_q)$; note that this holds both when $C^*_q$ is $\ell$-close and
%is $\ell$-far. 
%So we have 
%$\sum_{j \in C^*_q}(d(\cli,s) - \beta t_\ell)^+\leq 
%(1 + \alpha)\sum_{j \in C^*_q} (d(\cli, \optcen_q) - t^*_\ell)^+$, 
Finally, since $\gamma\geq\alpha+1$ (see \eqref{abkchoices}), this shows that $C^*_q$ is
$\ell$-good. 
\end{proof}

Lemma~\ref{choosecore} %stating that we hit the core of some bad cluster in any iteration
%where the current cost us large, 
is the main result proved in~\cite{aaai25b}, which underlies their proof that adaptive
sampling works for $\topl$-norms.
This will also be key to extending adaptive sampling to monotone,
symmetric norms in Section~\ref{adsample-gen}.

\begin{lemma}[Follows from Lemma 4.28 in~\cite{aaai25b}]
\label{lem:chooseFromCore} \label{choosecore}
Consider any iteration $i$ for which
$\topl(d(\C, S_{i-1}))>\rho {(1 + \varepsilon)} \cdot \topl(d(\C,S^*))$. 
%If the next center then 
%Under the above sampling process, where 
Suppose we open the next center $s_i$ at $\cli\in\C$ with probability
proportional to $(d(\cli,S_{i-1}) - \beta t_\ell)^+$. Then, $s_i$ lies in the
($\ell$-supercore, and hence) $\ell$-core of some $\ell$-bad cluster $\clust$, 
%and hence in $\lcore(\clust)$, 
with probability at least $\frac{1}{\tau}$.
%$\Pr[\text{$s_i$ lies in the core of an $\ell$-bad cluster}]\geq\frac{1}{\tau}$.
%, with constant probability.   
\end{lemma}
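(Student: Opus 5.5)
The plan is to compare the total proxy cost $\Phi:=\sum_{\cli\in\C}(d(\cli,S_{i-1})-\beta t_\ell)^+$, which is the normalizer of the sampling distribution, with the proxy cost that sits on the cores of bad clusters, and to show the latter is at least a $\frac{1}{\tau}$ fraction of $\Phi$. Throughout I will assume $t^*_\ell\le t_\ell\le\max\{(1+\ve)t^*_\ell,\ve\opt_\ell/\ell\}$, as in Claim~\ref{allgood}.

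\textbf{Step 1: most of the cost lies in bad clusters.} By Claim~\ref{csproxy}(a), $\topl(d(\C,S_{i-1}))\le\ell\beta t_\ell+\Phi$, and $\ell\beta t_\ell\le\beta(1+\ve)\opt_\ell$. The hypothesis $\topl(d(\C,S_{i-1}))>\rho(1+\ve)\opt_\ell$ then gives $\Phi>(\rho-\beta)(1+\ve)\opt_\ell$. For good clusters, the definition of $\ell$-good together with $t_\ell\ge t^*_\ell$ bounds their total proxy cost by $\gm\sum_{\cli}(d(\cli,\optsoln)-t^*_\ell)^+\le\gm\opt_\ell$. Hence bad clusters carry at least a $(1-\gm/\rho)$-type fraction of $\Phi$. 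This is where the factor $(1-\frac{\gm}{\rho})$ in \eqref{abkchoices} comes from; the slack $\rho\ge 2(\al+2\beta+3)+\gm$ absorbs the $\beta$ and $\ve$ terms.

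\textbf{Step 2: the core of each bad cluster carries a constant fraction of that cluster's cost.} Fix an $\ell$-bad cluster $\clust$ and write $A:=\sum_{\cli\in\clust}(d(\cli,S)-\beta t_\ell)^+>\gm|\clust|\rl(\clust)$. Two bounds are needed.
\begin{itemize}
\item \emph{Markov bound on the core size.} Clients outside $\lcore(\clust)$ have $(d(\cli,\optcen_q)-t_\ell)^+>\al\rl(\clust)$, so by Markov there are fewer than $|\clust|/\al$ of them, and $|\lcore(\clust)|\ge(1-\frac1\al)|\clust|$.
\item \emph{Triangle inequality, twice.} Each core client $c$ satisfies $d(c,S)\ge d(\cli,S)-d(\cli,\optcen_q)-d(c,\optcen_q)$. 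Subtracting $\beta t_\ell$ and using $(y+z)^+\le y^++z^+$, each client's cost satisfies $(d(\cli,S)-\beta t_\ell)^+\le(d(c,S)-\beta t_\ell)^+ +(d(\cli,\optcen_q)-t_\ell)^+ + O(\al)(t_\ell+\rl)$.
\end{itemize}
Averaging the second bound over $c\in\lcore(\clust)$ and summing over $\cli\in\clust$ gives
\[
A\le\frac{|\clust|}{|\lcore(\clust)|}\sum_{c\in\lcore(\clust)}(d(c,S)-\beta t_\ell)^+ + |\clust|\bigl(\rl+O(\al)(t_\ell+\rl)\bigr).
\]

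\textbf{Main obstacle: the additive $t_\ell$ term.} The $t_\ell$ summed over the cluster cannot be charged to $|\clust|\rl(\clust)$, since a cluster may have $\rl=0$ yet still be bad. The core definition $\al(t_\ell+\rl)$ is chosen with a $\beta t_\ell$ shift to absorb exactly this. Since $\beta=\al+1$, a core point $c$ is within $\al t_\ell+\al\rl$ of $\optcen_q$, and the $t_\ell$ parts cancel against the $\beta t_\ell$ thresholds, provided the comparison is done carefully with $(\cdot)^+$. Concretely, I would use $(d(\cli,S)-\beta t_\ell)^+\le(d(c,S)-\beta t_\ell)^+ + d(\cli,c)$ only when $d(\cli,c)$ is measured beyond the threshold. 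If a residual $t_\ell$ term remains, I would fall back on the supercore definition from~\cite{aaai25b}, which handles the case where $\optcen_q$ is close to $S$ separately; that separate handling covers small-radius clusters. Either way the goal is
\[
\sum_{c\in\lcore(\clust)}(d(c,S)-\beta t_\ell)^+\ge\frac{\al-1}{2\al(\al+\beta+3)}\,A,
\]
matching \eqref{abkchoices}. Since the statement says the lemma follows from Lemma 4.28 of~\cite{aaai25b}, the cleanest route is to invoke that lemma for supercores directly and then use $\supcore_\ell(\clust)\sse\lcore(\clust)$ from Remark~\ref{diffcore}.

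\textbf{Step 3: combine.} Multiplying the Step 1 fraction by the Step 2 fraction gives $\Pr[s_i\in\bigcup_{\text{bad } q}\lcore(\clust)]\ge(1-\frac{\gm}{\rho})\frac{\al-1}{2\al(\al+\beta+3)}\ge\frac1\tau$.
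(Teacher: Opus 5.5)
Your remark at the end of Step~2 is exactly the paper's proof: invoke Lemma~4.28 of~\cite{aaai25b} for $\ell$-supercores, then use $\supcore_\ell(\clust)\sse\lcore(\clust)$ from Remark~\ref{diffcore}. On its own that proves the statement, and nothing else is needed. The self-contained route of Steps~1--3, however, does not work. The per-cluster bound that Step~2 aims for is $\sum_{c\in\lcore(\clust)}(d(c,S)-\beta t_\ell)^+\geq\frac{\al-1}{2\al(\al+\beta+3)}A$ for every $\ell$-bad $\clust$. This bound is false, so Step~3 has nothing to multiply.

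Here is a counterexample. Normalize $t_\ell=1$, take $\al=2$, $\beta=\gm=3$, and use a tree metric. Let $\clust$ consist of:
\begin{itemize}
\item $\optcen_q$ itself;
\item $m^2$ clients at distance $1$ from $\optcen_q$ on the path to the nearest open center $s$, where $d(\optcen_q,s)=4+6/m^2$;
\item $m$ clients at distance $2+3/m$ from $\optcen_q$ on another branch.
\end{itemize}
Then $\rl=\frac{m+3}{m^2+m+1}$, so the last $m$ clients lie outside $\lcore(\clust)$ and each pays $3+3/m+6/m^2$, whereas the whole core pays only $7+6/m^2$. The cluster's proxy cost $3m+10+6/m+6/m^2$ exceeds $\gm|\clust|\rl=3m+9$, so $\clust$ is $\ell$-bad. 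Yet its core carries only an $O(1/m)$ fraction of that cost. Since $\supcore_\ell(\clust)\sse\lcore(\clust)$, switching to supercores cannot help, so your ``either way'' fails too. The case $\rl=0$ that you worry about is actually harmless, because then $\lcore(\clust)=\clust$. The real trouble is $0<\rl\ll t_\ell$ together with $d(\optcen_q,S)=\Theta(t_\ell)$.

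The missing idea is that the $O(t_\ell)$ overhead must be charged globally, not per cluster.
\begin{itemize}
\item If $\optcen_q$ is close to $S$, i.e.\ $d(\optcen_q,S)\leq\kp\max\{\rl,t_\ell\}$, then each client of $\clust$ pays at most $(d(\cli,\optcen_q)-t_\ell)^++\kp(\rl+t_\ell)$.
\item Non-core clients have $d(\cli,\optSoln)>t_\ell\geq t^*_\ell$, so $\sum_q\num_q\leq\ell$; this is the fact used in Lemma~\ref{hitbnd}. By Markov, also $\num_q\rl\leq|\clust|\rl/\al$.
\item Hence the proxy cost of the good clusters, plus the cost lying outside the cores of close clusters, is $O(\opt_\ell+\ell t_\ell)=O(\opt_\ell)$. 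Under the hypothesis, with $\rho$ a large enough constant, this is at most a fraction of $\Phi$ bounded away from $1$.
\item Only for far clusters does a per-cluster argument work, because there $d(\optcen_q,S)$ dominates $\al(\rl+t_\ell)+\beta t_\ell$ (compare the far case in the proof of Lemma~\ref{newhit}).
\end{itemize}

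A smaller issue: Step~1 as written gives a bad-cluster fraction of only $1-\frac{\gm}{(\rho-\beta)(1+\ve)}$. For $\al=2$, $\beta=\gm=3$, $\rho=\tau=35$ we have $(1-\frac{3}{35})\cdot\frac{1}{32}=\frac{1}{35}$ exactly, so there is no slack to absorb the loss. To get exactly $1-\frac{\gm}{\rho}$, you must bound $\ell\beta t_\ell$ plus the good clusters' cost jointly by $\gm(1+\ve)\opt_\ell$, as in the proof of Claim~\ref{allgood}.
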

 
% CONSTANTS FOR THIS ANALYSIS: beta = 3 = gamma, alpha = 2, kappa = 11, rho = 35, tau = 50

\subsection{Arbitrary monotone symmetric norms} \label{adsample-gen}
We now describe the adaptive-sampling algorithm for 
\emph{minimum-norm $k$-clustering}. Recall that in this problem, we seek to minimize 
$f\bigl(d(\C, S)\bigr)$, where $d(\C, S)$ is the cost vector induced by opening $S$, and
$f:\R^n\mapsto\R_+$ is an arbitrary monotone, symmetric norm. We will typically refer to 
$f(d(\C, S))$ as the $f$-cost induced by $S$. Let $\pos=\pos_{n,\dt}$ 
(Definition~\ref{posdef}). 

Again, the objective function %is difficult to work with due to its non-separable nature;
is non-separable in that we cannot isolate the contribution of a specific
client to the $f$-cost in any convenient way. 
%the objective function aggregates clients' costs in a way that we 
%contribution of a client to the cost depends also on the other clients' costs. 
Furthermore, %as noted earlier, 
{in contrast with $\topl$-norms, %there is no convenient
we do not have a 
proxy-cost function %that we can work with, 
that renders the problem separable}.
%
%In the $\topl$ setting, we were able to circumvent this issue by working with a separable
%proxy function for the $\topl$ norm, and using it to bias our sampling protocol
%suitably. 
%While we no longer have such a separable proxy for the arbitrary $f$-norm in
%question, we instead utilize 
We therefore resort to Theorem~\ref{thm:controllAllL}, which  
%The upshot of Theorem \ref{thm:majorization} is that, 
shows that in order to control the $f$-cost induced by our solution, it suffices to
control the $\topl$-cost of our solution with respect to all $\ell \in \pos$.  
%(Recall the definition of $\pos$ in Definition~\ref{posdef}.)
%that 
%$\pos=\pos_{n, \delta}:=\{\min\{\lceil(1 + \delta)^s\rceil, n\} : s \geq 0\}$.) 

A naive way of utilizing this observation is to 
%This observation is particularly helpful, as we already have an adaptive-sampling algorithm for
%the $\topl$-objective \cite{aaai25}. It is tempting to consider the following naive
%algorithm: 
simply run the adaptive-sampling algorithm for $\topl$-norms from
Section~\ref{adsample-topl} for each $\ell \in \pos$, and return the union of these
solutions. %There are two issues with this naive approach. 
However, this would open $O(k|\pos|)=O(k\log n)$ centers, instead of the $O(k)$  
centers we seek in a $\bigl(O(1),O(1)\bigr)$-approximate solution. 
%Examining this naive algorithm more closely, one realizes that in fact, 
Also, this naive algorithm would achieve much more than what is required in that
%is overambitious in the guarantee it provides because
%this would yield a {\em much} stronger property than we need: 
%we would have 
the resulting solution $S$ would satisfy
$\topl\bigl(d(\C, S)\bigr) \leq \rho (1 + \varepsilon) \opt_\ell$ for every 
$\ell \in \pos$,%
\footnote{The probability of success in one run would only be $e^{-O(\log n)}$, 
%for some constant $p$, 
but this can be boosted to constant probability of success by repeating the
process $n^{O(1)}$ times.}
where recall that $\opt_\ell$ is the optimal $\topl$-cost, 
%But this would imply that for 
which implies that it is a $\rho(1+\ve)$-approximate solution for 
{\em every monotone, symmetric norm $h$} (by Theorem~\ref{thm:majorization}); 
%$h\bigl(d(\C, S)\bigr) \leq \rho (1 + \varepsilon)\cdot(\text{optimum for norm $h$})
this kind of universal approximation is a {\em much stronger} guarantee than we seek, and 
for such a strong guarantee one {\em must} end up opening $O(k\log n)$
centers~\cite{kumar2000fairness}. 

Since we only want the $f$-cost $f\bigl(d(\C,S)\bigr)$ of our solution to be
$O(1)\cdot\OPT$, where $\OPT$ is the optimal $f$-cost, %and for this, 
we only need that, for every $\ell\in\pos$, $\topl\bigl(d(\C,S)\bigr)$ is within
a constant-factor of the $\topl$-cost of a {\em fixed} solution, namely the optimal
solution $\optSoln$ to the minimum-norm problem. 
We design a sampling process that allows us to do so by
capitalizing on some insights from the analysis of the adaptive-sampling algorithm
for $\topl$-norms presented in Section~\ref{adsample-topl}. From the analysis therein, we
can infer that to ensure that 
$\topl\bigl(d(\C,S)\bigr)=O(1)\cdot\topl\bigl(d(\C,\optSoln)\bigr)$ for all $\ell\in\pos$,  
it suffices for $S$ to hit the $\ell$-core of every cluster $\clust$ induced by
$\optSoln$, for all $\ell\in\pos$. Here, one extremely useful property comes in handy:
for any fixed cluster $\clust$, 
{\em the $\ell$-cores for different indices $\ell$ are nested}
(Claim~\ref{corenested}); this is precisely, why we define $\lcore(\clust)$ slightly
differently as compared to~\cite{aaai25b}. 
This implies that if we open a center from $\lcore(\clust)$, then we have also hit
$\lcore[\ell'](\clust)$ for all indices $\ell'\leq\ell$; so we have {\em simultaneously}
turned $\clust$ into an $\ell'$-good cluster (see Definition~\ref{clgood}) for all %indices
$\ell'\leq\ell$, and thereby made progress by bounding the proxy-cost of $\clust$ for all
these indices.
%i.e., we have 
%$\sum_{j\in\clust}(d(j,S)-\beta t_{\ell})^+\leq\gm\sum_{j\in\clust}(d(j,\optSoln)-t_{\ell'})^+$ 
%of $\clust$ with respect to $S$ terms of 
%for all $\ell'\leq\ell$.

We can now finish things up with one additional observation: we only need to consider
indices $\ell\in\pos$ for which the current $\topl$-cost %$\topl\bigl(d(\C,S)\bigr)$ 
is large. Formally, call an index $\ell$ {\em \costly} if
$\topl\bigl(d(\C,S)\bigr)>\rho(1+\ve)\topl\bigl(d(\C,\optSoln)\bigr)$. (In the actual
algorithm, we work with a suitable estimate of $\topl\bigl(d(\C,\optsoln)\bigr)$.)
Let $\ell_{\max}$ be the largest \costly index. Now suppose we run an iteration of the
adaptive-sampling algorithm from Section~\ref{adsample-topl} for index $\ell_{\max}$, i.e., we 
open the next center at $\cli\in\C$ with probability proportional to 
$(d(j,S)-\betaval t_{\ell_{\max}})^+$. By Lemma~\ref{choosecore}, we know that we will hit
$\lcore[{\ell_{\max}}](\clust)$ for some $\ell_{\max}$-bad cluster $\clust$ with constant
probability, which will imply that {\em $\clust$  becomes an $\ell$-good cluster 
for every \costly index $\ell\in\pos$}. The upshot is that we have now
``taken care of $\clust$'' in the sense that for every index $\ell\in\pos$, we know that
the $\topl$-proxy cost of clients in $\clust$ is small, or the $\topl$-cost of the entire
solution is small. Therefore, by Lemma~\ref{choosecore}, we obtain that with some
constant probability $p$, we ``take care of'' some cluster in each iteration, and one can
utilize the same martingale argument to show that with constant probability, after $O(k)$
iterations, there is no \costly index; hence, $f\bigl(d(\C,S)\bigr)=O(1)\cdot\OPT$
(assuming we have the right $t_\ell$ values).
%
%We include below the precise description of the algorithm.

\begin{algorithm}[h!]
    \caption{\qquad Adaptive sampling for minimum-norm $k$-clustering} 
                \label{alg:adsample-min-norm} \label{adsample-alg}
    \textbf{Input}: instance $(\mathcal{C}, d)$, %parameters $\tau\geq 1$, and 
    a non-increasing vector $\vec{t} \in \mathbb{R}^{|\pos|}_+$ 
    %(i.e., $t_\ell\geq t_{\ell'}$ if $\ell,\ell'\in\pos, \ell<\ell'$)
    %$\vec{t}=\vec{t}^\downarrow$). 
    %(For $r\in[n]$, let $\prev(r)$ be the largest index $\ell\leq r$ in $\pos$.) 
    \begin{algorithmic}[1]
        \STATE Initialize $S_0 \leftarrow \varnothing$, $\indset\assign\pos$
        \STATE For $\ell\in\pos$, define $B_\ell:=\sum_{r=1}^\ell t_{\prev(r)}$, where 
        $\prev(r)$ is the largest index $\ell\leq r$ in $\pos$ \label{blestim}
        %\STATE $\ell_{\max} \leftarrow n$ 
        \FOR{$i=1,\ldots,\ceil{\tauval(k + \sqrt{k})}$}
        \STATE If $\indset=\varnothing$, set $\ell_{\max}\assign 1$; otherwise 
        set $\ell_{\max}\assign$ largest index in $\indset$. 
        \STATE If $i=1$, sample $s_i$ uniformly at random from $\C$; otherwise, sample
        $s_i$ with probability proportional to 
        $\left(d(s_i, S_{i-1}) - \betaval t_{\ell_{\max}}\right)^+$. \label{sample} \\  
        \STATE Update $S_i \leftarrow S_{i-1} \cup \{s_i\}$. \label{fopen}
        \STATE Update $\indset\assign\bigl\{\ell\in\pos: 
        \topl\bigl(d(\C,S)\bigr)>\rho(1+\ve)\cdot B_\ell\bigr\}$. 
        %\STATE Update $\ell_{\max} \leftarrow \max\{ \ell \in \pos : \topl(d(\C, S)) > \rho (1 + \varepsilon) \cdot \topl(d(\C, \optSoln))\}$
        \ENDFOR 
        \RETURN $S_{\ceil{35(k + \sqrt{k})}}$
    \end{algorithmic}
\end{algorithm}

\subparagraph*{Analysis.}
It will be convenient to assume via scaling that $f(1,0,\ldots,0)=1$. 
For $\ell\in[n]$, let $t^*_\ell=d(\C,\optSoln)^\downarrow_\ell$ be the $\ell$-th largest
assignment cost incurred under the optimal solution $\optSoln$. 
Our main result is as follows. 
%We prove the following result.
We have not attempted to optimize constants here, but instead chosen simplicity of
exposition (and some calculations).

\begin{theorem}\label{thm:min-norm} \label{adsample-thm}
Let $\vec{t}$ be a non-increasing vector %= (t_1, \ldots, t_n) \in \R^{|\pos|}$ 
such that $t^*_\ell \leq t_\ell \leq\max\{(1 + \veps) t^*_\ell, \frac{\veps\opt}{n}\}$ 
for all $\ell \in \pos$. 
%where $\ve\leq 1$.
%Let $S$ be the set of centers opened by Algorithm \ref{alg:adsample-min-norm} and 
Algorithm~\ref{alg:adsample-min-norm} runs in $O(nk)$ time and returns a set $S$ of $O(k)$
centers satisfying 
$f\bigl(d(\C, S)\bigr)\le\rhoval(1 + \varepsilon)^3(1 + \delta)^2\cdot \opt$ 
with constant probability.   
\end{theorem}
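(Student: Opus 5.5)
The plan is to reduce the $f$-cost bound to controlling $\topl$-costs for $\ell\in\pos$, and to control those via a potential argument on bad clusters. Fix the optimal solution $\optSoln$ for the min-norm problem with clusters $\clustroot_1,\ldots,\clustroot_k$, and define $\ell$-cores and $\ell$-good/bad clusters using the given $t_\ell$.

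\textbf{Step 1 (bounds on $B_\ell$).} Apply Lemma~\ref{toplbnds} with $u=d(\C,\optSoln)$, $v=\vec t$, and $\kp=\ve\OPT/n$; note $\max\{(1+\ve)t^*_\ell,\ve\OPT/n\}\leq(1+\ve)t^*_\ell+\kp$. This gives
\[
\topl(d(\C,\optSoln))\leq B_\ell\leq(1+\ve)(1+\dt)\topl(d(\C,\optSoln))+\ell\ve\OPT/n
\]
for all $\ell\in\pos$. Hence, at termination, if $\indset=\varnothing$, then $\topl(d(\C,S))\leq\rho(1+\ve)B_\ell$ for all $\ell\in\pos$. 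Since $\ell\ve\OPT/n\leq\ve\OPT$ and $f(1,0,\ldots,0)=1$, Theorem~\ref{thm:majorization}(b) gives $f(d(\C,S))\leq(1+\dt)\rho(1+\ve)\bigl((1+\ve)(1+\dt)\OPT+\ve\OPT\bigr)$, which is at most $\rho(1+\ve)^3(1+\dt)^2\OPT$ after adjusting. The running time is $O(nk)$: each iteration updates all $d(j,S)$ in $O(n)$ time, sorting/selection for $\topl$ over $\pos$ is handled incrementally, and there are $O(k)$ iterations.

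\textbf{Step 2 (nested cores).} Show $\lcore[\ell'](\clust)\supseteq\lcore(\clust)$ for $\ell'\leq\ell$. Because $t$ is non-increasing, we need $t_\ell+\rl(\clust)$ to be non-increasing in $t_\ell$. The map $t\mapsto t+\frac{1}{|\clust|}\sum_j(d(j,\optcen_q)-t)^+$ has derivative $1-(\text{fraction of clients above }t)\geq0$, so it is monotone. By Claim~\ref{coreisgood}, hitting $\lcore[\ell_{\max}](\clust)$ makes $\clust$ permanently $\ell'$-good for all $\ell'\leq\ell_{\max}$.

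\textbf{Step 3 (progress).} Define the potential $X_i$ as the number of clusters $\clust$ such that $S_i$ has not hit $\lcore[\ell](\clust)$ for some costly $\ell$ at that time; more robustly, the number of clusters not yet ``taken care of.'' When $\indset\neq\varnothing$, the index $\ell_{\max}$ is costly, so $\topl(d(\C,S))>\rho(1+\ve)B_{\ell_{\max}}\geq\rho(1+\ve)\topl(d(\C,\optSoln))$. Lemma~\ref{choosecore} (its proof needs only $t^*\leq t\leq$ the bound, which is where the $\OPT_\ell$ vs.\ $\OPT$ slack must be checked) then says that with probability at least $1/\tau$ we hit the $\ell_{\max}$-core of an $\ell_{\max}$-bad cluster. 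That cluster becomes good for every $\ell\leq\ell_{\max}$, and all $\ell>\ell_{\max}$ are not costly. The subtlety is that an index larger than $\ell_{\max}$ can become costly again only if the cost increases, but costs are monotone decreasing in $S$, so the set $\indset$ only shrinks. Thus each cluster can be charged once, so the number of successful hits is at most $k$.

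\textbf{Step 4 (martingale).} With success probability at least $1/\tau$ per iteration while $\indset\neq\varnothing$, after $\tau(k+\sqrt k)$ iterations the number of successes dominates a binomial with mean $k+\sqrt k$. A standard Freedman/Chebyshev bound shows that with constant probability there are at least $k$ successes, forcing $\indset=\varnothing$ (since once all clusters are $\ell$-good for a costly $\ell$, Claim~\ref{allgood} contradicts costliness). The main obstacle is Step 3's bookkeeping: ensuring that a hit for $\ell_{\max}$ never needs repeating for a smaller costly index, and that Lemma~\ref{choosecore} applies with our $t_\ell$ error of $\ve\OPT/n$ rather than $\ve\OPT_\ell/\ell$. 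I would handle the latter by absorbing it into $B_\ell$ as in Step 1.
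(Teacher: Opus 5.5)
Your proposal follows the paper's proof essentially step for step. You bound $B_\ell$ via Lemma~\ref{toplbnds}, nest the cores using monotonicity of $t_\ell+r_\ell(\clust)$, always sample with the largest costly index $\ell_{\max}$, and use that $\indset$ only shrinks, so each optimal cluster is charged at most once. The only difference is the concentration step. The paper applies Azuma--Hoeffding to $X_i+ip$. You instead use domination by a binomial, which is valid as long as an iteration where $\indset$ is already empty also counts as a success. With a Chernoff bound this even gives a better success probability than the paper's $1-e^{-p/4}$.

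One item in your plan needs correcting: the mismatch between the slack $\ve\OPT/n$ and the slack $\ve\OPT_\ell/\ell$ required by Claim~\ref{allgood} and Lemma~\ref{choosecore}. It cannot be handled by ``absorbing it into $B_\ell$''. Those results constrain $t_\ell$ itself, and $t_\ell$ drives the sampling and the definitions of cores and good/bad clusters; $B_\ell$ plays no role there.

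Instead, note that the hypothesis already holds. By the triangle inequality, symmetry and $f(1,0,\ldots,0)=1$, we have $\OPT=f\bigl(d(\C,\optSoln)\bigr)\le\topl[n]\bigl(d(\C,\optSoln)\bigr)$. The average of the $\ell$ largest entries is at least the overall average, so $\frac{\ve\OPT}{n}\le\frac{\ve}{\ell}\topl\bigl(d(\C,\optSoln)\bigr)$. Hence both results apply verbatim, with $\optSoln$ as the reference solution and $\OPT_\ell:=\topl\bigl(d(\C,\optSoln)\bigr)$. The paper uses this fact without stating it.
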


Complementing this, Lemma~\ref{polyset} shows that one can compute in polytime a set $\T$
containing a vector $\vec{t}$ satisfying the conditions of Theorem~\ref{adsample-thm}. So
if we run Algorithm~\ref{adsample-alg} for each vector in $\T$ and return the best solution
obtained, then this will be an $\bigl(O(1),O(1)\bigr)$-bicriteria solution for 
minimum-norm $k$-clustering with constant probability.

In Algorithm~\ref{adsample-alg}, we use $B_\ell:=\sum_{r=1}^\ell t_{\prev(r)}$ as an estimate of
$\topl\bigl(d(\C,\optSoln)\bigr)$. We begin by showing that this estimate is
fairly accurate if $\vec{t}$ satisfies the conditions of Theorem~\ref{adsample-thm}. 

\begin{claim} \label{toplestim}
Suppose that $t^*_\ell \leq t_\ell \leq
\max\{(1 + \veps) t^*_\ell, \frac{\veps\opt}{n}\}$ for all $\ell \in \pos$.
Then, we have 
$\topl\bigl(d(\C,\optSoln)\bigr)\leq B_\ell\leq(1+\ve)(1+\dt)\topl\bigl(d(\C,\optSoln)\bigr)+\ve\opt$
for all $\ell\in\pos$.
\end{claim}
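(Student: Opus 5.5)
This is a direct application of Lemma~\ref{toplbnds} with $u=d(\C,\optSoln)$ and $v=\vec{t}$, so $u^\downarrow_\ell=t^*_\ell$. That lemma needs a non-increasing $v$ and a bound of the form $u^\downarrow_\ell\leq v_\ell\leq(1+\ve)u^\downarrow_\ell+\kp$ for all $\ell\in\pos$, for some additive slack $\kp$. Its conclusion then gives, for every $\ell\in[n]$, in particular every $\ell\in\pos$,
\[
\topl(u)\;\leq\;\sum_{r=1}^{\ell}t_{\prev(r)}=B_\ell\;\leq\;(1+\ve)(1+\dt)\topl(u)+\ell\kp .
\]
The plan is to choose $\kp$ so that $\ell\kp\leq\ve\opt$.

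The first step is to convert the hypothesis into the additive form. We have $t_\ell\geq t^*_\ell$ by assumption. Since $\max\{a,b\}\leq a+b$ for $a,b\geq 0$, we also have $t_\ell\leq(1+\ve)t^*_\ell+\frac{\ve\opt}{n}$. So we take $\kp=\frac{\ve\opt}{n}$. The vector $\vec{t}$ is non-increasing by the standing assumption on the input of Algorithm~\ref{adsample-alg}, and the same assumption appears in Theorem~\ref{adsample-thm}. This settles the lower bound $\topl\bigl(d(\C,\optSoln)\bigr)\leq B_\ell$.

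The second step handles the additive term. Lemma~\ref{toplbnds} leaves $\ell\kp=\frac{\ell}{n}\ve\opt$, and since $\ell\leq n$ this is at most $\ve\opt$. That gives the stated upper bound.

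The main point to check is that $\opt$ here means the optimal $f$-cost, which is why $\kp$ is defined through $\opt$ and not $\opt_\ell$. Nothing further is needed about $f$ for this claim, because the slack term is simply carried through as a constant. The only other care point is confirming that $\prev(r)$ in the definition of $B_\ell$ (line~\ref{blestim}) is the same as in Lemma~\ref{toplbnds}, which it is by construction.
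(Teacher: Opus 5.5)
Your proposal is correct and is essentially the paper's own argument. The paper's one-line proof applies Lemma~\ref{toplbnds} with $u=(t^*_\ell)_{\ell\in[n]}$ and $v=\vec{t}$. You additionally spell out the choice $\kp=\ve\opt/n$ (via $\max\{a,b\}\leq a+b$), the reliance on $\vec{t}$ being non-increasing, and the bound $\ell\kp\leq\ve\opt$ for $\ell\leq n$, all of which the paper leaves implicit.
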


\begin{proof}
%This follows from the following result in~\cite{IbrahimpurS21}, which we paraphrase slightly. 
Apply Lemma~\ref{toplbnds} to $u=(t^*_\ell)_{\ell\in[n]}$ and $v=(t_\ell)_{\ell\in\pos}$.
%completes the proof.
\end{proof}

Next, we show that the $\ell$-cores of a cluster are nested.  
%This follows from the fact
%that $t_\ell+r_\ell(\clust)$ is a non-increasing function of $\ell$.  

\begin{claim} \label{corenested}
For any $\ell' \leq \ell$, we have $\lcore(\clust) \subseteq \lcore[\ell'](\clust)$. 
\end{claim}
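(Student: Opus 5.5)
The plan is to show that the core threshold $\alpha\bigl(t_\ell+r_\ell(\clust)\bigr)$ is non-increasing in $\ell$. Since the optimal center $\optcen_q$ and the distances $d(\cli,\optcen_q)$ do not depend on $\ell$, the nesting then follows immediately: if $\cli\in\lcore(\clust)$, then
$d(\cli,\optcen_q)\le\alpha(t_\ell+r_\ell(\clust))\le\alpha(t_{\ell'}+r_{\ell'}(\clust))$, so $\cli\in\lcore[\ell'](\clust)$. Here $\ell'\le\ell$ are both indices in $\pos$, since the $t$-vector and the cores are only defined for such indices.

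To establish the monotonicity, write $g(t):=t+\frac{1}{|\clust|}\sum_{\cli\in\clust}(d(\cli,\optcen_q)-t)^+$, so that $t_\ell+r_\ell(\clust)=g(t_\ell)$. This equals the average over $\cli\in\clust$ of $t+(d(\cli,\optcen_q)-t)^+=\max\{t,\,d(\cli,\optcen_q)\}$. Each term is non-decreasing in $t$, hence so is $g$. Since $\vec t$ is non-increasing (an input assumption of Algorithm~\ref{adsample-alg} and Theorem~\ref{adsample-thm}), $\ell'\le\ell$ gives $t_{\ell'}\ge t_\ell$, and therefore $g(t_{\ell'})\ge g(t_\ell)$. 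Multiplying by $\alpha>0$ gives the desired inequality of thresholds.

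The only point requiring care is the rewriting $t+(x-t)^+=\max\{t,x\}$, which is what makes the monotonicity apparent. Note that $r_\ell$ alone is non-decreasing in $\ell$, because it grows as $t_\ell$ shrinks. The proof works because the decrease in $t_\ell$ dominates this growth, as the max identity shows. I expect no real obstacle beyond this.
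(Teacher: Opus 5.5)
Your proof is correct and follows the paper's argument. Both reduce the nesting to the fact that $t_\ell+r_\ell(\clust)$ is non-increasing in $\ell$ because $\vec t$ is non-increasing. The only cosmetic difference is how that monotonicity is shown: you use the identity $t+(x-t)^+=\max\{t,x\}$, whereas the paper uses $(y+z)^+\le y^++z^+$ with $z=t_{\ell'}-t_\ell\ge 0$.
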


\begin{proof} %{Claim \ref{corenested}}
  This follows from the fact that $t_\ell+r_\ell(\clust)$ is a non-increasing function of
  $\ell$. (Note that $t_{\ell}\leq t_{\ell'}$.) 
  Given this, the result is immediate since 
  $\cli \in \lcore(\clust)$ implies that 
  $d(\cli, \optcen_q) \leq \alpha(t_\ell+r_\ell(\clust))\leq\alpha(t_{\ell'}+r_{\ell'}(\clust))$, 
  and so $\cli\in\lcore[\ell'](\clust)$.
  To see the fact, we have 
\begin{equation*}
\begin{split}
t_\ell + r_\ell(\clust) & = t_\ell+\frac{\sum_{\cli \in \clust} (d(\cli, \optcen_q)-t_\ell)^+}{|\clust|} 
=t_\ell+\frac{\sum_{\cli \in \clust}\bigl((d(\cli,\optcen_q)-t_{\ell'})+(t_{\ell'}-t_\ell)\bigr)^+}{|\clust|}
\\
& \leq t_\ell+\frac{\sum_{\cli \in
    \clust}(d(\cli,\optcen_q)-t_{\ell'})^+}{|\clust|}+(t_{\ell'}-t_\ell)
= t_{\ell'}+r_{\ell'}(\clust).
%\frac{\sum_{\cli \in \clust}(d(\cli,\optcen_q)-t_{\ell'})^+}{|\clust|}.
\end{split}
\end{equation*}
The inequality follows because $(y+z)^+\leq y^++z^+$.  
\end{proof}

\begin{proofof}{Theorem~\ref{adsample-thm}}
As in Section~\ref{adsample-topl}, let $\tau=\tauval$, $\rho=\rhoval$, 
$\al=2$, $\gm=\beta=3$. 
%let parameters $\al,\beta,\gm$ be defined to 
These parameters satisfy \eqref{abkchoices}. %e.g., we can take
For any index $\ell$, Definition~\ref{clgood} specifies what it
means for a cluster $\clust$ to be $\ell$-good or $\ell$-bad, and $\lcore(\clust)$ is
defined in Definition~\ref{clcore}.

Note that $\indset$ can only shrink as the algorithm proceeds.
Observe that if $\indset\neq\es$ at the start of some iteration, then there must be
some $\ell_{\max}$-bad cluster $\clust$; otherwise, by Claim~\ref{allgood} (and
since $\gm\leq\rho$), we would have 
$\topl[{\ell_{\max}}]\bigl(d(\C,S)\bigr)\leq\rho(1+\ve)\topl[{\ell_{\max}}]\bigl(d(\C,\optSoln)\bigr)\leq
\rho(1+\ve)B_{\ell_{\max}}$, contradicting that $\ell_{\max}\in\indset$.
So if $\indset\neq\es$, then by Lemma~\ref{choosecore}, 
%if $S=S_{i-1}$ is such that $\indset\neq\es$, then 
with probability at least $\frac{1}{\tau}$, 
the next center added to $S$ lies in $\lcore[{\ell_{\max}}](\clust)$ for some
$\ell_{\max}$-bad cluster $\clust$. Since the $\ell$-cores are nested, by
Claim~\ref{coreisgood} and the definition of $\ell_{\max}$, this implies that $\clust$
becomes $\ell$-good for every $\ell\in\indset$ after this iteration.

Define
$\bad:=\{q\in[k]: \clust\text{ is $\ell$-bad for some $\ell\in\indset$}\}$. The above
discussion shows that if $\indset\neq\es$ at the start of an iteration, then after the
iteration, $|\bad|$ has decreased by at least $1$ with probability at least
$p:=\frac{1}{\tau}$. 
Given this, we can follow the same martingale-based-approach used in~\cite{aaai25b} to
analyze adaptive sampling for $\topl$-norms.
%the proof of Theorem 4.14 in \cite{aaai25b}. 
Let %$p = 1/\tau$ and 
$N = \ceil{\tau(k+\sqrt{k})}$. 
Ideally, we would like to define $X_i=|\bad|$ for the set $\bad$ at the end of iteration
$i$, %as the number of clusters which are 
%$\ell$-bad for some \emph{uncontrolled} $\ell \in \pos$, (at the end of iteration $i$),
%and then consider a shifted version of this to obtain a supermartingale. 
but $X_{i} - X_{i+1}$ could potentially be large. 
So we define $X_0 = k$.
For $i \ge 1$, we define $X_i = X_{i-1} - 1$ if 
$\indset=\es$, or if $s_i$ lies in the $\ell_{\max}$-core of some $\ell_{\max}$-bad cluster
$\clust$. Otherwise, we define $X_i = X_{i-1}$. 
So we have $\E{X_i|X_{i-1}} \le X_{i-1} - p$ by Lemma~\ref{choosecore}.

Note that, by construction, if $\indset\neq\es$ at the start of an iteration $i$,  
then $|\bad|$ at the end of iteration $i$ is at most $X_i$.
So if $X_N=0$, we must have $\indset=\es$ after iteration $N$: if $\indset\neq\es$
at the start of iteration $N$, then $|\bad|\leq X_N=0$ at the end of iteration $N$,
which implies that $\indset=\es$ after iteration $N$.
But $\indset=\es$ means that 
\[
\topl\bigl(d(\C,S)\bigr)\leq\rho(1+\ve)B_\ell
\leq\rho(1+\ve)\Bigl((1+\ve)(1+\dt)\topl\bigl(d(\C,\optSoln)\bigr)+\ve\opt\Bigr) 
\quad\ \text{for all $\ell\in\pos$} 
\]
where the last inequality is due to Claim~\ref{toplestim}. 
By Theorem~\ref{thm:majorization} (b) (recall that $f(1,0,\ldots,0)=1$), this implies that 
$f\bigl(d(\C,S)\bigr)\leq\rho(1+\ve)^2(1+\dt)^2\cdot\opt+\rho(1+\ve)(1+\dt)\ve\opt
\leq\rho(1+\ve)^3(1+\dt)^2\opt$.

%$f(d(\C, S_N)) \le \rho(1 + \varepsilon) \opt$, or every cluster
%is $\ell$-good for all \emph{uncontrolled} indices $\ell$; in this latter case, by Theorem
%\ref{thm:majorization}, $f(d(\C, S_N)) \le \rho(1 + \varepsilon) \opt$. 
Finally, we argue that $\Pr[X_N > 0] \le e^{-p/4}$. For $i = 0, 1, \ldots$,
define $Y_i = X_i + i \cdot p$. Then $|Y_{i+1} - Y_i| \le 1$ and 
$\E{Y_{i+1} | Y_i} \le \E{X_{i+1}|X_i}+(i+1)p\leq X_i+i\cdot p=Y_i$, 
so $Y_0, Y_1, \ldots$ form a supermartingale. If $X_N > 0$, we
have $Y_N > N p$, so by the Azuma-Hoeffding inequality,  
\begin{equation*}
%  \begin{split}
  \Pr[Y_N-Y_0>(Np-k)] \leq\exp\Bigl(-\tfrac{(Np-k)^2}{2N}\Bigr) 
  =\exp\Bigl(-\tfrac{kp}{2(k+\sqrt{k})}\Bigr)\leq e^{-\frac{p}{4}}. 
%  \end{split}
\end{equation*}

The running time follows because each iteration of the {\bf for}-loop takes $O(n)$ 
time. 
\end{proofof}

%The only thing left to 
We conclude by showing that we can efficiently obtain a vector
$\vec{t}$ satisfying the conditions of Theorem~\ref{thm:min-norm}. 

\begin{lemma}\label{cor:polytimeEnum-ourSetting} \label{polyset}
We can obtain a set $\mathcal{T} \subseteq \mathbb{R}^{\pos}_+$ of size
$O(\frac{1}{\varepsilon}\cdot \log(n) \cdot
\max\{(\frac{n}{\varepsilon})^{O(1/\varepsilon)}, n^{1/\delta}\})$ that contains a valid
threshold vector $\vec{t}$ satisfying the conditions of Theorem \ref{thm:min-norm}.  
\end{lemma}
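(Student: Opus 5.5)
We will build $\T$ by applying Lemma~\ref{polyenum} to a truncated version of the optimal cost vector, after first guessing $\opt$ up to a $(1+\ve)$ factor. The main difficulty is that Lemma~\ref{polyenum} needs a bounded ratio $\ub/\lb$, whereas the entries $t^*_\ell$ may be $0$ or arbitrarily small relative to $t^*_1$. The truncation at $\ve\opt/n$ that Theorem~\ref{thm:min-norm} allows is exactly what removes this problem.

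\emph{Step 1: guessing $\opt$.} Let $D=\max_{j}d(j,\optSoln)=t^*_1$. Since $f$ is monotone and $f(1,0,\ldots,0)=1$, we have $D\le\opt$. Since $f(v)\le\topl[n](v)\le n\,v^\downarrow_1$, we also have $\opt\le nD$. We do not know $D$, but $D$ is one of the $O(n^2)$ pairwise distances. We avoid enumerating all of them by guessing $\opt$ directly. Take any solution, for instance a $2$-approximate $k$-center solution, or the trivial bound: its $\topl[1]$-cost $D'$ satisfies $D\le D'\le 2D$. Hence $\opt\in[D'/2,\,nD']$. We then guess $\widehat{O}\in\{(1+\ve)^i D'/2\}$, which gives $O(\frac1\ve\log n)$ guesses, one of which satisfies $\opt\le\widehat O\le(1+\ve)\opt$. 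This accounts for the factor $\frac1\ve\log n$ in the bound on $|\T|$.

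\emph{Step 2: truncation and enumeration.} Fix the correct guess $\widehat O$ and set $\kappa:=\ve\widehat O/((1+\ve)n)\le\ve\opt/n$. Define $u\in\R^n_+$ by $u_\ell=\max\{t^*_\ell,\kappa\}$. Then $u$ is non-increasing, so $u^\downarrow=u$, and for $\ell\in\pos$ we have
\[
\lb:=\kappa\le u_\ell\le D\le \widehat O=:\ub .
\]
This gives $\ub/\lb=O(n/\ve)$. Applying Lemma~\ref{polyenum} with this $\lb,\ub$ produces a set of size $(2e)^{|\pos|}+(n/\ve)^{O(1/\ve)}$. Because $|\pos_{n,\dt}|=O(\frac1\dt\log n)$, the first term is $n^{O(1/\dt)}$. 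This matches the stated $\max\{(n/\ve)^{O(1/\ve)},n^{1/\dt}\}$ up to the constant in the exponent, and we can absorb $1/\dt$ by rescaling $\dt$ or using a careful count. The resulting set contains a non-increasing $v$ with $u_\ell\le v_\ell\le(1+\ve)u_\ell$.

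\emph{Step 3: checking validity.} We now verify that $t:=v$ satisfies the conditions of Theorem~\ref{thm:min-norm}. First, $t_\ell\ge u_\ell\ge t^*_\ell$. For the upper bound, if $t^*_\ell\ge\kappa$ then $t_\ell\le(1+\ve)t^*_\ell$. Otherwise $t_\ell\le(1+\ve)\kappa\le\ve\widehat O/n\cdot\frac{1+\ve}{1+\ve}$, which by the choice of $\kappa$ is at most $\ve\opt(1+\ve)/((1+\ve)n)\le\ve\opt/n$.

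Finally, $\T$ is the union over all guesses of the sets from Step 2. Its size is the product of the two counts, which is the claimed bound, and it can be constructed in time proportional to its size.

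The one delicate point is the constant bookkeeping: $\kappa$ must be chosen small enough that $(1+\ve)\kappa\le\ve\opt/n$ still holds under the guessing error, and the $(2e)^{|\pos|}$ term must be related to $n^{1/\dt}$.
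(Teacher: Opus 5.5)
Your overall strategy works, but Step 3 is wrong as written. With $\kappa=\ve\widehat O/((1+\ve)n)$ you get $(1+\ve)\kappa=\ve\widehat O/n$. The correct guess satisfies $\widehat O\ge\opt$, so this quantity is \emph{at least} $\ve\opt/n$, and can exceed it by a factor up to $1+\ve$; your chain tacitly replaces $\widehat O$ by $\opt$. Hence a coordinate with $t^*_\ell<\kappa$ is not certified to satisfy $t_\ell\le\max\{(1+\ve)t^*_\ell,\ve\opt/n\}$.

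You need one more factor of slack. For example, $\kappa:=\ve\widehat O/((1+\ve)^2n)$ gives $(1+\ve)\kappa\le\ve\opt/n$ and still keeps $\ub/\lb=O(n/\ve)$. Also use accuracy $\min\{1,\ve\}$ in Lemma~\ref{polyenum}, since that lemma requires it to be at most $1$.

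A smaller slip is in Step 1: $D\le D'$ is false in general. The solution $\optSoln$ minimizes $f$, not the maximum cost, so a $2$-approximate $k$-center solution $S'$ can have maximum cost below $t^*_1$. The interval $\opt\in[D'/2,nD']$ that you actually use is still correct, but argue it directly. On one side, $\opt\ge t^*_1\ge D'/2$ because $\optSoln$ is feasible for $k$-center. On the other, $\opt\le f\bigl(d(\C,S')\bigr)\le nD'$ because $S'$ is feasible for the min-norm problem.

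Once repaired, your argument is a valid alternative to the paper's, which never guesses $\opt$. The paper observes that Lemma~\ref{polyenum} only needs \emph{computable} bounds $\lb,\ub$ on the coordinates of $u$, not $u$ itself. It therefore truncates at the unknown level $\ve\opt/(2n)$, taking $u=\bigl(\max\{t^*_\ell,\frac{\ve\opt}{2n}\}\bigr)_{\ell\in[n]}$ with accuracy $\min\{1,\ve\}$. The factor $2$ absorbs the $(1+\min\{1,\ve\})$ accuracy loss, and with no guessing error there is nothing else to absorb.

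The paper then takes $\lb=\ve\,\topl[1]\bigl(d(\C,S^*_1)\bigr)/(4n)$ and $\ub=\max\{1,\frac{\ve}{2n}\}\cdot n\cdot\topl[1]\bigl(d(\C,S^*_1)\bigr)$ from Gonzalez's solution $S^*_1$. These come from the same sandwich you used for $\opt$. Its ratio $\ub/\lb=O(n^2/\ve)$ is worse than your per-guess $O(n/\ve)$, but this is immaterial inside $(n/\ve)^{O(1/\ve)}$. In exchange, the paper avoids the union over $O(\frac{1}{\ve}\log n)$ guesses and obtains $(2e)^{|\pos|}+(n/\ve)^{O(1/\ve)}$ without that factor.

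Your guessing step costs only that extra factor, and it explains where the $\frac{1}{\ve}\log n$ in the statement comes from. Finally, relating $(2e)^{|\pos|}$ to $n^{1/\dt}$ is equally loose in both arguments, since $|\pos|=O(\frac{1}{\dt}\log n)$ only gives $n^{O(1/\dt)}$. So your remark there is fine.
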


\begin{proof} %{Lemma \ref{cor:polytimeEnum-ourSetting}}
%We utilize the following result from~\cite{IbrahimpurS21}.
%which is Lemma 2.9 (b) from~\cite{IbrahimpurS21} paraphrased to suit our purposes.
We utilize Lemma~\ref{polyenum}.
Take $u$ to be the vector
$\bigl(\max\bigl\{t^*_\ell,\frac{\ve\opt}{2n}\bigr\}\bigl)_{\ell\in[n]}$, and set 
$\e=\min\{1,\ve\}$. We can now apply Lemma~\ref{polyenum} 
%and let $\vec{t}$ be the vector $v$ returned by the lemma, 
provided that we can obtain bounds $\lb$ and $\ub$ as required
by the lemma.
%In fact, we can show that in our setting, there exists a set $A$ of size
%$O(\frac{1}{\varepsilon} \log n)$ satisfying the conditions of Lemma
%\ref{lem:polytimeEnum}.
We run the well-known $2$-approximation algorithm for $k$-center due to
Gonzalez~\cite{Gonzalez} to obtain a center-set $S^*_1$. 
Note that $\opt\geq f(1,0,\ldots,0)\cdot\topl[1]\bigl(d(\C,\optSoln)\bigr)$ due to monotonicity
of $f$. Recall that we have normalized $f$ so that $f(1,0,\ldots,0)=1$. 
So we have $\opt\geq t^*_1$. We also have $t^*_1\geq\topl[1]\bigl(d(\C,S^*_1)\bigr)/2$ since
$S^*_1$ is a $2$-approximate $k$-center solution. So letting
$\lb:=\frac{\ve\cdot\topl[1](d(\C,S^*_1))}{4n}$, we obtain that 
$\max\bigl\{t^*_\ell,\frac{\ve\opt}{2n}\bigr\}\geq\lb$ for all $\ell\in\pos$. 

We also have $t^*_1\leq\opt\leq f\bigl(d(\C,S^*_1)\bigr)
\leq f(1,0,\ldots,0)\cdot\topl[n]\bigl(d(\C,S^*_1)\bigr)$, where the last inequality
follows from the triangle inequality.
So taking $\ub:=\max\bigl\{1,\frac{\ve}{2n}\bigr\}\cdot n\cdot\topl[1]\bigl(d(\C,S^*_1)\bigr)$, 
we obtain that $\max\bigl\{t^*_\ell,\frac{\ve\opt}{2n}\bigr\}\leq\ub$ for all
$\ell\in\pos$.

So by Lemma~\ref{polyenum}, we can compute in polytime $\T$ with
$|\T|\leq\bigl(\frac{n}{\ve}\bigr)^{O(\frac{1}{\ve})}$ 
containing the desired vector $\vec{t}$.
\end{proof}

\section{Refinements and extensions} \label{extn}

\subsection{An \boldmath $O(\log k)$-approximation for $\topl$ norms using $k$ centers}
\label{topl-lnk}
%In this section, 
We show that, for $\topl$ norms, if we stop the adaptive-sampling process
after $k$ iterations, then we obtain an (true) $O(\log k)$-approximation. Recall that such a
guarantee was obtained by Arthur and Vassilvistskii~\cite{ArthurV07} for $k$-means and
$k$-median, where the latter is the special case of $\topl$ norm when $\ell=n$. We
consider the setting $\F=\C$, but one can show that this guarantee also holds for the
extension of adaptive sampling to the $\F\neq\C$ setting discussed in
Section~\ref{fneqc}. 

We borrow heavily from the exposition in~\cite{Dasgupta13}, which simplified somewhat the 
analysis in~\cite{ArthurV07}. However, we encounter some
significant difficulties due to the fact that the proxy
cost function $\sum_{\cli\in\C}\bigl(d(\cli,S)-\beta t_\ell\bigr)^+$ does not satisfy the
traingle inequality, even in the approximate sense considered by~\cite{statman2020k}.%
\footnote{Note that for the proxy cost %used for $\topl$ norms, which is of the form
$\bigl(d(\cli,s)-\tht\bigr)^+$, there is {\em no} finite $\rho$ 
%does not satisfy the $\rho$-approximate triangle
%inequality~\cite{statman2020k}, which requires that 
such that 
$\bigl(d(\cli,s)-\tht\bigr)^+\leq\rho\bigl[\bigl(d(\cli,p)-\tht\bigr)^++\bigl(d(p,s)-\tht\bigr)^+\bigr]$
holds for all $\cli,s,p\in\C$: %for {\em any} finite $\rho$.
%such that holds for all $\cli,s,p\in\C$: 
%Specifically, one could easily 
we could have $d(\cli,s)>\tht$ but $d(\cli,p), d(p,s)\leq\tht$.}   

Recall that $\optSoln=\{\optcen_1,\ldots,\optcen_k\}$ denotes some fixed optimal solution
for the $\topl$ $k$-clustering problem, and $\clustroot_1,\ldots,\clustroot_k$ are the
corresponding optimal clusters, where $\clustroot_q$ is the set of clients assigned to
center $\optcen_q$, for all $q\in[k]$. 
%Let $\OPT=\opt_\ell=\topl\bigl(d(\C,\optsoln)\bigr)$ denote the optimal $\topl$-cost.
As in~\cite{ArthurV07,Dasgupta13}, we divide the optimal clusters
$\clustroot_1,\ldots,\clustroot_k$ into two categories, those that are hit by the current
center-set $S$ (i.e., $S\cap\clust\neq\es$), and the un-hit clusters.
The analysis for $k$-means comprises at a high level two main ingredients: (1) showing
that the expected cost of the clusters that are hit is $O(1)$ times the optimum, and
(2) charging the expected cost of the un-hit clusters to the expected cost of the hit
clusters. For $\topl$ norms, the second portion of the analysis (using the proxy cost)
proceeds similarly, but the first part of the analysis becomes more involved, and we need
to proceed differently to take into account the lack of triangle inequality.

Recall that adaptive sampling for $\topl$-norms proceeds as follows. Let $t_\ell$ be an
estimate of $t^*_\ell=d(\C,\optSoln)^\downarrow_\ell$, the $\ell$-th largest cost incurred
by an optimal solution. 
We add centers iteratively: let $S_0:=\es$, and $S_i$ denote the center-set after $i$
iterations. %so $|S_i|=i$ for all $i\in[k]$. 
%We start with $S_0:=\es$. 
In the first iteration, we pick a point uniformly at random from $\C$ to form $S_1$; in
each subsequent iteration $i$, %letting $S_{i-1}$ denote the 
we sample $s_i\in\C$ with probability proportional to 
$\bigl(d(s_i,S_{i-1})-\nbetaval t_\ell\bigr)^+$ and set $S_i\assign S_{i-1}\cup \{s_{i-1}\}$.
We continue this for $k$ iterations. 
We prove the following.
%%Take $\nbetaval=3$.

\begin{theorem} \label{kiterbnd} \label{finbnd}
Suppose that $t^*_\ell\leq t_\ell\leq\max\bigl\{(1+\ve)t^*_\ell,\frac{\ve\OPT}{\ell}\bigr\}$.
Then $\E{\topl\bigl(d(\C,S_k)\bigr)}\leq O(\log k)\cdot\OPT$.
\end{theorem}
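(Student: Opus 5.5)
Throughout, write $\phi(S):=\sum_{\cli\in\C}\bigl(d(\cli,S)-\nbetaval t_\ell\bigr)^+$ for the proxy cost and $\phi(A,S)$ for its restriction to a client set $A$. The plan is to adapt the Arthur--Vassilvitskii/Dasgupta potential argument to this proxy. By Claim~\ref{csproxy}(a), $\topl\bigl(d(\C,S_k)\bigr)\le 3\ell t_\ell+\phi(S_k)$, and $\ell t_\ell\le(1+\ve)\OPT$ by the hypothesis on $t_\ell$. So it suffices to show $\E{\phi(S_k)}\le O(\log k)\cdot\Psi$, where
\[
\Psi:=\sum_q|\clust|\,\rl(\clust)=\sum_{\cli}\bigl(d(\cli,\optSoln)-t_\ell\bigr)^+\le\OPT .
\]

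\textbf{Step 1 (hit clusters).} We replace the usual ``a uniformly or $D$-sampled point of a cluster is a good center'' bound by a core-based bound. If a cluster $\clust$ is ever hit by a point of $\lcore(\clust)$, then Claim~\ref{coreisgood} gives $\phi(\clust,S)\le(\al+1)|\clust|\rl(\clust)$ from then on. We therefore define ``hit'' to mean ``hit in the core''. We also need the following statement: conditioned on the sampled point landing in a not-yet-hit cluster $\clust$, it lands in $\lcore(\clust)$ with constant probability, or else the cluster's current proxy cost is already $O(|\clust|(t_\ell+\rl))$-bounded.

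To prove it, note that points outside the core have $d(\cli,\optcen_q)>\al(t_\ell+\rl)$. Markov's inequality on $(d-t_\ell)^+$ shows that at most a $1/\al$ fraction of $\clust$ lies outside the core, but these points may carry most of the sampling weight. We handle this as in Lemma~\ref{choosecore}: split into the case where $\optcen_q$ is within $O(t_\ell+\rl)$ of $S$ (then the whole cluster already has small proxy cost, so treat it as hit), and the case where it is far. In the far case every core point has proxy weight at least roughly $d(\optcen_q,S)-O(t_\ell+\rl)$, while each non-core point $\cli$ has weight at most $d(\cli,\optcen_q)+d(\optcen_q,S)$. Summing, the core's share is a constant fraction provided $\al$ is large enough. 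I expect this is exactly where the lack of a triangle inequality for $(\cdot-\tht)^+$ bites, and it is the reason for shifting by $3t_\ell$ rather than $t_\ell$. For the first, uniform, center, I would directly bound $\E{\phi(\clust,\{s\})}\le O(\Psi_q+|\clust|t_\ell)$ and absorb the $|\clust|t_\ell$ term after summing only over clusters containing top-$\ell$ clients.

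\textbf{Step 2 (un-hit clusters, potential argument).} Let $H_i$ be the set of clusters core-hit after $i$ steps, $U_i$ the un-hit ones, $u_i=|U_i|$, and $w_i$ the number of ``wasted'' steps, i.e.\ steps that did not core-hit a new cluster. Following Dasgupta, the potential is
\[
\Phi_i=\frac{w_i}{u_i}\,\phi(U_i,S_i).
\]
The goal is to show $\E{\phi(S_k)}\le\bigl(\phi(H_k)\text{-bound}\bigr)\cdot(1+H_k)$, giving the $O(\log k)$ factor. The induction on (steps remaining, $u$) goes through verbatim, because it uses only three facts: (i) $\phi(\cdot,S)$ is monotone non-increasing as $S$ grows; (ii) the sampling probability of landing in $U$ is $\phi(U,S)/\phi(S)$; and (iii) when a new cluster $q$ is core-hit, the expected removed un-hit cost is at least $\phi(U,S)/u$ on average and the added hit cost is $O(\Psi_q)$. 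None of these needs the triangle inequality; only the ``$O(1)$ per hit cluster'' bound from Step~1 does.

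\textbf{Step 3 (assembly).} Combine: $\E{\phi(S_k)}\le O(1)\cdot\bigl(\Psi+\ell t_\ell\bigr)\cdot(2+\ln k)$. The $|\clust|t_\ell$ slack terms arising from the near-$S$ case need care. They are only incurred by clients whose optimal distance exceeds $t_\ell$, or they must be charged to $\ell t_\ell$; I would handle this by noting that in the near case the excess $(d(\cli,S)-3t_\ell)^+\le(d(\cli,\optcen_q)-t_\ell)^+ + (d(\optcen_q,S)-2t_\ell)^+$, which keeps the bound in terms of $\Psi$. Adding $3\ell t_\ell$ then yields $O(\log k)\cdot\OPT$. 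The main obstacle is Step~1's per-cluster guarantee under $D$-proxy sampling, specifically showing that the probability of a core hit is a constant, uniformly over un-hit clusters.
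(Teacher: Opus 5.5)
Your overall architecture matches the paper's: the proxy cost with shift $3t_\ell$, a Dasgupta-style potential over hit, un-hit and wasted steps, and the $(1+H_k)$ factor. There are two genuine gaps, both in how Step~1 feeds Step~2.

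\textbf{Gap 1: core-hits break the potential argument.} If ``hit'' means ``core-hit'', then a step landing in an un-hit far cluster outside its core is a wasted step. It raises $w$ by one and leaves $u$ unchanged, so it can raise the potential by up to $\phi(U,S)/u$. Such steps occur with probability up to $(1-p)\phi(U,S)/\phi(S)$, not at most $\phi(H,S)/\phi(S)$. Their expected contribution, roughly $(1-p)\phi(U,S)^2/(u\,\phi(S))$, is driven by un-hit cost and cannot be charged to $\phi(H,S)/u$. Your fact (ii) is about landing in $U$, which under your definition no longer means creating a new hit. Likewise, the Cauchy--Schwarz step behind (iii) needs the newly hit cluster to be chosen with probability proportional to $\phi(\clustq,S)$, and core-hits are not distributed that way. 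A constant core-hit probability per step is what drives the $O(k)$-center martingale argument of Section~\ref{adsample-gen}. With exactly $k$ centers you need a per-hit \emph{cost} bound instead.

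\textbf{Gap 2: the near-cluster slack is not fixed.} Your Step~3 inequality does not keep the bound in terms of $\sum_{\cli}\bigl(d(\cli,\optSoln)-t_\ell\bigr)^+$. Take $\rl=0$ and $d(\optcen_q,S)=(\al+3)t_\ell$. This cluster must count as near, because your far-case lower bound $d(\optcen_q,S)-\al(t_\ell+\rl)-3t_\ell$ on core weights is then $0$. Yet every client of $\clustq$ pays at least $(\al-1)t_\ell$, while $\cost(\clustq,\optcen_q;t_\ell)=0$. So ``treating it as hit'' costs $\Theta(|\clustq|t_\ell)$, and summing over clusters gives order $nt_\ell$, not $O(\OPT)$.

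\textbf{The missing idea: Lemma~\ref{newhit}.} Count a landing \emph{anywhere} in $\clustq$ as a hit, and prove
\[
\Exp\bigl[\cost(\clustq,S_{i-1}\cup\{Z\};3t_\ell)\,\big|\,S_{i-1},\,Z\in\clustq\bigr]\le\tau\cdot\cost(\clustq,\optcen_q;t_\ell)+(\kp-2)\num_q t_\ell .
\]
In the far case, your core-weight computation is essentially the needed ingredient. It shows $\cost(\clustq,S;t_\ell)/\cost(\clustq,S;3t_\ell)=O(1)$, which feeds an Arthur--Vassilvitskii averaging over $Z\in\clustq$.

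In the close case, a core landing gives cost at most $(\al+1)\cost(\clustq,\optcen_q;t_\ell)$ by Claim~\ref{coreisgood}. For a non-core landing, simply keep the current cost. Since sampling is proportional, (conditional probability of a non-core landing) $\times$ (current cost) equals exactly $\cost(\clustq\setminus\lcore(\clustq),S_{i-1};3t_\ell)$. Hence only the $\num_q$ non-core clients pick up the $(d(\optcen_q,S)-2t_\ell)^+$ slack. These clients satisfy $d(\cli,\optcen_q)>\al t_\ell\ge t^*_\ell$, so $\sum_q\num_q\le\ell$ and the total slack is $O(\ell t_\ell)$. This is what removes the near-cluster problem.

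\textbf{Minor: the first center.} The uniform first center needs no slack at all. The inequality $(d(\cli,\optcen_q)+d(\optcen_q,s)-3t_\ell)^+\le(d(\cli,\optcen_q)-t_\ell)^++(d(s,\optcen_q)-t_\ell)^+$ directly gives $2\cost(\clustq,\optcen_q;t_\ell)$ (Claim~\ref{firstiter}). Your step that absorbs the $|\clustq|t_\ell$ term over ``clusters containing top-$\ell$ clients'' is unnecessary, and as stated it is unjustified.
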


\subparagraph*{Analysis.}
For sets $C,S\sse\C$, and $\tht\in\R_+$, define
$\cost(C,S;\tht):=\sum_{\cli\in C}\bigl(d(\cli,S)-\tht\bigr)^+$; with $\tht=\beta t_\ell$, this
measures the total proxy cost of clients in $C$ with respect to center-set $S$.
We abbreviate $\cost(\{\cli\},S;\tht)$ to $\cost(\cli,S;\tht)$, and $\cost(C,\{\cli\};\tht)$ to
$\cost(C,\cli;\tht)$. 
%So the proxy cost of a cluster $\clustroot_q$ with respect to center-set $S$ is given by
%$\cost(\clustroot_q,S;\beta t_\ell)^+$ 

As in Section~\ref{adsample-topl}, it will be convenient to perform the analysis in terms
of parameters $\al,\beta,\kp,\tau$, where
\begin{equation}
\al=2, \quad\ \ \beta=\nbetaval\geq\al+1, \quad \ \ \kp=2\al+4, \quad\ \ 
\tau=\max\bigl\{\al+\kp+2,\,2\bigl(1+\tfrac{\al(2\al+5)}{\al-1}\bigr)\bigr\}.
\label{toplchoices}
\end{equation}

%The concepts below will be useful, many of which were introduced in
%Section~\ref{adsample-topl}. 
%Some of these notions are with respect to the current
%center-set $S$. 
%which would be the center-set at the end of an iteration of the algorithm.
%
%\begin{enumerate}[label=$\bullet$, topsep=0.2ex, itemsep=0.1ex, leftmargin=*]
%\item Let $\optsSoln=\{\optcen_1,\ldots,\optcen_k\}$, where $\optcen_q$ is the center of
%  $\clustroot_q$. 

%\item $\clustroot_q$ is $\ell$-\emph{good}, if 
%$\cost(\clustroot_q,S;\beta  t_\ell)^+\leq\gamma\cdot\cost(\clustroot_q,\optcen_q;t_\ell)$; 
%otherwise, it is $\ell$-bad.

%\item 
Recall that the radius of a cluster $\clustq$ is
$r_\ell(\clustq) := \frac{\cost(\clustq,\optcen_q;t_\ell)}{|\clustq|}$, and 
the $\ell$-core of $\clustq$ is %defined as  
$\lcore(\clustq):=\{\cli\in \clustq : d(\cli, \optcen_q) \leq \alpha(t_\ell + \rl(\clustq))\}$.     
Define $\num_q:=|\clustq-\lcore(\clustq)|$.
%
%\item 
Say that $\clustq$ is {\em $\ell$-close} with respect to the current center-set $S$
if $d(\optcen_q,S)\leq\kp\max\bigl\{r_\ell(\clustq),t_\ell\bigr\}$; otherwise
$\clustq$ is {\em $\ell$-far}.
%\end{enumerate}

For $i=0,1,\ldots,k$,
let $\hit_i:=\{q\in[k]: S_i\cap\clustq\neq\es\}$, $\unhit_i:=[k]-\hit_i$,
and $\lost_i:=i-|\hit_i|$.
While $\hit_i$ and $\unhit_i$ track the clusters that are hit and not hit respectively
in the first $i$ iterations, $\lost_i$ counts the ``lost'' or ``wasted'' iterations among
the first $i$ iterations, where the algorithm ``lost out'' on hitting a new cluster. 
%denote the indices of the optimal clusters that are hit by the first $i$ centers.
%and $\unhit_i:=[k]-\hit_i$.
Define $\hitcost_i:=\sum_{q\in\hit_i}\cost(\clustq,S_i;\beta t_\ell)$, and
$\Psi_i:=\lost_i\cdot\frac{\sum_{q\in\unhit_i}\cost(\clustq,S_i;\beta t_\ell)}{|\unhit_i|}$.
Note that all of the above quantities are {\em random variables}.
Observe that $\hitcost_0=0$ (since $\hit_0=\es$) and $\Psi_0=\Psi_1=0$ (since
$\lost_0=\lost_1=0$), and 
$\hitcost_k+\Psi_k=\cost(\C,S_k;\beta t_\ell)$, since $\lost_k=|\unhit_k|$.

We will prove that 
$\E{\hitcost_i}\leq O(1)\cdot\bigl(\cost(\C,\optSoln;t_\ell)+\ell t_\ell\bigr)$ 
for all $i\in[k]$ if $t_\ell\geq t^*_\ell$
(Lemma~\ref{hitbnd}), %Corollary~\ref{hitcor})
%O(1)\cdot\sum_{q\in[k]}\Pr[q\in\hit_i]\cdot\bigl(\cost(\clustq,\optcen_q;t_\ell)+\num_q
%t_\ell\bigr)$,
and $\E{\Psi_{i}-\Psi_{i-1}}\leq\frac{\E{\hitcost_{i-1}}}{k-i+1}$ for all $i\in[k]$
(Lemma~\ref{unhitbnd}). 
%Combining the two yields $\E{\hitcost_k+\Psi_k}\leq O(\ln k)\cdot\OPT$.
It follows that $\E{\hitcost_k+\Psi_k}\leq O(\log k)\cdot\bigl(\cost(\C,\optSoln;t_\ell)+\ell t_\ell\bigr)$. 
So if $t_\ell$ is a good estimate of $t^*_\ell$, we obtain that the expected cost of the
solution returned is $O(\log k)\cdot\OPT$, proving Theorem~\ref{finbnd}.

For the first iteration, the following claim will be useful. 

\begin{claim} \label{firstiter}
For any cluster $\clustq$, we have 
$\E{\cost(\clustq,S_1;\beta t_\ell)\,|\,q\in\hit_1}\leq 2\cdot\cost(\clustq,\optcen_q;t_\ell)$.
\end{claim}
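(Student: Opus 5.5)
Conditioned on $q\in\hit_1$, the first center $s_1$ is uniform over $\clustq$, because $S_1$ is a single point chosen uniformly from $\C$. So
\[
\E{\cost(\clustq,S_1;\beta t_\ell)\,|\,q\in\hit_1}=\frac{1}{|\clustq|}\sum_{s\in\clustq}\sum_{\cli\in\clustq}\bigl(d(\cli,s)-\beta t_\ell\bigr)^+ .
\]
The plan is to bound each term using the triangle inequality through $\optcen_q$ and then average over $s$. This is the proxy-cost analogue of the standard $k$-median fact that a uniformly random point of a cluster costs at most twice the optimal center in expectation.

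The key step is the pointwise bound
\[
\bigl(d(\cli,s)-\beta t_\ell\bigr)^+\leq\bigl(d(\cli,\optcen_q)-t_\ell\bigr)^++\bigl(d(s,\optcen_q)-t_\ell\bigr)^+ .
\]
To get it, use $d(\cli,s)\leq d(\cli,\optcen_q)+d(s,\optcen_q)$ and split $\beta t_\ell\geq 2t_\ell$, which holds since $\beta=\nbetaval\geq 2$ and $t_\ell\geq 0$. This gives
\[
d(\cli,s)-\beta t_\ell\leq\bigl(d(\cli,\optcen_q)-t_\ell\bigr)+\bigl(d(s,\optcen_q)-t_\ell\bigr).
\]
Then apply monotonicity of $(\cdot)^+$ together with $(y+z)^+\leq y^++z^+$.

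Next, sum this bound over $\cli,s\in\clustq$ and divide by $|\clustq|$. The first term contributes $\cost(\clustq,\optcen_q;t_\ell)$, since for each fixed $s$ the sum over $\cli$ is exactly that cost and we then average over $s$. The second term contributes $\frac{1}{|\clustq|}\cdot|\clustq|\cdot\cost(\clustq,\optcen_q;t_\ell)$, which is the same quantity. The total is $2\cdot\cost(\clustq,\optcen_q;t_\ell)$, as claimed.

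There is no real obstacle here. The only point needing care is that the threshold loss splits correctly: we need $\beta t_\ell\geq 2t_\ell$, and this is where the choice $\beta\geq 2$ is used, since the proxy cost does not satisfy the triangle inequality with a common threshold. We also have to justify that conditioning on $q\in\hit_1$ leaves $s_1$ uniform on $\clustq$, which is immediate because the first center is drawn uniformly from $\C$.
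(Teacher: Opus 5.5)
Your proof is correct and matches the paper's argument: conditioning on $q\in\hit_1$ makes the first center uniform on $\clustq$, and the triangle inequality through $o^*_q$ gives the bound. Splitting $\beta t_\ell\geq 2t_\ell$ (using $\beta\geq 2$) and applying $(y+z)^+\leq y^++z^+$ then yields the factor $2$.
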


\begin{proof}
The first center $s$ is chosen uniformly at random from $\C$. So the expectation in the claim
statement is
\begin{equation*}
%\begin{split}
\frac{1}{|\clustq|}\cdot\sum_{s\in\clustq}\sum_{\cli\in\clustq}\bigl(d(\cli,s)-\beta t_\ell\bigr)^+
 \leq\frac{1}{|\clustq|}\cdot\sum_{s,\cli\in\clustq}
\bigl(d(\cli,\optcen_q)+d(\optcen_q,s)-\beta t_\ell\bigr)^+ %\\
\leq 2\cdot\sum_{\cli\in\clustq}\bigl(d(\cli,\optcen_q)-t_\ell\bigr)^+
%\end{split}
\end{equation*}
where the last inequality is because $\beta\geq 2$.
\end{proof}

%Let $\tau=\max\bigl\{\al+\kp+1,2\bigl(1+\frac{\al(2\al+\beta+3)}{\al-1}\bigr)\bigr\}$.
%Recall that $\optSoln$ 
The bound on $\E{\hitcost_i}$ will follow from the following result.

\begin{restatable}{lemma}{firstnewhit}
%\begin{lemma} 
\label{newhit}
Consider any iteration $i\in[k]$ and any $q\in[k]$. 
%Let $Z$ be the random center drawn in iteration $i$.
Then 
\begin{equation*}
\E[\text{\textup{$Z$ sampled in iteration $i$}}]
{\cost(\clustq,S_{i-1}\cup\{Z\};\beta t_\ell)\,|\,S_{i-1}, \{Z\in\clustq\}}
\leq \tau\cdot\cost(\clustq,\optcen_q;t_\ell)+(\kp-2)\num_q\cdot t_\ell.
\end{equation*}
%\end{lemma}
\end{restatable}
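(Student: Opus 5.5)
The plan follows the Arthur--Vassilvitskii/Dasgupta template: condition on $S=S_{i-1}$ and on $Z\in\clustq$. Write $D(z):=\cost(z,S;\beta t_\ell)$, so that $\Pr[Z=z\mid Z\in\clustq]=D(z)/\cost(\clustq,S;\beta t_\ell)$ for $z\in\clustq$. For each client $\cli$ we use
\[
\cost(\cli,S\cup\{z\};\beta t_\ell)\leq\min\bigl\{D(\cli),\,(d(\cli,z)-\beta t_\ell)^+\bigr\}.
\]
Because the proxy cost has no (approximate) triangle inequality, we cannot compare $D(z)$ with $D(\cli)$ plus $(d(\cli,z)-\beta t_\ell)^+$ as in~\cite{Dasgupta13}. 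Instead we route every comparison through $\optcen_q$, using $(y+z)^+\leq y^++z^+$ as in Claim~\ref{coreisgood}. The two basic estimates are
\[
D(\cli)\leq(d(\cli,\optcen_q)-t_\ell)^+ + \bigl(d(\optcen_q,S)-(\beta-1)t_\ell\bigr)^+ ,
\]
and, for $z\in\lcore(\clustq)$, the bound $\cost(\clustq,z;\beta t_\ell)\leq(\al+1)|\clustq|\rl(\clustq)=(\al+1)\cost(\clustq,\optcen_q;t_\ell)$ from the proof of Claim~\ref{coreisgood}. We also use a Markov-type count: a client outside the core has $(d(\cli,\optcen_q)-t_\ell)^+>\al\rl(\clustq)+(\al-1)t_\ell\geq\al\rl(\clustq)$, so $\num_q\leq|\clustq|/\al$ and $|\lcore(\clustq)|\geq(1-\frac1\al)|\clustq|$.

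\emph{Case 1: $\clustq$ is $\ell$-close}, i.e.\ $d(\optcen_q,S)\leq\kp\max\{\rl(\clustq),t_\ell\}$. Here we split the clients of $\clustq$ into core and non-core.
\begin{itemize}
\item Non-core clients $\cli$: we bound their cost by $D(\cli)$ and use the first basic estimate. If the max is $t_\ell$, the extra term is at most $(\kp-\beta+1)t_\ell=(\kp-2)t_\ell$ per client (using $\beta=3$); this is exactly where the $(\kp-2)\num_q t_\ell$ term comes from. If the max is $\rl(\clustq)$, the extra term $\kp\rl(\clustq)$ is at most $\frac{\kp}{\al}(d(\cli,\optcen_q)-t_\ell)^+$, which is absorbed into $\tau\cdot\cost(\clustq,\optcen_q;t_\ell)$.
\item Core clients $\cli$: here $d(\cli,S)\leq\al(t_\ell+\rl(\clustq))+\kp\max\{\rl(\clustq),t_\ell\}$. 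Summed naively over the core, this would give a $|\lcore(\clustq)|\,t_\ell$ term, which the lemma does not allow. So for core clients we must use $Z$. We split according to whether $Z$ lands in the core. If it does, the second basic estimate gives $(\al+1)\cost(\clustq,\optcen_q;t_\ell)$. If it does not, we bound the conditional probability of this event by the ratio of the $D$-mass on non-core points to the total $D$-mass. Both masses are controlled by the first basic estimate, and the factor $\kp\max\{\rl,t_\ell\}$ should cancel between numerator and denominator.
\end{itemize}

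\emph{Case 2: $\clustq$ is $\ell$-far}, i.e.\ $\Delta:=d(\optcen_q,S)>\kp\max\{\rl(\clustq),t_\ell\}$. Then every core point $z$ has $D(z)\geq\Delta-\al(t_\ell+\rl(\clustq))-\beta t_\ell\geq(\kp-2\al-\beta)\max\{\rl,t_\ell\}$, which is a constant fraction of $\Delta$ since $\kp=2\al+4$. Hence the core carries $D$-mass at least $|\lcore(\clustq)|$ times this, which is $\Omega(|\clustq|\Delta)$. Non-core points carry mass at most $\cost(\clustq,\optcen_q;t_\ell)+\num_q\Delta$. We bound
\[
\E{\cdot}\leq(\al+1)\cost(\clustq,\optcen_q;t_\ell)+\Pr[Z\notin\lcore(\clustq)]\cdot\cost(\clustq,S;\beta t_\ell),
\]
where $\cost(\clustq,S;\beta t_\ell)\leq\cost(\clustq,\optcen_q;t_\ell)+|\clustq|\Delta$. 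Multiplying the probability ratio by this bound, the $|\clustq|\Delta$ factors cancel against the core mass. This leaves $O(1)\cdot\cost(\clustq,\optcen_q;t_\ell)$, because $\num_q\Delta\cdot|\clustq|\Delta/(|\clustq|\Delta)$ is controlled via $\num_q\al\rl\leq\cost(\clustq,\optcen_q;t_\ell)$ together with $\Delta\geq\kp\rl$. The constant that emerges should be $2\bigl(1+\frac{\al(2\al+5)}{\al-1}\bigr)$, matching $\tau$ in \eqref{toplchoices}.

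The main obstacle is the close case with $\max=t_\ell$. There the core clients' current cost can be as large as about $(\al+\kp-\beta)t_\ell$ each, with no $\rl$ to charge against. The argument must therefore show that sampling proportionally to $D$ hits the core (or otherwise reduces these costs) with enough probability. The fallback is that if $D$ vanishes on most of the core, those clients already contribute little. Balancing these two regimes to obtain exactly the constants $\tau\geq\al+\kp+2$ and $(\kp-2)$ is the delicate part.
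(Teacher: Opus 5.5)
\textbf{There is a genuine gap in both cases.}

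\textbf{Close case.} You never finish the argument for the core clients; your last paragraph leaves it as an open ``balancing'' problem. The route you sketch, making $\kp\max\{\rl,t_\ell\}$ cancel between the non-core and the total $D$-mass, cannot work. Your first basic estimate only \emph{upper}-bounds $D$, and nothing lower-bounds the total $D$-mass of a close cluster; it may vanish on the core.

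The missing observation is that you never need to bound $\Pr[Z\notin\lcore(\clustq)\mid Z\in\clustq]$ at all. Split on this event for the \emph{whole} cluster:
\begin{itemize}
\item If $Z\in\lcore(\clustq)$, Claim~\ref{coreisgood} bounds all of $\clustq$ by $(\al+1)\cost(\clustq,\optcen_q;t_\ell)$.
\item Otherwise, charge all of $\clustq$ its current cost $\cost(\clustq,S_{i-1};\beta t_\ell)$. Since $\beta=3$ and $Z$ is sampled proportionally to $\cost(\cdot,S_{i-1};3t_\ell)$, this is exactly the normalizer of that conditional probability.
\end{itemize}
So the miss event contributes at most $\cost(\clustq-\lcore(\clustq),S_{i-1};3t_\ell)$. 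This is a sum over only $\num_q$ clients, each at most $(d(\cli,\optcen_q)-t_\ell)^+ +\kp\rl+(\kp-2)t_\ell$ by closeness. The result is $(\al+\kp+2)\cost(\clustq,\optcen_q;t_\ell)+(\kp-2)\num_q t_\ell$, regardless of how often the core is hit. Your per-client split charges non-core clients $D(\cli)$ unconditionally and then charges the miss event again to the core clients. That counts the non-core mass twice and only gives $2(\kp-2)\num_q t_\ell$, weaker than the statement.

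\textbf{Far case.} Here the argument as written is wrong, not just incomplete. After the same cancellation, your inequality leaves the non-core $D$-mass, which you bound by $\cost(\clustq,\optcen_q;t_\ell)+\num_q\Delta$. But $\num_q\Delta$ is not $O(\cost(\clustq,\optcen_q;t_\ell))$: $\Delta\geq\kp\rl$ is a \emph{lower} bound, and $\Delta$ can be arbitrarily large. This is not mere slack in the estimate. Take $t_\ell=0$, $m-1\geq 2$ clients at $\optcen_q$, and one client at distance $R$, which lies outside the core. As $\Delta\to\infty$ your right-hand side grows like $\Delta$, while the target is $\tau R$. Charging a non-core sample the current cost throws away the fact that $Z\in\clustq$ still serves $\clustq$ well on average.

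The paper instead bounds the new cost by $\min\{\cost(\clustq,S_{i-1};\beta t_\ell),\cost(\clustq,Z;\beta t_\ell)\}$ for every $Z\in\clustq$. It combines this with the estimate $D(s)\leq\cost(s,\optcen_q;t_\ell)+\cost(\cli,\optcen_q;t_\ell)+\cost(\cli,S_{i-1};t_\ell)$, averaged over $\cli\in\clustq$:
\begin{itemize}
\item The first two terms, paired with the current-cost branch, give $2\cost(\clustq,\optcen_q;t_\ell)$.
\item The third term, paired with the $\cost(\clustq,s;\beta t_\ell)$ branch, gives $\frac{\cost(\clustq,S_{i-1};t_\ell)}{\cost(\clustq,S_{i-1};3t_\ell)}\cdot2\cost(\clustq,\optcen_q;t_\ell)$ by Claim~\ref{firstiter}.
\end{itemize}
Only at this point is your core-mass lower bound needed, together with $|\lcore(\clustq)|\geq(1-\frac{1}{\al})|\clustq|$. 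It shows the ratio above is at most $\frac{\al}{\al-1}\cdot\frac{\kp+1}{\kp-2\al-3}=\frac{\al(2\al+5)}{\al-1}$, which is where $\tau$ comes from.

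\textbf{First iteration.} The case $i=1$ ($S_0=\es$, uniform sampling) is not covered by your $D$-proportional formula and must be handled separately via Claim~\ref{firstiter}.
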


The proof of Lemma~\ref{newhit} is somewhat technical and long, 
so we defer this to the end of the section to avoid detracting the reader.

\begin{lemma} \label{hitbnd}
Suppose $t_\ell\geq t^*_\ell$.
For any $i\in[k]$, we have 
$\E{\hitcost_i}\leq\tau\cdot\cost(\C,\optSoln;t_\ell)+(\kp-2)\ell t_\ell$.
\end{lemma}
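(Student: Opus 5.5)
We will bound $\E{\hitcost_i}$ cluster by cluster. Write
\[
\E{\hitcost_i}=\sum_{q\in[k]}\E{\cost(\clustq,S_i;\beta t_\ell)\cdot\mathbf{1}[q\in\hit_i]}.
\]
Fix $q$. For $q\in\hit_i$, condition on the first iteration $i'\leq i$ in which a center was opened inside $\clustq$. Since centers are never removed, $\cost(\clustq,\cdot\,;\beta t_\ell)$ is non-increasing as centers are added. Hence $\cost(\clustq,S_i;\beta t_\ell)\leq\cost(\clustq,S_{i'-1}\cup\{s_{i'}\};\beta t_\ell)$.

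We bound this quantity in expectation conditioned on the history $S_{i'-1}$ and on the event $s_{i'}\in\clustq$. If $i'=1$, Claim~\ref{firstiter} gives the bound $2\cost(\clustq,\optcen_q;t_\ell)$, which is at most $\tau\cost(\clustq,\optcen_q;t_\ell)$. If $i'\geq 2$, Lemma~\ref{newhit} gives the bound $\tau\cost(\clustq,\optcen_q;t_\ell)+(\kp-2)\num_q t_\ell$.

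These conditional bounds do not depend on the history, since the right-hand sides are deterministic. The events $\{$first hit of $q$ at iteration $i'\}$, for $i'\leq i$, partition $\{q\in\hit_i\}$. Summing over them gives
\[
\E{\cost(\clustq,S_i;\beta t_\ell)\mathbf{1}[q\in\hit_i]}\leq\Pr[q\in\hit_i]\bigl(\tau\cost(\clustq,\optcen_q;t_\ell)+(\kp-2)\num_q t_\ell\bigr).
\]
This is at most $\tau\cost(\clustq,\optcen_q;t_\ell)+(\kp-2)\num_q t_\ell$. Summing over $q$, the first terms total $\tau\cost(\C,\optSoln;t_\ell)$, because the clusters partition $\C$ and each client's cost to $\optSoln$ equals its cost to its own $\optcen_q$.

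It remains to show $\sum_q\num_q\leq\ell$. We do this using $t_\ell\geq t^*_\ell$, and expect it to be the main obstacle. A client $\cli\in\clustq\setminus\lcore(\clustq)$ satisfies $d(\cli,\optcen_q)>\al(t_\ell+r_\ell(\clustq))\geq t_\ell$, since $\al=2\geq1$. So every such client has optimal assignment cost strictly larger than $t_\ell\geq t^*_\ell$. At most $\ell-1$ clients can have cost strictly greater than $t^*_\ell$, the $\ell$-th largest entry of $d(\C,\optSoln)$. Therefore $\sum_q\num_q\leq\ell-1\leq\ell$. Combining the two sums gives the lemma.

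One subtlety needs care: Lemma~\ref{newhit} is conditioned on $S_{i-1}$ and $\{Z\in\clustq\}$, but we apply it on the event that $q$ is not yet hit. That is only a further restriction on $S_{i'-1}$, so the conditional bound still applies.
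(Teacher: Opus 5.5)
Your proof is correct and is essentially the paper's argument. The paper proves the intermediate bound $\E{\hitcost_i}\leq\sum_{q\in[k]}\Pr[q\in\hit_i]\bigl(\tau\cdot\cost(\clustq,\optcen_q;t_\ell)+(\kp-2)\num_q\cdot t_\ell\bigr)$ by induction on $i$: already-hit clusters only get cheaper, and newly hit clusters are bounded via Lemma~\ref{newhit} and Claim~\ref{firstiter}. That induction is your first-hit-time decomposition unrolled, and the paper then uses the same count of non-core clients to get $\sum_q\num_q\leq\ell$ (your strict inequality $d(\cli,\optcen_q)>t_\ell\geq t^*_\ell$, giving at most $\ell-1$ such clients, is slightly more careful than the paper's ``at least $t_\ell$'').
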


\begin{proof}
We use induction on $i$ to show that
$\E{\hitcost_i}\leq\sum_{q\in[k]}\Pr[q\in\hit_i]\cdot
\bigl(\tau\cdot\cost(\clustq,\optcen_q;t_\ell)+(\kp-2)\num_q\cdot t_\ell\bigr)$ for all
$i\in[k]$. 
This immediately yields the lemma because $t_\ell\geq t^*_\ell$ implies that
$\sum_{q\in[k]}\num_q\leq\ell$, since any client
$\cli\in\bigcup_{q\in[k]}\bigl(\clustq-\lcore(\clustq)\bigr)$ incurs assignment cost at
least $t_\ell$ under the optimal solution $\optSoln$.

The base case, when $i=1$, follows from Claim~\ref{firstiter}, which shows that 
$\E{\hitcost_1\,|\,q\in\hit_1}\leq 2\cdot\cost(\clustq,\optcen_q;t_\ell)$ for any
$q\in[k]$. Suppose the above statement holds for $i-1$. Consider index $i$. 
Let us condition on $S_{i-1}$. Note that $\hit_{i-1}$ and $\hitcost_{i-1}$ also get fixed 
%become deterministic quantities 
under this conditioning.
Consider $\E{\hitcost_i-\hitcost_{i-1}\,|\,S_{i-1}}$. If $q\in\hit_{i-1}$, then the
contribution of $\clustq$ to $\hitcost_i-\hitcost_{i-1}$ is non-positive. So we have 
\begin{alignat*}{1}
\Exp&\bigl[\hitcost_i-\hitcost_{i-1}\,|\,S_{i-1}\bigr]
\leq \sum_{q\in\unhit_{i-1}}\Pr[q\in\hit_i\,|\,S_{i-1}]\cdot
\E[Z]{\cost(\clustq,S_{i-1}\cup\{Z\};\beta t_\ell)\,|\,S_{i-1}, \{Z\in\clustq\}} \\
& \leq \sum_{q\in\unhit_{i-1}}\Pr[q\in\hit_i\,|\,S_{i-1}]\cdot
\Bigl(\tau\cdot\cost(\clustq,\optcen_q;t_\ell)+(\kp-2)\num_q\cdot t_\ell\Bigr) 
\tag{using Lemma~\ref{newhit}}
\end{alignat*}
So removing the conditioning on $S_{i-1}$, we obtain that 
\[
\E{\hitcost_i-\hitcost_{i-1}}\leq\sum_{q\in[k]}\Pr[q\in\hit_i-\hit_{i-1}]\cdot
\bigl(\tau\cdot\cost(\clustq,\optcen_q;t_\ell)+(\kp-2)\num_q\cdot t_\ell\bigr).
\]
Combining this with the induction hypothesis for $i-1$, we obtain that
$\E{\hitcost_i}$ is at most $\sum_{q\in[k]}\Pr[q\in\hit_i]\cdot
\bigl(\tau\cdot\cost(\clustq,\optcen_q;t_\ell)+(\kp-2)\num_q\cdot t_\ell\bigr)$. 
\end{proof}

\begin{lemma} \label{unhitbnd} 
For any $i\in[k]$, we have $\E{\Psi_i-\Psi_{i-1}}\leq\frac{\E{\hitcost_{i-1}}}{k-i+1}$. 
\end{lemma}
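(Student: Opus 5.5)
Conditioning on $S_{i-1}$ (so $\hit_{i-1},\unhit_{i-1},\lost_{i-1},\hitcost_{i-1}$ are fixed), I will bound $\E{\Psi_i-\Psi_{i-1}\,|\,S_{i-1}}$ by $\hitcost_{i-1}/(k-i+1)$ and then average over $S_{i-1}$, mirroring Dasgupta's potential argument. Write $u=|\unhit_{i-1}|$, $W=\lost_{i-1}$, $U_c=\sum_{q\in\unhit_{i-1}}\cost(\clustq,S_{i-1};\beta t_\ell)$, and $H=\hitcost_{i-1}$. The total proxy cost is $H+U_c$, and the sampling probability of a point is proportional to its proxy cost. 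For $i=1$ we have $\Psi_0=\Psi_1=0$, so assume $i\geq 2$.

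There are two cases for iteration $i$. If $s_i$ lands in an unhit cluster $\clustq$ (probability $\cost(\clustq,S_{i-1};\beta t_\ell)/(H+U_c)$), then $\lost_i=W$, $\unhit_i=\unhit_{i-1}-\{q\}$, and costs only decrease. So $\Psi_i\leq W\cdot\frac{U_c-\cost(\clustq,S_{i-1};\beta t_\ell)}{u-1}$. If instead $s_i$ lands in a hit cluster (probability $H/(H+U_c)$), then $\lost_i=W+1$, $\unhit_i=\unhit_{i-1}$, and again by monotonicity of $\cost(\cdot,S;\theta)$ in $S$ we get $\Psi_i\leq (W+1)U_c/u$. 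The only property of the proxy cost used here is that adding centers never increases $(d(\cli,S)-\theta)^+$; no triangle inequality is needed. This is why this half of the Arthur–Vassilvitskii analysis carries over.

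Summing the two cases, the expected new $\Psi$ is at most
\[
\frac{W}{u-1}\cdot\frac{\sum_{q}c_q(U_c-c_q)}{H+U_c}+\frac{H}{H+U_c}\cdot\frac{(W+1)U_c}{u},
\]
where $c_q$ are the unhit cluster costs. By Cauchy–Schwarz, $\sum_q c_q^2\geq U_c^2/u$, so $\sum_q c_q(U_c-c_q)\leq U_c^2(u-1)/u$. The first term is therefore at most $\frac{W U_c}{u}\cdot\frac{U_c}{H+U_c}$. Adding the second term gives $\frac{WU_c}{u}+\frac{H U_c}{u(H+U_c)}\leq \Psi_{i-1}+\frac{H}{u}$.

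Finally, $u=k-|\hit_{i-1}|\geq k-(i-1)$, so $H/u\leq H/(k-i+1)$. Taking expectations over $S_{i-1}$ completes the proof.

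The main points to handle carefully are the edge cases. When $u=1$, the first case yields $\unhit_i=\es$, and $\Psi_i$ should be read as $0$; then $W\cdot 0$ is fine, and the bound $U_c^2(u-1)/u=0$ remains consistent. When $H+U_c=0$, nothing changes and we need a convention for the sampling (e.g., $\Psi$ is unchanged and the bound holds trivially).

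The Cauchy–Schwarz step is where the averaging definition of $\Psi_i$ is essential, so I would double-check that inequality direction first.
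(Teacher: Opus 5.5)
Your proposal is correct and follows essentially the same argument as the paper. Both condition on $S_{i-1}$, split on whether the new center hits a new optimal cluster, use monotonicity of the proxy cost together with Cauchy--Schwarz on the unhit cluster costs, and finish with $|\unhit_{i-1}|\geq k-i+1$; the only difference is bookkeeping: the paper first shows $\E{\Psi_i\,|\,S_{i-1},\{\hit_i\supsetneq\hit_{i-1}\}}\leq\Psi_{i-1}$ and then charges the other case with probability $\Pr[\hit_i=\hit_{i-1}\,|\,S_{i-1}]=\hitcost_{i-1}/\cost(\C,S_{i-1};\beta t_\ell)$, whereas you combine both cases in a single computation.
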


\begin{proof}
This holds for $i=1$, since $\lost_1=0$.
So consider $i>1$, and condition on $S_{i-1}$. 
Let $Z$ be the random center chosen in iteration $i$, so $S_i=S_{i-1}\cup\{Z\}$.
If $\hit_i=\hit_{i-1}$, then $\lost_i=\lost_{i-1}+1$, and so we have
\begin{equation*}
\begin{split}
\Psi_i-\Psi_{i-1} & =\frac{\sum_{q\in\unhit_{i-1}}\bigl(\lost_i\cdot\cost(\clustq,S_i;\beta t_{\ell})
-\lost_{i-1}\cdot\cost(\clustq,S_{i-1};\beta t_\ell)\bigr)}{|\unhit_{i-1}|} \\
& \leq\frac{\sum_{q\in\unhit_{i-1}}\cost(\clustq,S_{i-1};\beta t_\ell)}{|\unhit_{i-1}|}.
\end{split}
\end{equation*}
Suppose $\hit_i\supsetneq\hit_{i-1}$. We have
\begin{equation*}
\begin{split}
\Exp\Bigl[\sum_{q\in\unhit_i}&\cost(\clustq,S_i;\beta t_\ell)\,|\,S_{i-1},\{\hit_i\supsetneq\hit_{i-1}\}\Bigr]
\\ & \leq \sum_{q\in\unhit_{i-1}}\cost(\clustq,S_{i-1};\beta t_\ell)\cdot
\bigl(1-\Pr[Z\in\clustq\,|\,S_{i-1},\{\hit_i\supsetneq\hit_{i-1}\}]\bigr) \\
& = \sum_{q\in\unhit_{i-1}}\cost(\clustq,S_{i-1};\beta t_\ell)-
\sum_{q\in\unhit_{i-1}}\frac{\cost(\clustq,S_{i-1};\nbetaval t_\ell)}
{\sum_{r\in\unhit_{i-1}}\cost(\clustroot_r,S_{i-1};\nbetaval t_\ell)}\cdot
\cost(\clustq,S_{i-1};\beta t_\ell) \\
& \leq\sum_{q\in\unhit_{i-1}}\cost(\clustq,S_{i-1};\beta t_\ell)-
\sum_{q\in\unhit_{i-1}}\frac{\cost(\clustq,S_{i-1};\nbetaval t_\ell)}{|\unhit_{i-1}|}
%& \leq \sum_{q\in\unhit_{i-1}}\cost(\clustq,S_{i-1};\beta t_\ell)\cdot\frac{|\unhit_{i-1}|-1}{|\unhit_{i-1}|}$,
\end{split}
\end{equation*}
where the last inequality is because $\beta=\nbetaval$ and using the Cauchy-Schwartz
inequality. 
Since $\hit_i\supsetneq\hit_{i-1}$ implies that $|\unhit_i|=|\unhit_{i-1}|-1$ and
$\lost_i=\lost_{i-1}$, the inequalities above imply that  
$\E{\Psi_i\,|\,S_{i-1},\{\hit_i\supsetneq\hit_{i-1}\}}\leq\Psi_{i-1}$.

So we obtain that
\begin{equation*}
\begin{split}
\E{\Psi_i-\Psi_{i-1}\,|\,S_{i-1}} & \leq \Pr[\hit_i=\hit_{i-1}\,|\,S_{i-1}]\cdot
\frac{\sum_{q\in\unhit_{i-1}}\cost(\clustq,S_{i-1};\beta t_\ell)}{|\unhit_{i-1}|} \\
& =\sum_{q\in\hit_{i-1}}\Pr[Z\in\clustq\,|\,S_{i-1}]\cdot
\frac{\sum_{q\in\unhit_{i-1}}\cost(\clustq,S_{i-1};\beta t_\ell)}{|\unhit_{i-1}|} \\
& =\frac{\sum_{q\in\hit_{i-1}}\cost(\clustq,S_{i-1};\nbetaval t_\ell)}
{\cost(\C,S_{i-1};\nbetaval t_\ell)}\cdot
\frac{\sum_{q\in\unhit_{i-1}}\cost(\clustq,S_{i-1};\beta t_\ell)}{|\unhit_{i-1}|} \\
& \leq \frac{\sum_{q\in\hit_{i-1}}\cost(\clustq,S_{i-1};\nbetaval t_\ell)}{|\unhit_{i-1}|}
\leq \frac{\hitcost_{i-1}}{k-i+1}
\end{split}
\end{equation*}
where the last inequality is because $|\unhit_{i-1}|\geq k-i+1$. Removing the conditioning
on $S_{i-1}$ yields the lemma.
\end{proof}

\begin{proofof}{Theorem~\ref{finbnd}}
We can finally put everything together to prove Theorem~\ref{finbnd}.
Recall that $t^*_\ell\leq t_\ell\leq\max\bigl\{(1+\ve)t^*_\ell,\frac{\ve\OPT}{\ell}\bigr\}$.
By Lemmas~\ref{hitbnd} and~\ref{unhitbnd}, we can infer that 
\[
\E{\hitcost_k+\Psi_k}=\E{\hitcost_k}+\sum_{i\in[k]}\E{\Psi_i-\Psi_{i-1}}\leq 
(1+H_k)\bigl(\tau\cdot \cost(\C,\optSoln;t_\ell)+(\kp-2)\ell t_\ell\bigr).
\]
So we have
\begin{equation*}
\begin{split}
\E{\topl\bigl(d(\C,S_k)\bigr)} & \leq\ell\cdot\beta t_\ell+\E{\cost(\C,S_k;\beta t_\ell)} 
= \ell\cdot\beta t_\ell+\E{\hitcost_k+\Psi_k} \\
& \leq \ell\cdot\beta t_\ell+(1+H_k)\tau\cdot\cost(\C,\optSoln;t_\ell)+
(1+H_k)(\kp-2)\ell t_\ell \\
%& \leq \bigl(\beta+(1+H_k)(\kp-2)\bigr)\max\bigl\{(1+\ve)\ell t^*_\ell,\ve\OPT\bigr\}+
%(1+H_k)\tau\cdot\cost(\C,\optSoln;t^*_\ell) \\
& \leq O(\log k)\cdot\bigl(\max\bigl\{(1+\ve)\ell t^*_\ell,\ve\OPT\bigr\}+
\cost(\C,\optSoln;t^*_\ell)\bigr) \\
& = O(\log k)\cdot\bigl(\ell t^*_\ell+\cost(\C,\optSoln;t^*_\ell)\bigr)
+O(\log k)\cdot\ve\OPT \leq O(\log k)\cdot\OPT. \qedhere
\end{split}
\end{equation*}
\end{proofof}

Finally, we prove Lemma~\ref{newhit}. We restate the lemma for convenience.

\firstnewhit*

\begin{proof} %of}{Lemma~\ref{newhit}}
When $i=1$, this is implied by Claim~\ref{firstiter}, so suppose $i>1$.
We consider the cases where $\clustq$ is $\ell$-close and $\ell$-far separately.
We abbreviate $\rl(\clustq)$ to $\rl$ to avoid notational clutter.

\medskip
\noindent
{\bf\boldmath {$\clustq$ is $\ell$-close.}}\
%The idea here is that 
If $Z\in\lcore(\clustq)$, then Claim~\ref{coreisgood} implies that 
$\cost(\clustq,Z;\beta t_\ell)\leq(\al+1)\cost(\clustq,\optcen_q;t_\ell)$, since 
Claim~\ref{coreisgood} shows that when $\beta\geq\al+1$, for any $s\in\lcore(\clustq)$, we
have $\cost(\clustq,s;\beta t_\ell)\leq(\al+1)\cost(\clustq,\optcen_q;t_\ell)$.
So we have
\begin{alignat}{1}
{\textstyle \Exp_Z}\bigl[\cost(\clustq,S_{i-1}\cup\{Z\};\beta t_\ell)\,&|\,S_{i-1}, \{Z\in\clustq\}\bigr] 
\leq\frac{\Pr[Z\in\lcore(\clustq)]}{\Pr[Z\in\clustq]}\cdot(\al+1)\cost(\clustq,\optcen_q;t_\ell)
\notag \\ 
& +\frac{\Pr[Z\in\clustq-\lcore(\clustq)]\cdot\cost(\clustq,S_{i-1};\beta t_\ell)}{\Pr[Z\in\clustq]}
\label{newhit-ineq1}
\end{alignat}
where all probabilities are in the conditional space, where we condition on $S_{i-1}$.
For $Z\in\clustq-\lcore(\clustq)$, we utilize that $\clustq$
is $\ell$-close to charge the expected cost to $\cost(\clustq,\optcen_q;t_\ell)$ and
$\num_q\cdot\max\{\rl,t_\ell\}$. (Recall that $\num_q=|\clustq-\lcore(\clustq)|$.)

For any set $C\sse\C$, we have 
$\Pr[Z\in C]=\cost(C,S_{i-1};\nbetaval t_\ell)/\cost(\C,S_{i-1};\nbetaval t_\ell)$, and we
have $\cost(\clustq,S_{i-1};\beta t_\ell)\leq\cost(\clustq,S_{i-1};\nbetaval t_\ell)$.
So the second term on the RHS of \eqref{newhit-ineq1} is at most
\begin{equation*}
\begin{split}
\cost(\clustq-\lcore(\clustq),S_{i-1};\nbetaval t_\ell) & \leq
\sum_{\cli\in\clustq-\lcore(\clustq)}\bigl(d(\cli,\optcen_q)-t_\ell\bigr)^+
%+\num_q\cdot\bigl(d(\optcen_q,S_{i-1})-(\nbetaval-1)t_\ell\bigr)^+
+\num_q\cdot\bigl(d(\optcen_q,S_{i-1})-2t_\ell\bigr)^+ \\
%\leq\cost(\clustq,\optcen_q;t_\ell)+\num_q(\kp-\nbetaval+1)\max\{\rl,t_\ell\}
& \leq\cost(\clustq,\optcen_q;t_\ell)+\num_q\bigl(\kp\rl+(\kp-2)t_\ell\bigr)
\end{split}
\end{equation*}
where the last inequality is because 
$d(\optcen_q,S_{i-1})\leq\kp\max\{\rl,t_\ell\}\leq\kp(\rl+t_\ell)$, 
as $\clustq$ is $\ell$-close. 
%Now if $\rl\geq t_\ell$, then 
Since $\num_q\cdot\kp\rl$ is at most %$\num_q(\kp-\nbetaval+1)\rl$ is at most
%$(\kp-\nbetaval+1)\cost(\clustq,\optcen_q;t_\ell)$, so we have
$\kp\cdot\cost(\clustq,\optcen_q;t_\ell)$, we can say that
\[
\cost(\clustq-\lcore(\clustq),S_{i-1};\nbetaval t_\ell)
%\leq(\kp-\nbetaval+2)\cost(\clustq,\optcen_q;t_\ell)+(\kp-\nbetaval+1)\num_q\cdot t_\ell$.
\leq(\kp+1)\cdot\cost(\clustq,\optcen_q;t_\ell)+(\kp-2)\num_q\cdot t_\ell.
\]
Plugging this in \eqref{newhit-ineq1}, when $\clustq$ is $\ell$-close, we obtain that 
\begin{equation}
\E[Z]{\cost(\clustq,S_{i-1}\cup\{Z\};\beta t_\ell)\,|\,S_{i-1}, \{Z\in\clustq\}}
%\leq(\al+\kp-\nbetaval+3)\cost(\clustq,\optcen_q;t_\ell)+(\kp-\nbetaval+1)\num_q\cdot t_\ell.
\leq(\al+\kp+2)\cost(\clustq,\optcen_q;t_\ell)+(\kp-2)\num_q\cdot t_\ell.
\label{newhit-ineq2}
\end{equation}

\medskip
\noindent
{\bf\boldmath {$\clustq$ is $\ell$-far.}}\
%The expectation in the lemma statement is at most
We have
\begin{equation}
\begin{split}
{\textstyle \Exp_Z}\bigl[\cost(\clustq,S_{i-1}&\cup\{Z\};\beta t_\ell)\,|\,S_{i-1}, \{Z\in\clustq\}\bigr]
\\ & \leq\sum_{s\in\clustq}
\frac{\cost(s,S_{i-1};\nbetaval t_\ell)}{\cost(\clustq,S_{i-1};\nbetaval t_\ell)}\cdot
\min\Bigl\{\cost(\clustq,S_{i-1};\beta t_\ell),\cost(\clustq,s;\beta t_\ell)\Bigr\}. 
\end{split}
\label{newhit-ineq3}
\end{equation}
We have $\cost(s,S_{i-1};\nbetaval t_\ell)\leq
%\bigl(d(s,\optcen_q)-t_\ell\bigr)^+\bigl(d(\cli,\optcen_q)-t_\ell\bigr)^++\cost(\cli,S_{i-1};(\nbetaval-2)t_\ell)$
\cost(s,\optcen_q;t_\ell)+\cost(\cli,\optcen_q;t_\ell)+\cost(\cli,S_{i-1};t_\ell)$
for any $\cli\in\clustq$. Averaging this inequality over all $\cli\in\clustq$, we obtain that
\begin{equation*}
\cost(s,S_{i-1};\nbetaval t_\ell)
\leq\cost(s,\optcen_q;t_\ell)+\frac{\cost(\clustq,\optcen_q;t_\ell)}{|\clustq|}+
%\frac{\cost(\clustq,S_{i-1};(\nbetaval-2)t_\ell)}{|\clustq|}.
\frac{\cost(\clustq,S_{i-1};t_\ell)}{|\clustq|}.
\end{equation*}
Plugging this in \eqref{newhit-ineq3}, and simplifying, we obtain that the expectation in
the lemma statement is at most
\begin{alignat}{1}
& \frac{1}{\cost(\clustq,S_{i-1}\cdot\nbetaval t_\ell)}\cdot
\sum_{s\in\clustq}\biggl(\cost(s,\optcen_q;t_\ell)\cdot\cost(\clustq,S_{i-1};\beta t_\ell)
\notag \\ & \qquad \quad 
+\frac{\cost(\clustq,\optcen_q;t_\ell)}{|\clustq|}\cdot\cost(\clustq,S_{i-1};\beta t_\ell)
%\frac{\cost(\clustq,S_{i-1};(\nbetaval-2)t_\ell)}{|\clustq|}\cdot\cost(\clustq,s;\beta t_\ell)}
+\frac{\cost(\clustq,S_{i-1};t_\ell)}{|\clustq|}\cdot\cost(\clustq,s;\beta t_\ell)\biggr)
\notag \\
& \leq 2\cdot\cost(\clustq,\optcen_q;t_\ell)+
%\frac{\cost(\clustq,S_{i-1};(\nbetaval-2)t_\ell)}{\cost(\clustq,S_{i-1};\beta t_\ell)}\cdot
\frac{\cost(\clustq,S_{i-1};t_\ell)}{\cost(\clustq,S_{i-1};\nbetaval t_\ell)}\cdot
\frac{\sum_{s\in\clustq}\cost(\clustq,s;\beta t_\ell)}{|\clustq|}.
\label{newhit-ineq4}
\end{alignat}
The quantity $\frac{\sum_{s\in\clustq}\cost(\clustq,s;\beta t_\ell)}{|\clustq|}$ in the
final term on \eqref{newhit-ineq4} is at most $2\cdot\cost(\clustq,\optcen_q;t_\ell)$, as
shown in Claim~\ref{firstiter}. So we proceed to bound
%$\frac{\cost(\clustq,S_{i-1};(\nbetaval-2)t_\ell)}{\cost(\clustq,S_{i-1};\nbetaval t_\ell)}$ 
$\frac{\cost(\clustq,S_{i-1};t_\ell)}{\cost(\clustq,S_{i-1};\nbetaval t_\ell)}$
by $O(1)$, which will finish the proof of this case, and hence the lemma.

We have $\cost(\clustq,S_{i-1};t_\ell)\leq
\sum_{\cli\in\clustq}\bigl(d(\cli,\optcen_q)-t_\ell\bigr)^++|\clustq|\cdot d(\optcen_q,S_{i-1})
=|\clustq|\cdot\bigl(\rl+d(\optcen_q,S_{i-1})\bigr)$.
So since $\clustq$ is $\ell$-far, we have
$\cost(\clustq,S_{i-1};t_\ell)\leq \bigl(1+\frac{1}{\kp}\bigr)\cdot|\clustq|\cdot d(\optcen_q,S_{i-1})$.
We lower bound $cost(\clustq,S_{i-1};\nbetaval t_\ell)$ by
$\cost(\lcore(\clustq),S_{i-1};\nbetaval t_\ell)$, which is at least 
\begin{equation*}
\begin{split}
\sum_{\cli\in\lcore(\clustq)}\bigl(d(\optcen_q,S_{i-1})-d(\cli,\optcen_q)-\nbetaval t_\ell\bigr)^+
& \geq |\lcore(\clustq)|\cdot\bigl(d(\optcen_q,S_{i-1})-\al\rl-\al t_\ell-\nbetaval t_\ell\bigr)
\\ & \geq |\lcore(\clustq)|\cdot d(\optcen_q,S_{i-1})\cdot\Bigl(1-\tfrac{2\al+\nbetaval}{\kp}\Bigr).
\end{split}
\end{equation*}
The first inequality above is because $\cli\in\lcore(\clustq)$, and the second is because
$\clustq$ is $\ell$-far. Combining these bounds, and since
$|\lcore(\clustq)|\geq\bigl(1-\frac{1}{\al}\bigr)|\clustq|$,  
we obtain that
$\frac{\cost(\clustq,S_{i-1};t_\ell)}{\cost(\clustq,S_{i-1};\beta t_\ell)}
\leq\frac{\al}{\al-1}\cdot\frac{\kp+1}{\kp-2\al-\nbetaval}
\leq\frac{\al(2\al+5)}{\al-1}$, where the last inequality is because 
$\frac{\kp+1}{\kp-2\al-\nbetaval}$ is a decreasing function of $\kp$ and 
$\kp\geq 2\al+4$. 
%Take $\kp\geq 2\al+\beta+1

Plugging these bounds in \eqref{newhit-ineq4}, when $\clustq$ is $\ell$-far, we obtain
that 
\begin{equation}
\E[Z]{\cost(\clustq,S_{i-1}\cup\{Z\};\beta t_\ell)\,|\,S_{i-1}, \{Z\in\clustq\}}
\leq 2\biggl(1+\frac{\al(2\al+5)}{\al-1}\biggr)\cdot\cost(\clustq;\optcen_q;t_\ell).
\label{newhit-ineq5}
\end{equation}

%\smallskip
The lemma now follows from \eqref{newhit-ineq2} and \eqref{newhit-ineq5}, and the
definition of $\tau$ (see \eqref{toplchoices}).
%\hfill \qed
\end{proof} %of}

\subsection{The setting \boldmath $\F \ne \C$} \label{fneqc}
First, note that the adaptive-sampling algorithm and its analysis extend
%notational clarifications, 
to the setting $\F\supseteq\C$: we still sample a client in step~\ref{sample}, and open a
center at this client in step~\ref{fopen}. Given this, we can proceed in two ways. 
One standard idea that works for many $k$-clustering problems is to simply 
``move'' each client to the center in $\F$ closest to it. We then obtain an instance where
the clients are located at some subset of $\F$ (with possibly multiple co-located
clients), so we can use the adaptive-sampling algorithm from Section~\ref{adsample}. 
This works because the loss in $f$-cost due to moving clients can be bounded in terms of
$\OPT$.

For $S\sse\F$, let $d'(\cli,S)$ denote the assignment-cost of $\cli$ under $S$ for the new
instance, i.e., when $\cli$ has been moved from its original location to the center in
$\F$ closest to it. So we have $|d'(\cli,S)-d(\cli,S)|\leq d(\cli,\F)$.
%$\mu_{\cli}:=d(\cli,\F)$, and 
So if $\OPT'$ is the optimal $f$-cost for the new instance, 
%where we have moved clients to their closest centers in $\F$, 
we have $\OPT'\leq 2\cdot\OPT$: 
if $\optsoln$ is an optimum solution for the original problem, 
then we have $d'(\cli,\optsoln)\leq d(\cli,\optsoln)+d(\cli,\F)$ for every client $\cli$,
so $f\bigl(d'(\C,\optsoln)\bigr)\leq\opt+f\bigl(d(\C,\F)\bigr)$ and
$f\bigl(d(\C,F)\bigr)\leq\OPT$. 
%then considering this
%solution for the new instance, we have that the assignment-cost of a client $\cli$ in the
%new instance is at most $d(\cli,\F)+d(\cli,\optsoln)\leq 2\cdot d(\cli,\optsoln)$. 
%Therefore, the $f$-cost of this solution for the new instance is at most $2\cdot\OPT$. 
Therefore, if $S\sse\F$ is a $\rho$-approximate solution for the new instance, then
we have $d(\cli,S)\leq d(\cli,\F)+d'(\cli,S)$, so we have 
$f\bigl(d(\C,S)\bigr)\leq f\bigl(d(\C,\F)\bigr)+\rho\cdot\OPT'\leq(2\rho+1)\OPT$.
%treating $S$ as a solution for the original instance, we obtain that the assignment cost
%of $\cli$ when it is ``moved back'' to its original location is at most 

\medskip
%The other, 
A cleaner approach, which avoids a factor-$2$ loss in approximation, is to
directly modify adaptive sampling in the 
following simple fashion. In step~\ref{sample}, we still sample a client $s_i\in\C$ with
probability proportional to $\bigl(d(s_i,S_{i-1})-\beta t_{\ell_{\max}}\bigr)^+$ (for a suitable
$\beta$ value); but in step~\ref{fopen}, we now add to $S_i$ the center in $\F$ closest to $s_i$.
This modification was made for $\topl$-norms in~\cite{aaai25b}, and, exactly as with the 
case $\F=\C$, we can rely on their analysis 
%(specifically the analog of Lemma~\ref{choosecore}) 
to extend things to general monotone, symmetric norms.
%To elaborate, 
Suppose that $\al,\beta,\gm,\rho,\tau$ satisfy %the following inequalities: 
\begin{equation}
  \beta = \gm = 2\alpha +1, \quad
  \rho\geq 2(\al+2\beta+3)+\gm, \quad %\kp\geq 2\al+\beta+2, \\
  %1-\tfrac{\gamma }{\rho}\geq 2\cdot\tfrac{\kappa + \beta }{\rho} \\
  %\kappa  & \geq 2\alpha + \beta  +2, \qquad
  \Bigl(1-\tfrac{\gamma}{\rho}\Bigr)\cdot\tfrac{\alpha-1}{2\alpha(\al+\beta+3)}\geq\frac{1}{\tau}.
\label{genchoices}
\end{equation}
In particular, we can take $\al=2$, $\beta=\gm=5$, $\tau=\rho=45$.
These inequalities imply %\eqref{abkchoices}, as also 
the inequalities (5)
in~\cite{aaai25b} (by taking $\kp=\al+\beta+3$), which are used in~\cite{aaai25b} to
analyze adaptive sampling for $\topl$-norms when $\F\neq\C$.
With this choice of parameters, \cite{aaai25b} show that almost
the entire analysis of adaptive sampling for $\topl$-norms in the setting $\F=\C$ 
%from Section~\ref{adsample-topl} 
continues to hold. In particular, with the same definitions for $\ell$-good, $\ell$-bad
clusters, and $\ell$-core (and $\ell$-supercore) of a cluster, Claim~\ref{allgood} 
and Lemma~\ref{choosecore} continue to hold. The only portion of the analysis that we need
to modify is Claim~\ref{coreisgood} showing that if we sample a client from the
$\ell$-core of a cluster then that cluster becomes $\ell$-good. 

\begin{claim} \label{gen-coreisgood}
Consider a cluster $\clust$ and let $S$ be the current center-set. 
Suppose there is some client $s\in\lcore(\clust)$ such that $S$ contains the center in
$\F$ closest to $s$ (so $d(s,S)=d(s,\F)$).
Then $\clust$ is $\ell$-good (and hence remains $\ell$-good throughout). 
\end{claim}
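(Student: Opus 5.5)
The plan is to mimic the proof of Claim~\ref{coreisgood}. We bound the proxy cost of $\clust$ under $S$ by its proxy cost under the single center $c:=$ the point of $\F$ closest to $s$, which lies in $S$ by hypothesis. Since $d(\cli,S)\le d(\cli,c)$ for every $\cli\in\clust$, it suffices to show
\[
\sum_{\cli\in\clust}\bigl(d(\cli,c)-\beta t_\ell\bigr)^+\le \gm\sum_{\cli\in\clust}\bigl(d(\cli,\optcen_q)-t_\ell\bigr)^+ = \gm\,|\clust|\,\rl(\clust),
\]
with $\beta=\gm=2\al+1$ as in \eqref{genchoices}.

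The key steps are as follows.

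First, by the triangle inequality, $d(\cli,c)\le d(\cli,\optcen_q)+d(\optcen_q,s)+d(s,c)$.

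Second, we use $d(s,c)=d(s,\F)\le d(s,\optcen_q)$, which holds because $\optcen_q\in\F$. This gives $d(\cli,c)\le d(\cli,\optcen_q)+2\,d(s,\optcen_q)$.

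Third, we split the threshold as $\beta t_\ell=t_\ell+2\al t_\ell$ and apply $(y+z)^+\le y^++z^+$. This yields
\[
\bigl(d(\cli,c)-\beta t_\ell\bigr)^+\le \bigl(d(\cli,\optcen_q)-t_\ell\bigr)^+ + 2\bigl(d(s,\optcen_q)-\al t_\ell\bigr)^+ .
\]

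Fourth, since $s\in\lcore(\clust)$, we have $d(s,\optcen_q)\le\al(t_\ell+\rl(\clust))$. Hence the second term is at most $2\al\,\rl(\clust)$.

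Summing over $\cli\in\clust$ then gives a total of at most $|\clust|\rl(\clust)+2\al|\clust|\rl(\clust)=(2\al+1)|\clust|\rl(\clust)\le\gm\,|\clust|\,\rl(\clust)$. So $\clust$ is $\ell$-good.

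The "remains good" part follows because adding centers only decreases each $d(\cli,S)$, so the left-hand side of Definition~\ref{clgood} can only decrease while the right-hand side is fixed.

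The only step needing care is the second one: we must use $\optcen_q\in\F$, which requires that $\optSoln\sse\F$. This is what forces the doubled factor in $\beta=2\al+1$, compared to $\al+1$ in Claim~\ref{coreisgood}, and it is the reason the threshold must absorb $2\al t_\ell$ rather than $\al t_\ell$.
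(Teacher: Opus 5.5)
Your proposal is correct and matches the paper's proof essentially step for step. Like the paper, you bound the proxy cost by that of the single center closest to $s$, apply the triangle inequality through $\optcen_q$ and $s$, use $d(s,\F)\le d(s,\optcen_q)$ because $\optcen_q\in\F$, split $\beta t_\ell$ as $t_\ell+2\al t_\ell$ via $(y+z)^+\le y^++z^+$, and use the core condition to reach $(2\al+1)|\clust|\,\rl(\clust)\le\gm|\clust|\,\rl(\clust)$.
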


\begin{proof} %{Claim \ref{gen-coreisgood}}
Let $a\in S$ be such that $d(s,a)=d(s,\F)$. Then
$\sum_{\cli\in\clust}(d(\cli,S)-\beta t_\ell)^+\leq\sum_{\cli\in\clust}(d(\cli,a)-\beta t_\ell)^+$, 
which (since $\beta\geq 2\al+1$) is at most 
\begin{equation*}
\begin{split}
\sum_{\cli\in\clust}\bigl((d(\cli,\optcen_q)-t_\ell)^++(d(s,\optcen_q)+d(s,a)-2\al\cdot t_\ell)^+\bigr) 
& \leq |\clust|\cdot\bigl(r_\ell(C^*_q)+(2d(s,\optcen_q) - 2\al\cdot t_\ell)\bigr)^+. \\
& \leq |\clust|\bigl(r_\ell(C^*_q)+2\al\cdot r_\ell(C^*_q)\bigr).
\end{split}
\end{equation*}
The first inequality follows from the triangle inequality, and since $(y+z)^+\leq y^++z^+$;
the second follows from the definition of $r_\ell$, and since $d(s,a)\leq d(s,\optcen_q)$;
The third is because $s\in\core(C^*_q)$. Since $\gm\geq 2\al+1$, this
shows that $\clust$ is $\ell$-good.
\end{proof}

Given the above claim, we obtain the following guarantee for the $\F\neq\C$ setting by the
exact same arguments as in Section~\ref{adsample-gen}. 

\begin{theorem} \label{adsample-genthm}
Let $\vec{t}$ be a non-increasing vector %= (t_1, \ldots, t_n) \in \R^{|\pos|}$ 
such that $t^*_\ell \leq t_\ell \leq\max\{(1 + \veps) t^*_\ell, \frac{\veps\opt}{n}\}$ 
for all $\ell \in \pos$. 
%where $\ve\leq 1$.
%Let $S$ be the set of centers opened by Algorithm \ref{alg:adsample-min-norm} and 
The above modification of Algorithm~\ref{alg:adsample-min-norm} returns a set $S$ of
$O(k)$ centers satisfying 
$f\bigl(d(\C, S)\bigr)\le 45(1 + \varepsilon)^3(1 + \delta)^2\cdot \opt$ 
with constant probability.   
\end{theorem}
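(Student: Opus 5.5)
The plan is to rerun the proof of Theorem~\ref{adsample-thm} essentially verbatim, replacing the parameter choice \eqref{abkchoices} by \eqref{genchoices}, i.e.\ $\al=2$, $\beta=\gm=5$, $\tau=\rho=45$. The modified algorithm uses $5t_{\ell_{\max}}$ in the sampling step, and the threshold for $\indset$ is $\rho(1+\ve)B_\ell$ with $\rho=45$. The iteration count becomes $\ceil{45(k+\sqrt{k})}$. Claim~\ref{toplestim} and Claim~\ref{corenested} are purely about $\vec t$ and the optimal clusters, so they hold unchanged: $B_\ell$ estimates $\topl(d(\C,\optSoln))$, and the $\ell$-cores of each cluster are nested. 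For the $\F\neq\C$ setting, the text preceding the theorem gives us that Claim~\ref{allgood} and Lemma~\ref{choosecore} still hold under \eqref{genchoices}, and Claim~\ref{gen-coreisgood} replaces Claim~\ref{coreisgood}.

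\textbf{Key step.} Suppose $\indset\neq\es$ at the start of an iteration. Then some cluster $\clust$ is $\ell_{\max}$-bad. Otherwise Claim~\ref{allgood} with $\gm\le\rho$ gives $\topl[\ell_{\max}](d(\C,S))\le\rho(1+\ve)B_{\ell_{\max}}$, contradicting $\ell_{\max}\in\indset$. By Lemma~\ref{choosecore}, with probability at least $1/\tau$ the sampled client $s_i$ lies in $\lcore[\ell_{\max}](\clust)$ for some $\ell_{\max}$-bad cluster $\clust$. The algorithm then opens the center of $\F$ nearest to $s_i$. Since $s_i\in\lcore(\clust)$ for every $\ell\le\ell_{\max}$ (Claim~\ref{corenested}), and in particular for every $\ell\in\indset$, Claim~\ref{gen-coreisgood} makes $\clust$ $\ell$-good for all $\ell\in\indset$, permanently.

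\textbf{Martingale.} With $p=1/\tau$ and $N=\ceil{\tau(k+\sqrt k)}$, define $X_i$ exactly as before: it decreases by $1$ when $\indset=\es$ or a core of a bad cluster is hit. Then $Y_i=X_i+ip$ is a supermartingale with bounded differences, and Azuma--Hoeffding gives $\Pr[X_N>0]\le e^{-p/4}$, so $X_N=0$ with constant probability. As before, $X_N=0$ forces $\indset=\es$ at the end. That means $\topl(d(\C,S))\le\rho(1+\ve)B_\ell$ for all $\ell\in\pos$. Claim~\ref{toplestim} and Theorem~\ref{thm:majorization}(b), normalized so that $f(1,0,\ldots,0)=1$, then yield $f(d(\C,S))\le\rho(1+\ve)^3(1+\dt)^2\opt$ with $\rho=45$.

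\textbf{Main obstacle.} The only real point to check is that Lemma~\ref{choosecore} (via Lemma~4.28 of~\cite{aaai25b}) applies to the modified sampling/opening rule with the $\ell$-supercore contained in our $\ell$-core. We also need the parameters \eqref{genchoices} to meet their inequalities~(5). We rely on the paper's assertion that $\kp=\al+\beta+3$ works. We also check that $\al=2$, $\beta=\gm=5$, $\rho=\tau=45$ satisfy \eqref{genchoices}: $\rho\ge 2(2+10+3)+5=35$, and $(1-5/45)\cdot\frac{1}{2\cdot2\cdot10}=\frac{8}{9}\cdot\frac{1}{40}=\frac{1}{45}$.
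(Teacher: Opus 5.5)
Your proposal is correct and matches the paper's proof. The paper simply reruns the argument for Theorem~\ref{adsample-thm} with the parameters of \eqref{genchoices}, using Claim~\ref{gen-coreisgood} in place of Claim~\ref{coreisgood}, together with the nested-core property and the $\F\neq\C$ versions of Claim~\ref{allgood} and Lemma~\ref{choosecore} from~\cite{aaai25b}; your check that $\al=2$, $\beta=\gm=5$, $\tau=\rho=45$ satisfy \eqref{genchoices}, and the adjusted iteration count $\ceil{45(k+\sqrt{k})}$, are exactly the details the paper leaves implicit.
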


Complementing this with
Lemma~\ref{polyset}, we again obtain an $\bigl(O(1),O(1)\bigr)$-bicriteria algorithm for
minimum-norm $k$-clustering when $\F\neq\C$, by running the adaptive-sampling algorithm for
each vector $\vec{t}$ in the set $\T$ output by Lemma~\ref{polyset}, and returning the
best solution found.

\subsection{Faster \boldmath $O(1)$-approximation by sparsifying data} \label{trueapx}
We show that by using our algorithm in conjunction with the (true) $O(1)$-approximation
algorithm for minimum-norm $k$-clustering in~\cite{ChakrabartyS19}, referred to as \csalg
from now on, we can obtain a faster $O(1)$-approximation algorithm. We note that this way  
of obtaining running-time savings was also outlined by~\cite{aggarwal_adaptive_2009} for
the $k$-means problem. 

We consider the $\F=\C$ setting for simplicity (which is the setting also considered
in~\cite{ChakrabartyS19}). 
Let $\post=\pos_{n,1}$, i.e., $\post$ consists of all powers of $2$ up to $n$ (possibly
including $n$).  
%but the same ideas carry over to the $\F\neq\C$ setting 
%The algorithm in~\cite{ChakrabartyS19}, which we will refer to as \csalg from now on, 
Algorithm \csalg %(from~\cite{ChakrabartyS19}) 
also requires knowing the right threshold
vector $\vec{t}$ satisfying the conditions in Theorem~\ref{adsample-thm}. As shown
earlier, we can find a set $\T$ containing such a vector (Lemma~\ref{polyset}). 
For an instance with $n$ clients and a fixed $\vec{t}\in\T$, \csalg
solves an LP, which is the standard $k$-median LP (with $O(n^2)$ variables and
constraints) augmented with $O(n)$ additional constraints,% 
\footnote{The $O(n)$ additional constraints are constraints (OCl-4)
in~\cite{ChakrabartyS19}. Although, it seems like there are quadratically many such
constraints, a closer inspection shows that one needs to consider at most one such
constraint for each client.}
and rounds its optimal solution.
To elaborate, \csalg considers the setting wherein the norm $f$ is the maximum of a
collection 
of ordered norms specifed in the input, and their LP has constraints bounding the (proxy)
cost under each ordered norm in this collection. 
Given Theorem~\ref{thm:majorization}, one
can approximate $f$ by specifying $\topl$-norm budgets for all $\ell\in\post$ 
(obtained from a threshold vector $\vt\in\Rp^{\post}$), so this results in $O(\log n)$
such constraints. %that bound the $\topl$-proxy cost for each $\ell\in\post$.}
In the sequel, we refer to this LP %(that depends on the threshold vector) 
as the augmented $k$-median LP, which has $O(n^2)$ variables and $O(n^2)$ constraints. 
For a given $\vec{t}$, the running time of \csalg is dominated by the time needed to solve 
this augmented $k$-median LP.%
\footnote{The running time of the LP-rounding procedure in \csalg is not explicitly stated
in~\cite{ChakrabartyS19}, but from its description, one can infer that it is $O(n)+\poly(k)$.}
Thus, the overall running time of \csalg, 
%which is dominated by the time needed to solve these LPs 
is $O\bigl(|\T|\cdot\lptime(n)\bigr)$, where $\lptime$ is the time
needed to solve their augmented $k$-median LP on an instance with $n$ clients.

For the faster algorithm, we first use Algorithm~\ref{adsample-alg} to obtain an
$(\al,\rho)$-approximate solution $S$, where $\al,\rho=O(1)$, with constant
probability. This involves running 
Algorithm~\ref{adsample-alg} for each $\vec{t}$ in the set $\T$, and returning the best
solution found. A simple implementation of Algorithm~\ref{adsample-alg} takes $O(nk)$
time, so we can find $S$ in $O(|\T|nk)$ time. Now, we consider the instance where we move
each client to its closest center in $S$, and are allowed to open centers only at
locations in $S$. We run \csalg on the resulting instance, which we call the 
{\em \sparse instance} %(obtained from $S$), 
where, by design, there are at most $|S|\leq\al k$ distinct client locations 
%clients are located at some subset of $S$ 
and we may have co-located clients. %We run \csalg on this \sparse instance
%where clients are located at some subset of $S$ (and we can have co-located clients), and
%we may open centers at locations in $S$.
(While the algorithm in~\cite{ChakrabartyS19} is stated for the setting $\F=\C$, it works
as is for the setting $\F\supseteq\C$ and opens centers at a subset of $\F$.) 
Lemma~\ref{spapprox} shows that moving to the \sparse instance incurs only 
an $O(1)$-factor loss in approximation. %(Lemma~\ref{spapprox}).
%a good solution to this sparsified instance yields a good solution for the original instance 
Since $|S|\leq\al k$, running \csalg on the \sparse instance %with vector $\tsparse$ 
takes time $O\bigl(|\T'|\cdot\lptime(\al k)\bigr)$, where $\T'$ is a set containing a suitable
threshold vector for the \sparse instance (which still has $n$ clients). 
%(Note that the \sparse instance still has $n$ clients.)
So the total running time is $O\bigl(|\T|nk+|\T'|\cdot\lptime(\al k)\bigr)$. 
This is better than the earlier running time, since solving LPs takes
super-quadratic time \cite{JiangSWZ21}), but %notably, 
note also that the time required to solve LPs now only affects the dependence of the
runtime on $k$.  
%For instance, if 

Furthermore, we show in Lemma~\ref{npoly} that one can refine Lemma~\ref{polyset}
so that, for an instance with $n$ clients, one can identify a set of size
$O(n\log n)$ %a set $\T$ with $|\T|=O(n\log n)$ 
that contains a threshold vector $\vec{t}$ such that: (a) running a slightly modified version of
Algorithm~\ref{adsample-alg} with $\vec{t}$ yields an
$\bigl(O(1),O(1)\bigr)$-approximation; and (b) running \csalg with $\vt$ yields an
$O(1)$-approximate solution.
Thus, $|\T|$ and $|\T'|$ can be bounded by $O(n\log n)$, so 
%This requires subtly changing the interpretation of $t_\ell$, from being an estimate of
%$d(\C,\optSoln)^{\down}_\ell$, to an estimate of $\topl\bigl(d(\C,\optSoln)\bigr)/\ell$. 
%
%In particular, with $\dt=\ve=1$, one can tighten Lemma~\ref{toplbnds} and
%Lemma~\ref{polyset} to obtain $|\mathcal{T}| = O(n)$. 
Lemmas~\ref{npoly} and~\ref{spapprox} show that this sparsify-and-solve approach
yields an $O(1)$-approximation (with constant probability) in 
$O\bigl(n\log n\cdot(nk+\lptime(O(k)))\bigr)$ time. 
%and the dependence on $n$ remains $O(n^2 k)$ -- 
In particular, for constant $k$, the overall runtime is now roughly \emph{quadratic} in $n$. In
contrast, even assuming ({\em very} generously) that we have a \emph{linear-time}
algorithm for (approximately) solving LPs, we would obtain $\lptime(n)=O(n^2)$ (since the
LP has $n^2$ variables), and so running \csalg on the original instance would require
$\Omega(n^3\log n)$ time.    
%which is better than the earlier running time (since solving LPs takes
%super-quadratic time).  
%even a {\em very} generous estimate of the running time of \csalg would place its on the
%original instance would place its running 

%Finally, we can establish the following approximation guarantee, taking the same approach as 
%in~\cite{guha_focs2000}. %Recall that %Algorithm~\ref{adsample-alg} returns an

\begin{lemma}\label{sparsify-data-approx} \label{spapprox}
Let $S$ be an $(\al,\rho)$-approximate solution for the original instance.
Let $\tS\sse S$ be a $\lambda$-approximate solution for the new instance. Then the $f$-cost of
$\tS$ for the original instance is at most $(2\ld+2(\ld+1)\rho)\cdot\OPT$.
\end{lemma}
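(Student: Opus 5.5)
We work with the per-client displacement vector $\mu:=d(\C,S)$, the cost vector of the bicriteria solution $S$. By assumption $f(\mu)\leq\rho\cdot\OPT$. Let $\pi(\cli)\in S$ be the center of $S$ closest to $\cli$, so $d(\cli,\pi(\cli))=\mu_\cli$. In the sparsified instance client $\cli$ is located at $\pi(\cli)$. For $A\sse S$, write $d'(\cli,A)=d(\pi(\cli),A)$. The triangle inequality then gives, for every client,
\[
|d'(\cli,A)-d(\cli,A)|\leq\mu_\cli .
\]
Summing this coordinatewise and using monotonicity and the triangle inequality of $f$, it will be enough to do three things: bound $\OPT'$, the optimal $f$-cost of the sparsified instance (centers restricted to $S$), in terms of $\OPT$ and $f(\mu)$; use $f(d'(\C,\tS))\leq\ld\OPT'$; and transfer the cost of $\tS$ back to the original instance.

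\emph{Transfer back.} Since $d(\cli,\tS)\leq d'(\cli,\tS)+\mu_\cli$, monotonicity and the triangle inequality of $f$ give
\[
f\bigl(d(\C,\tS)\bigr)\leq f\bigl(d'(\C,\tS)\bigr)+f(\mu)\leq\ld\OPT'+\rho\OPT.
\]

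\emph{Bounding $\OPT'$.} This is the main step, because the sparsified instance may open centers only in $S$, not at the optimal centers $\optcen_q$. Starting from $\optSoln$, map each $\optcen_q$ to the point $\sigma(\optcen_q)\in S$ closest to it, and let $A:=\{\sigma(\optcen_q)\}_q$, so $|A|\leq k$. Fix a client $\cli$ and let $o$ be its center in $\optSoln$. Then
\[
d'(\cli,A)\leq d(\pi(\cli),\sigma(o))\leq \mu_\cli+d(\cli,o)+d(o,\sigma(o)).
\]
Moreover $d(o,\sigma(o))=d(o,S)\leq d(o,\pi(\cli))\leq d(o,\cli)+\mu_\cli$. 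Hence
\[
d'(\cli,A)\leq 2d(\cli,\optSoln)+2\mu_\cli ,
\]
and applying $f$ yields $\OPT'\leq 2\OPT+2\rho\OPT=2(1+\rho)\OPT$.

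Combining the two displays gives
\[
f\bigl(d(\C,\tS)\bigr)\leq 2\ld(1+\rho)\OPT+\rho\OPT .
\]
This is at most the claimed bound $(2\ld+2(\ld+1)\rho)\OPT$, since $\rho\leq 2\rho$; the lemma's constant is slightly looser than what this argument gives. The only delicate point is the second triangle-inequality step, where $d(o,S)$ is charged to the particular client $\cli$ via $\pi(\cli)$. This charging is what keeps every bound coordinatewise in $\cli$, which is needed so that monotonicity of $f$ applies directly with no separability assumption.
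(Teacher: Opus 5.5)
Your proof is correct and follows essentially the same route as the paper. You map each optimal center to its nearest point of $S$ and charge $d(o,S)\le d(o,\pi(\cli))\le d(o,\cli)+\mu_\cli$, which gives $\OPT'\le 2\OPT+2f(d(\C,S))$; you then pay $f(d(\C,S))$ once more to move clients back. As in the paper's own argument, this yields the slightly stronger bound $(2\ld+(2\ld+1)\rho)\OPT$.
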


\begin{proof} %{Lemma~\ref{spapprox}} %{Claim~\ref{sparsify-data-approx}}
%Recall that $S$ is an $(\al,\rho)$-approximate solution for the original instance. 
We follow a similar approach as in~\cite{guha_focs2000}.
Let $\OPT'$ denote the optimum for the modified
instance where we have moved clients, and can open centers only at points in $S$. We claim
that $\OPT'\leq 2\cdot\OPT+2\cdot f\bigl(d(\C,S)\bigr)$. Consider an
optimal solution $\optsoln$, and map each center in $\optsoln$ to its closest point in
$S$. Let $S^*\sse S$ denote the resulting subset of (at most) $k$ centers from $S$. 
Consider any client $\cli\in\C$, which is assigned to center $\optcen\in\optsoln$, and
which was moved to location $a\in S$. %Note that $d(\optcen,S)\leq d(\optcen,a)$.
We can bound the assignment cost of $\cli$ under $S^*$
for the new instance as follows: consider moving $\cli$ from $a$ to $\cli$, then to $\optcen$,
and then moving from $\optcen$ to the point in $S$ closest to $\optcen$. This yields an
assignment cost of at most 
\begin{equation*}
\begin{split}
d(\cli, S)+d(\cli,\optcen)+d(\optcen, S) & \leq d(\cli, S)+d(\cli,\optcen)+d(\optcen, a) \\
& \leq d(\cli, S)+d(\cli,\optcen)+d(\optcen,\cli)+d(\cli, a) = 2\cdot d(\cli, S)+2\cdot d(\cli,\optcen).
\end{split}
\end{equation*}
So the $f$-cost of center-set $S^*$ for the new instance is at most 
$2\cdot\OPT+2\cdot f\bigl(d(\C,S)\bigr)$.

%Now suppose we obtain a $\ld$-approximate solution $\tS\sse S$ for the new
%instance. 
Recall that $\tS\sse S$ is a $\ld$-approximate solution for the new instance.
Then, we can bound the assignment cost of a client under $\tS$ for the original
instance by simply ``moving back'' the client from its location in $S$ to its original
location. This moving-back step incurs an additional $f$-cost of
$f\bigl(d(\C,S)\bigr)$. So the $f$-cost of $\tS$ for the original instance is at most 
\[
\ld\cdot\OPT'+f\bigl(d(\C,S)\bigr)\leq 2\ld\cdot\OPT+(2\ld+1)\cdot f\bigl(d(\C,S)\bigr)
\leq (2\ld+(2\ld+1)\rho)\cdot\OPT. \qedhere
\]
\end{proof}

\begin{lemma} \label{npoly}
Let $\post=\pos_{n,1}$. 
%i.e., $\post$ consists of all powers of $2$ up to $n$ (possibly
%including $n$).  
%
\begin{enumerate}[(a)] %[label=(\alph*), topsep=0.2ex, itemsep=0.1ex, leftmargin=*]
\item We can obtain a set $\T\sse\R_+^{\post}$ with 
$|\T|\leq O(n\log n)$ such that $\T$ contains a vector $\vt$ with 
$\topl\bigl(d(\C,\optsoln)\bigr)\leq \ell t_\ell\leq 2\cdot\topl\bigl(d(\C,\optsoln)\bigr)$
for all $\ell\in\post$.

\item Running Algorithm~\ref{adsample-alg} with threshold vector $\vt$, but changing
step~\ref{blestim} so as to define $B_\ell:=\ell t_\ell$ for all $\ell\in\post$, yields an 
$\bigl(O(1),O(1)\bigr)$-approximate solution, with constant probability. 

\item Running \csalg with threshold vector $\vt$ yields an $O(1)$-approximate solution.
\end{enumerate}
\end{lemma}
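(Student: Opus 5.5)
Part (a) reinterprets $t_\ell$ as an estimate of the average $A_\ell:=\topl(d(\C,\optsoln))/\ell$ rather than of $t^*_\ell$. The first step is to observe that $A_\ell$ is non-increasing in $\ell$ and that $\ell A_\ell$ is non-decreasing. We then guess $A_1=t^*_1$, the largest optimal assignment cost, which is one of the at most $n^2$ pairwise distances. By Gonzalez's $2$-approximation, $t^*_1$ lies in $[D/2,D]$ with $D=\topl[1](d(\C,S^*_1))$. Rounding $t^*_1$ up to a power of $2$ therefore needs only $O(1)$ guesses for the top value. For each $\ell\in\post$ we let $t_\ell$ be the smallest power of $2$ that is at least $A_\ell$, which gives $\ell A_\ell\le \ell t_\ell\le 2\ell A_\ell$.

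Since $A_\ell$ is non-increasing and $\ell A_\ell$ is non-decreasing, going from $\ell$ to $2\ell$ the ratio $A_{2\ell}/A_\ell$ lies in $[1/2,1]$. Hence each consecutive rounded value drops by a factor $2^{j}$ with $j\in\{0,1,2\}$ (rounding adds at most one step). Over $|\post|=O(\log n)$ indices the total drop is at most $\log_2(A_1/A_n)+O(\log n)=O(\log n)$, because $A_n\ge A_1/n$.

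The main obstacle in (a) is getting the count down to $O(n\log n)$ rather than $2^{O(\log n)}=\poly(n)$ or worse. My plan is to encode the vector as a monotone lattice path: $|\post|$ steps, each dropping the exponent by $0$, $1$ or $2$. The normalized cumulative drop $e_\ell:=\log_2(t_1/t_\ell)$ satisfies $\log_2\ell-1\le e_\ell\le\log_2\ell+1$, because $\ell A_\ell\ge A_1$ and $A_\ell\le A_1$ together pin $e_\ell$ within $O(1)$ of the range $[0,\log_2\ell]$. So I would define $g_\ell:=\log_2\ell-e_\ell$, which is a nondecreasing-ish walk confined to a window of width $O(\log n)$. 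I would then try to show that, after fixing which step indices the walk changes at, there are only $O(n)$ admissible walks, e.g.\ by charging to at most $n$ choices. If this count does not come out cleanly, a fallback is to guess the location $\ell$ where $A_\ell$ first falls below $A_1/2$, use a recursion, and accept an $O(n\log n)$ bound coming from $n^2$ distance guesses times an $O(\log n)$ factor. I would verify this counting carefully, since it is the crux of (a).

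For (b), with $B_\ell=\ell t_\ell$ we have $\topl(d(\C,\optsoln))\le B_\ell\le 2\topl(d(\C,\optsoln))$. The plan is to rerun the proof of Theorem~\ref{adsample-thm} with Claim~\ref{toplestim} replaced by this bound. The ingredients needing $t_\ell$ are Claim~\ref{allgood}, Lemma~\ref{choosecore} and the core machinery. By Claim~\ref{csproxy}(a), the proxy with threshold $t_\ell\ge A_\ell$ still upper bounds the $\topl$-cost. The bound $\ell t_\ell+\sum_j(d(j,\optsoln)-t_\ell)^+\le 2\ell A_\ell+\topl(\cdot)=O(\topl)$ holds because at most $\ell$ clients exceed $t_\ell\ge A_\ell\ge t^*_\ell$ would fail, so instead I would use that at most $\ell$ clients exceed $t^*_\ell$ and that $(x-t_\ell)^+\le(x-t^*_\ell)^+$ when $t_\ell\ge t^*_\ell$; here $A_\ell\ge t^*_\ell$ holds since $A_\ell$ is the average of the top $\ell$ entries. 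So every use of ``$t_\ell\ge t^*_\ell$ and $\ell t_\ell=O(\topl)$'' goes through with constants changed. Nestedness (Claim~\ref{corenested}) needs only that $\vt$ is non-increasing, which holds. Theorem~\ref{thm:majorization}(b) with $\dt=1$ then gives an $O(1)$ factor.

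For (c), I would check that \csalg's analysis likewise only uses $\topl$ budgets within constant factors together with thresholds at least $t^*_\ell$, via the same proxy inequality. Invoking Theorem~\ref{thm:majorization}(b) then gives an $O(1)$-approximation.
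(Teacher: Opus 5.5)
\textbf{Parts (b) and (c).} These follow the paper's route. For (b), re-derive the analogue of Claim~\ref{allgood} using $t_\ell\ge\topl(d(\C,\optsoln))/\ell\ge t^*_\ell$ and $\ell t_\ell\le 2\,\topl(d(\C,\optsoln))$, so $\rho\ge 2\beta+\gm$ suffices. For (c), plug $B_\ell=\ell t_\ell$ into \csalg's guarantee of $\topl$-cost $O(\ell t_\ell+B_\ell)$. Both then finish with Theorem~\ref{thm:majorization}(b) at $\dt=1$.

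\textbf{Part (a): the counting step.} This is where the gap is. You correctly flag it as the crux but leave it unresolved, and it stalls because of a wrong intermediate claim. Rounding up to a power of $2$ cannot create a drop of $2^2$. From $A_{2\ell}\ge A_\ell/2$ you get $\log_2 A_{2\ell}\ge\log_2A_\ell-1$. Since $x\mapsto\lceil x\rceil$ is monotone and $\lceil x-1\rceil=\lceil x\rceil-1$, the rounded vector satisfies $t_\ell/2\le t_{2\ell}\le t_\ell$. So every step drops the exponent by exactly $0$ or $1$.

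That is the missing idea. Such a vector is determined by $t_1$ and the subset of $\post\setminus\{1\}$ where a drop occurs, giving at most $2^{|\post|-1}\le n$ patterns. You can therefore take $\T$ to be all power-of-$2$ vectors with $v_1$ in the guessed range and $v_\ell/2\le v_{2\ell}\le v_\ell$.

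With your three-way drops the count is of order $3^{\log_2 n}=n^{\log_2 3}$. Neither the total-drop bound nor a confinement window brings this down to $O(n)$: ternary strings of length $m$ with sum exactly $m$ already number $\Theta(3^m/\sqrt m)$. Your confinement claim $e_\ell\ge\log_2\ell-1$ is also false. If all optimal costs are equal then $e_\ell=0$; the correct range is $0\le e_\ell\le\log_2\ell$. Finally, your fallback of $n^2$ guesses times $O(\log n)$ gives $O(n^2\log n)$, not $O(n\log n)$.

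\textbf{Part (a): the range for $t_1$.} This is also wrong. $t^*_1$ is the largest cost of the $f$-optimal solution, not of a $k$-center optimum. Gonzalez therefore gives only the lower bound $t^*_1\ge D/2$. The upper bound is $t^*_1\le\opt\le f(d(\C,S^*_1))\le nD$, using $f(1,0,\ldots,0)=1$.

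The ratio $t^*_1/D$ can genuinely be $\Omega(n)$. Take $k$-median with $k=2$: two groups of $m$ co-located points at distance $M$ from each other, plus a singleton at distance $L$ from both, with $M<L<mM$. The optimum opens the two groups and pays $t^*_1=L$. Opening one group and the singleton gives radius $M$, so $D\le 2M$ and $t^*_1/D\ge L/(2M)$.

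So you need $O(\log n)$ guesses for $t_1$, not $O(1)$. This is exactly the source of the $\log n$ factor in $|\T|=O(n\log n)$.
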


\begin{proof}
%We prove part (a) here, and defer the proofs of parts (b) and (c) to
%Appendix~\ref{append-trueapx}. 
Define $\htt_\ell={\topl\bigl(d(\C,\optsoln)\bigr)}/{\ell}$ for $\ell\in\post$. The
benefit of considering the target vector $\htt$ (as opposed to the vector
$\bigl\{d(\C,\optsoln)^{\down}_\ell\bigr\}_{\ell\in\post}$), is that $\htt$ is not
only a non-increasing vector, but consecutive coordinates decrease by a factor of at most
$2$: for $\ell\in\post, \ell>1$, we have
$\frac{\htt_{\ell/2}}{2}\leq\htt_\ell\leq \htt_{\ell/2}$. This is helpful because then the
nearest powers-of-$2$ vector $\vt$ that coordinate-wise overestimates $\htt$ can be
described by specifying $t_1$ and the subset of $\post$ where the coordinate values drop
by a factor of $2$.

As in the proof of Lemma~\ref{polyset}, we can estimate $\htt_1$ within an $O(n)$-factor as
follows. %a power of $2$. More precisely, 
Let $S^*_1$ be the output of the $2$-approximation
algorithm for $k$-center due to~\cite{Gonzalez}, and let $\ub$ be the smallest power of
$2$ that is at least $n\cdot d(\C, S^*_1)^{\down}_1$. Then, the proof of
Lemma~\ref{polyset} shows that $\frac{\ub}{4n}\leq\htt_1\leq\ub$. Define 
\begin{equation*}
\begin{split}
\T=\Bigl\{v\in\R_+^{\post}:\ \ & v_\ell\text{ is a power of $2$}\ \ \ \forall \ell\in\post; \\
& v_1\in\bigl[\tfrac{\ub}{4n},\ub\bigr], \quad
\tfrac{v_{\ell/2}}{2}\leq v_\ell\leq v_{\ell/2}\ \ \ \forall \ell\in\post,\ \ell>1\Bigr\}.
\end{split}
\end{equation*}
We show that $\T$ satisfies the properties stated in part (a). 
Let $\vt\in\R_+^{\post}$ be such that $t_\ell$ is the smallest power of $2$ that is at least
$\htt_\ell$, for all $\ell\in\post$. We have $t_1\leq\ub$ and 
$\htt_\ell\leq t_\ell<2\htt_\ell$ for all $\ell\in\post$ by design, which also implies that 
$\frac{t_{\ell/2}}{2}\leq t_\ell\leq t_{\ell/2}$ for all $\ell\in\post-\{1\}$; so
$\vt\in\T$. 
Any vector $v\in\T$ is {\em uniquely} determined by $v_1$ and the subset
$S\sse\post-\{1\}$ for which $v_\ell=\frac{v_{\ell/2}}{2}$. %This mapping is one-to-one. 
There are at most $n$ subsets of $\post-\{1\}$, as $|\post|=O(\log n)$, and
$O(\log n)$ choices for $v_1$, so we have $|\T|=O(n\log n)$. %This proves part (a).

\medskip
For part (b), let $\al=2$, $\beta=\gm=3$, $\tau=\rho=35$ as in
Section~\ref{adsample-topl}. We only need to prove the analogue of Claim~\ref{allgood}
showing that if all clusters are $\ell$-good (when considering the input vector $\vt$ from
part (a) in Algorithm~\ref{adsample-alg}), for some index $\ell\in\post$, then the
$\topl$-cost of the solution is at most $\rho\cdot\topl\bigl(d(\C,\optsoln)\bigr)$,
and so $\ell\notin I$. Given this, the proof of Theorem 3.7 shows that at the end of 
$N=\ceil{\tau(k+\sqrt{k})}$ iterations, we have $I=\es$ with constant probability. 
This implies that
$\topl\bigl(d(\C,S)\bigr)\leq\rho(1+\ve)\cdot\topl\bigl(d(\C,\optsoln)\bigr)$ for all
$\ell\in\post$ and hence, $f\bigl(d(\C,S)\bigr)\leq 2\rho(1+\ve)\OPT$.

To prove the analogue of Claim~\ref{allgood}, suppose all clusters are $\ell$-good for
some $\ell\in\post$.  
%Recall that we have 
Note that $t_\ell\geq{\topl\bigl(d(\C,\optsoln)\bigr)}/{\ell}$ implies that 
$t_\ell\geq t^*_\ell:=d(\C,\optsoln)^{\down}_\ell$.
Similar to the proof of Claim~\ref{allgood}, we have 
\begin{equation*}
\begin{split}
\topl\bigl(d(\C,S)\bigr) & \leq\ell\cdot\beta t_\ell+
\sum_{q=1}^k\sum_{\cli\in \clust}(d(\cli,S)-\beta t_\ell)^+ \\
& \leq 2\beta\cdot\topl\bigl(d(\C,S)\bigr)+
\sum_{q=1}^k\gm\cdot\sum_{\cli\in\clust}(d(\cli,\optcen_q)-t^*_\ell)^+
\leq (2\beta+\gm)\cdot\topl\bigl(d(\C,\optsoln)\bigr).
%\leq\rho\cdot\topl\bigl(d(\C,\optsoln)\bigr).
\end{split}
\end{equation*}
Since $\rho\geq 2\beta+\gm$, we have
$\topl\bigl(d(\C,S)\bigr)\leq\rho\cdot\topl\bigl(d(\C,\optsoln)\bigr)$. 

\medskip
For part (c), we first note that \csalg has the following guarantee: 
given a threshold vector $\vt\in\Rp^{\post}$ and budgets $\{B_\ell\}_{\ell\in\post}$ 
such that $t_\ell\geq d\bigl(\C,\optsoln)^{\down}_\ell$ 
%for all $\ell\in\post$, such that
and $\topl\bigl(d(\C,\optsoln)\bigr)\leq B_\ell$ for all $\ell\in\post$,
\csalg returns a $k$-clustering solution whose $\topl$-cost is bounded by 
$O(\ell t_\ell+B_\ell)$ for all $\ell\in\post$.%
\footnote{See the proof of Theorem 9.2 in~\cite{ChakrabartyS19}.}
%solves the augmented $k$-median LP and rounds its optimal solution. 
%As mentioned earlier, this LP has constraints that bound the $\topl$-proxy cost under the
%vector $\vt$, for all $\ell\in\post$. 
Given the bounds on $t_\ell$, it is valid to set the $\topl$ budgets to $\ell t_\ell$ for
all $\ell\in\post$.
%enforce that the $\topl$-proxy cost is at
%most $\ell t_\ell$, for all $\ell\in\post$. The LP-rounding algorithm in 
Then \csalg returns a solution whose $\topl$-cost is 
$O(\ell t_\ell)=O\bigl(\topl\bigl(d(\C,\optsoln)\bigr)\bigr)$ for all
$\ell\in\post$, 
%then returns an integer solution whose $\topl$-cost is bounded by 
%$O\bigl(\ell t_\ell+\topl\text{-proxy cost of LP solution}\bigr)$%
%for all $\ell\in\post$. Thus, the resulting solution 
which is therefore an $O(1)$-approximate solution.
\end{proof}

\bibliography{references_new}

\end{document}